\renewcommand\footnotetextcopyrightpermission[1]{} 
\newcommand{\DelayClin}{{\sf DelayC_{lin}}}
\newcommand{\CDlin}{{\sf CD\!\circ\! Lin}}
\newcommand{\linpartial}{{\sf Lin_{Partial}}}
\newcommand{\encode}[1]{||#1||}
\newcommand{\qDecide}[1]{\text{\sc Decide}\langle#1\rangle}
\newcommand{\pEnum}[1]{\text{\sc Enum}\langle#1\rangle}
\newcommand{\idom}{{\it dom}}
\newcommand{\arity}{\operatorname{arity}}
\newcommand{\var}{{\it var}}
\newcommand{\ivar}{{\it var}}
\newcommand{\free}{\operatorname{free}}
\newcommand{\atoms}{\operatorname{atoms}}
\newcommand{\full}{\operatorname{full}}
\newcommand{\sigdiff}{\sigma_\neq}
\newcommand{\taudiff}{\tau_\neq}
\newcommand{\matmul}{\textsc{mat-mul}}
\newcommand{\hyperclique}{\textsc{hyperclique}}
\newcommand{\fourclique}{4\textsc{-clique}}
\newcommand{\CDY}{CDY}
\newcommand{\calA}{{\mathcal A}}
\newcommand{\calH}{{\mathcal H}}
\newcommand{\calO}{{\mathcal O}}
\newcommand{\calR}{{\mathcal R}}
\newcommand{\calS}{{\mathcal S}}
\newcommand{\calU}{{\mathcal U}}
\theoremstyle{plain}
\newtheorem*{claim*}{Claim}
\newtheorem{claim}{Claim}
\newtheorem{remark}{Remark}
\theoremstyle{definition}
\theoremstyle{plain}
\begin{document}

\title{On the Enumeration Complexity of Unions of Conjunctive Queries}

\author{Nofar Carmeli}
\orcid{0003-0673-5510}
\affiliation{
	\institution{Technion - Israel Institute of Technology}
	\city{Haifa}
	\state{Israel}
}
\email{snofca@cs.technion.ac.il}

\author{Markus Kr\"oll}
\affiliation{
	\institution{TU Wien}
	\city{Vienna}
	\state{Austria}
}
\email{kroell@dbai.tuwien.ac.at}

\begin{abstract}
We study the enumeration complexity of Unions of Conjunctive Queries (UCQs).
We aim to identify the UCQs that are tractable in the sense that the answer tuples
can be enumerated with a linear preprocessing phase and a constant delay between every successive tuples. It has been established that, in the absence of self-joins and under conventional complexity assumptions, the CQs that admit such an evaluation are precisely the free-connex ones.
A union of tractable CQs is always tractable. We generalize  the notion of free-connexity from CQs to UCQs,
thus showing that some unions containing intractable CQs are, in fact, tractable. Interestingly, some unions consisting of only intractable CQs are tractable too.
We show how to use the techniques presented in this article also in settings where the database contains cardinality dependencies (including functional dependencies and key constraints) or when the UCQs contain disequalities. 
The question of finding a full characterization of the tractability of UCQs remains open.
Nevertheless, we 
prove that for several classes of queries, free-connexity fully captures the tractable UCQs.
\end{abstract}

\maketitle
\thispagestyle{empty}
\pagestyle{empty}


\section{Introduction}

Evaluating a query $Q$ over a database instance $D$ is a fundamental and well-studied
problem in database management systems. Most of the complexity results dealing
with query evaluation so far aim to solve a decision or counting variant of it.
Starting with Durand and Grandjean in 2007~\cite{Durand:constantdelay},
there has been a renewed interest in examining the enumeration problem of all answers to a query, focusing on fine-grained complexity~\cite{DBLP:conf/pods/SchweikardtSV18, DBLP:conf/pods/NiewerthS18, DBLP:conf/pods/FlorenzanoRUVV18, DBLP:conf/icdt/CarmeliK18}.
When evaluating a non-Boolean query over a database, the number of results may be larger than the size of the database itself. Enumeration complexity offers specific measures for the hardness of such problems.
In terms of data complexity, the best time guarantee we can hope for is to output all answers with
a constant delay between consecutive answers. 
In the case of query evaluation, this enumeration phase comes after a linear preprocessing phase required to read the database and decide the existence of a first answer.
The enumeration class achieving these time bounds is denoted by $\DelayClin$.
Hereafter, we refer to queries in $\DelayClin$ as tractable, and queries outside of this class as intractable.

Results by Bagan et al.~\cite{bdg:dichotomy} and Brault-Baron~\cite{bb:thesis} form a dichotomy that fully classifies which self-join-free Conjunctive Queries (CQs)
are in the class $\DelayClin$ based on the structure of the query: the class of \emph{free-connex} queries is exactly the class that admits tractable enumeration.
In the years following
this dichotomy, much work has been conducted to achieve similar results for other classes of
queries~\cite{Segoufin:survey}. 
Unions of CQs (UCQs) are a natural extension of CQs, as they describe the union of the answers to several CQs.
UCQs form an important class of queries, as it captures the positive fragment of relational algebra.
Previous work which implies results on the enumeration complexity of UCQs imposes strong restrictions on
the underlying database~\cite{segoufin_et_al:LIPIcs:2017:7060, DBLP:conf/pods/SchweikardtSV18}.
We aim to understand the enumeration complexity of UCQs without such
restrictions and based solely on their structure.

Using known methods~\cite{strozecki:thesis}, it can be shown that a union of tractable problems is again tractable.
However, what happens if some CQs of a union are tractable while others are not?
Intuitively, one might be tempted to expect a union of enumeration problems to be harder
than a single problem
within the union, making such a UCQ intractable as well.
As we will show, this is not necessarily the case.
\begin{example}
Let $Q=Q_1\cup Q_2$ with 
\begin{align*}
Q_1(x,y)&\leftarrow R_1(x,y),R_2(y,z),R_3(z,x)\text{ and }\\ 	
Q_2(x,y)&\leftarrow R_1(x,y),R_2(y,z).
\end{align*}
Even though $Q_1$ is hard while $Q_2$ is easy, a closer look shows that $Q_2$ contains $Q_1$. This means that $Q_1$ is redundant, and the entire union is equivalent to the easy $Q_2$.
\end{example}
To avoid cases like these, where the UCQ can be translated to a simpler one, it makes sense to consider non-redundant unions.
It was claimed that in all cases of a non-redundant union containing an intractable CQ, the UCQ is intractable too~\cite{DBLP:conf/icdt/BerkholzKS18}. The following is a counter example which refutes this claim.

\begin{example}\label{example:first}
Let $Q=Q_1\cup Q_2$ with 
\begin{align*}
Q_1(x,y,w)&\leftarrow R_1(x,z),R_2(z,y),R_3(y,w)\text{ and }\\ 	
Q_2(x,y,w)&\leftarrow R_1(x,y),R_2(y,w).
\end{align*}
According to the dichotomy of Bagan et al.~\cite{bdg:dichotomy}, the enumeration problem
for $Q_2$ is in $\DelayClin$, while 
$Q_1$ is intractable.
Yet, it turns out that $Q$ is in fact in $\DelayClin$. 
The reason is that, since $Q_1$ and $Q_2$ are evaluated over the same database $I$, we can use $Q_2(I)$ to find $Q_1(I)$.
We can compute $Q_2(I)$ efficiently, and try to extend every such solution to solutions of $Q_1$ with a constant delay:
for every new combination $(a,c)$ of an output $(a,b,c)\in Q_2(I)$,
we find all $d$ values with $(b,d)\in R_3^I$ and then output the solution $(a,c,d)\in Q_1(I)$.
The new combinations $(a,c)$ can be detected by storing a lookup table containing $(a,c)$ for every answer $(a,b,c)\in Q_2(I)$ generated. This table can be accessed in constant time in the RAM model.
Avoiding duplicates in case $Q_1$ and $Q_2$ share answers requires also storing the previously seen answers, and it is discussed in more details in Section~\ref{sec:positive}.
Intuitively, the source of intractability for $Q_1$ is the join of $R_1$ with $R_2$ as we need to avoid duplicates that originate in different $z$ values. The union is tractable since $Q_2$ returns exactly this join.
\qed
\end{example}

As the example illustrates, to compute the answers to a UCQ in an efficient way, 
it is not enough to view it as a union of isolated
instances of CQ enumeration. In fact, this task requires an
understanding of the interaction between several queries.
Example~\ref{example:first} shows that the presence of an easy query within
the union may give us enough time to compute auxiliary data structures, which we can then add to the
hard queries in order to enumerate their answers as well.
In Example~\ref{example:first}, we can assume we have a ternary relation holding the result of $Q_2$. Then, adding the auxiliary atom $R_{Q_2}(x,z,y)$ to $Q_1$ results in a tractable structure.
We generalize this observation and introduce the concept of \emph{union-extended} queries.
We then use union extensions as a central tool for evaluating the enumeration complexity of UCQs, as the structure
of such queries has implications on the tractability of the UCQ.

Interestingly, this approach can be taken a step further: We show that the concept of
extending the union by auxiliary atoms can even be used to efficiently enumerate the answers
of UCQs that \emph{only} contain hard queries. 
By lifting the concept of free-connex queries from CQs to UCQs via union-extensions,
we show that free-connex UCQs are always tractable.
This gives us a sufficient global condition for membership in $\DelayClin$ beyond any classification of individual CQs.

Finding a full characterization of the tractability of UCQs with respect to $\DelayClin$ remains an open problem.
Nevertheless, we 
prove that for several classes of queries, free-connexity fully captures the tractable UCQs. 
A non-free-connex union of two CQs is intractable in the following cases:
both CQs are intractable, or they both represent the same CQ up to a different projection.
The hardness results presented here use problems with well-established assumptions on the lower bounds, such as Boolean matrix-multiplication~\citep{MatrixMultiplication} or finding a clique or a hyperclique in a graph~\cite{DBLP:conf/soda/LincolnWW18}.

Why is establishing lower bounds on UCQ evaluation, even when it contains only two CQs, a fundamentally more challenging problem than the one for CQs?
In the case of CQs, 
hardness results are often shown by
reducing a computationally hard problem to the task of answering a query. The reduction encodes the hard problem to the relations of a self-join-free CQ, such
that the answers of the CQ correspond to an answer of this
problem~\cite{bdg:dichotomy, bb:thesis, DBLP:conf/pods/BerkholzKS17, DBLP:conf/icdt/CarmeliK18}.
However, using such an encoding for CQs within a union does not always work. Similarly to the case of CQs with
self-joins, relational symbols that appear multiple times within a query can interfere with the reduction.
Indeed, when encoding a hard problem to an intractable CQ within a union, a different CQ in the union evaluates over the same relations, and may also produce answers.
A large number of such supplementary answers, with constant delay per answer, accumulates to a long delay until we obtain the answers that correspond to the computationally hard problem.
If this delay is larger than the lower bound we assume for the hard problem, we cannot conclude that the UCQ is intractable.

The lower bounds presented in this article are obtained either by identifying classes of UCQs for which we can use similar reductions to the ones used for CQs,
or by introducing alternative reductions.
As some cases remain unclassified, we devote a section to inspecting such UCQs, and describing the challenges that will need to be resolved in order to achieve a full classification.

Most of the results in this work refer to the classification of the hardness of UCQs over general database instances with respect to the time it takes to answer them with no space restrictions.
Nonetheless, we also discuss the implications of these results when the settings are slightly modified.
The first modification we consider is when the database instances are not general, as they are restricted by the schema.
\emph{Cardinality dependencies} are commonly given as part of the database schema to capture the fact that, since we know what our database represents, we know of restrictions on the combinations of values that the relations may hold.
Functional Dependencies and key constraints are a special case of cardinality dependencies.
In the presence of cardinality dependencies, additional CQs are tractable~\cite{DBLP:conf/icdt/CarmeliK18}.
We show that by combining the approach introduced in this article with the one for identifying tractable CQs in the presence of cardinality dependencies, we can also find additional tractable UCQs over such schemas.

We also inspect UCQs with disequalities. In the case of CQs with disequalities, the disequalities have no effect on the enumeration complexity~\cite{bdg:dichotomy}. That is, one can simply ignore the disequalities, and the remaining CQ is free-connex if and only if the query is tractable (under the same assumptions used for the hardness of CQs). This irrelevance of disequalities to the enumeration complexity holds also in the case that there are cardinality dependencies in the schema~\cite{DBLP:conf/icdt/CarmeliK18}.
A natural question is: does this happen also with UCQs? We show that the answer is no: a tractable UCQ may become intractable when adding disequalities. We then show how to identify easy UCQs with disequalities using the same techniques we introduce for UCQs without them.

The final modification we consider is that of restricted space.
We introduce the time complexity measure of \emph{linear partial time}, which is more relaxed compared to linear preprocessing and constant delay, and discuss the connection between the two measures. We argue that from a theoretical point of view, when given access to enough space, the two time  complexity measures are equivalent, and that from a practical point of view, an algorithm with only the relaxed requirements is sometimes preferable.
We discuss the space consumption used in our approach, and demonstrate that in some cases it can be reduced to constant additional space.
The question of when exactly this can be done remains open, but we claim in favor of using the measure of linear partial time as an intermediate step for future research.

This article is organized as follows: In Section~\ref{sec:Preliminaries} we provide definitions and state results that we use. Section~\ref{sec:positive} formalizes how CQs within a union can make each other easier, defines free-connex UCQs, and proves that free-connex UCQs are in $\DelayClin$. In Section~\ref{sec:lower-bounds} we prove conditional lower bounds and conclude a dichotomy for some classes of UCQs. Section~\ref{sec:dichotomy} discusses the future steps required for a full classification, and demonstrates 
examples of queries of unknown complexity.
Section~\ref{sec:extensions} inspects the implications of this work with modified settings: in the presence of cardinality dependencies, with disequalities and with restricted space.
Concluding remarks are given in Section~\ref{sec:conclusions}.

A short version of this article appeared in the 38th ACM SIGMOD-SIGACT-SIGAI Symposium on Principles of Database Systems (PODS19)~\cite{DBLP:conf/pods/CarmeliK19}. This article has the following additions. First, it closes some gaps by
presenting full proofs to the results stated in the conference version. Second, it includes
Section~\ref{sec:extensions}, which discusses the implications of this work in additional settings.

\section{Preliminaries} \label{sec:Preliminaries}

In this section we provide preliminary definitions as well as state results that we will use throughout this article.

\subsubsection*{Unions of Conjunctive Queries} 
A {\em schema} $\calS$ is a set of {\em relational symbols} $\{R_1,\ldots,R_n\}$.
We denote the {\em arity} of a relational symbol $R_i$
as $\arity(R_i)$. 
Let $\idom$ be a finite set of constants. A database $I$ over schema $\calS$ is called an 
{\em instance} of $\calS$, and it consists of a finite relation $R_i^I\subseteq\idom^{\arity(R_i)}$ for every relational symbol $R_i\in\calR$.

Let $\var$ be a set of variables disjoint from $\idom$.
A {\em Conjunctive Query} (CQ) over schema $\calS$ is an expression of the form 
$Q(\vec{p}) \leftarrow R_1(\vec{v}_1), \dots, R_m(\vec{v}_m)$,
where $R_1,\ldots,R_m$ are relational symbols of $\calS$, the tuples $\vec{p}, \vec{v}_1,\ldots, \vec{v}_m$ hold variables, and every variable in $\vec{p}$ appears in at least one of $\vec{v}_1,\ldots, \vec{v}_m$. 
We often denote this query as $Q$. Define the variables of $Q$ as
$\var(Q)=\bigcup_{i=1}^{m} \vec{v}_i$,
and define the {\em free variables}
of $Q$ as $\free(Q)=\vec{p}$.
The variables of $Q$ that are not free are called \emph{existential}.
We call $Q(\vec{p})$ the head of $Q$, and the atomic formulas $R_i(\vec{v}_i)$ are called {\em atoms}.
We further use $\atoms(Q)$ to denote the set of atoms of Q.
We denote by $\full(Q)$ the full version of a CQ $Q$ obtained by adding all variables to the head of $Q$.
A CQ is said to be {\em self-join-free} if no relational symbol appears in more than one atom.

The {\em evaluation} $Q(I)$ of a CQ $Q$ over a database $I$ is 
the set of all \emph{mappings} $\mu|_{\free(Q)}$ such that $\mu$ is a homomorphism from 
$R_1(\vec{v}_1), \dots , R_m(\vec{v}_m)$ into $I$, and $\mu|_{\free(Q)}$ is the restriction (or projection) of $\mu$ to the variables $\free(Q)$.
We refer to such mappings as \emph{answers} to $Q$ over $I$. To ease notation, we sometimes specify an answer as a tuple $\vec{u}$, and mean that the answer is the mapping that maps $\free(Q)$ to $\vec{v}$.
A CQ $Q_1$ is \emph{contained} in a CQ $Q_2$, denoted $Q_1\subseteq Q_2$, if for every instance $I$, $Q_1(I)\subseteq Q_2(I)$.
A \emph{homomorphism} from $Q_2$ to $Q_1$ is a mapping $h:\ivar(Q_2)\rightarrow\ivar(Q_1)$ such that:
(1) for every atom $R(\vec{v})$ of $Q_2$, $R(h(\vec{v}))$ is an atom in $Q_1$; (2) $h(\free(Q_2))=\free(Q_1)$.

A \emph{Union of Conjunctive Queries (UCQ)} $Q$ is a finite set of CQs, denoted $Q=\bigcup_{i=1}^{\ell} Q_i$, where $\free(Q_{i_1})=\free(Q_{i_2})$ for all $1\leq {i_1},{i_2}\leq \ell$. Semantically, $Q(I)=\bigcup_{i=1}^{\ell} Q_i(I)$.
We say that $Q$ is non-redundant if there are no two different CQs $Q_1$ and $Q_2$ in $Q$ such that there is a homomorphism from $Q_2$ to $Q_1$.
We assume that UCQs are non-redundant; otherwise, an equivalent non-redundant UCQ can be obtained by removing the redundant CQs.
Given a UCQ $Q$ and a database instance $I$, we denote by $\qDecide{Q}$ 
the problem of deciding whether $Q(I)\neq\emptyset$.

\subsubsection*{Hypergraphs}

A {\em hypergraph} $\calH=(V,E)$ is a set $V$ of {\em vertices} and a set $E$ of non-empty subsets of $V$ called {\em hyperedges} (sometimes {\em edges}).
A \emph{join tree} of a hypergraph $\calH=(V,E)$ is a tree $T$ where the nodes are the hyperedges of $\calH$, and the {\em running intersection} property holds, namely: for all $u \in V$ the set $\{e \in E \mid u \in e\}$ forms a connected subtree in $T$. 
A hypergraph $\calH$ is {\em acyclic} if there exists a join tree for $\calH$.

Two vertices in a hypergraph are {\em neighbors} if they appear in a common edge.
A {\em cycle} in a hypergraph is a sequence of vertices that starts and ends in the same vertex and every two consecutive vertices are neighbors.
A hypergraph {\em is chordal} if every cycle with at least four distinct vertices has an edge connecting two nonconsecutive vertices of the cycle.
Every acyclic graph is chordal~\cite{beeri1983desirability}.

A {\em clique} of a hypergraph is a set of vertices, which are pairwise neighbors in $\calH$.
If every edge in $\calH$ has exactly $k$ vertices, we call $\calH$ $k$-{\emph uniform}.
An $l$-{\em hyperclique} in a $k$-uniform hypergraph $\calH$ is a set $V'$ of $l>k$ vertices,
such that every subset of $V'$ of size $k$ forms a hyperedge.

A hypergraph $\calH'$ is an \emph{inclusive extension} of $\calH$
if every edge of $\calH$ appears in $\calH'$, and every edge of $\calH'$ is a subset of some edge in $\calH$.
A tree $T$ is an \emph{ext-$S$-connex tree} for a hypergraph $\calH$ if:
(1) $T$ is a join-tree of an inclusive extension of $\calH$, and
(2) there is a subtree $T'$ of $T$ that contains exactly the variables $S$
~\cite{bdg:dichotomy} (see Figure~\ref{figure:ext-connex}).

\begin{figure}
\begin{tikzpicture}
[
    he/.style={draw, rounded corners,inner sep=0pt},        
    ce/.style={draw,dashed, rounded corners=2pt}, 
]

\node (z) at (0,0) {$z$};
\node (w) at (1,0) {$w$};
\node (x) at (0,1) {$x$};
\node (y) at (1,1) {$y$};
\node (v) at (2,0) {$v$};

\node[he, fit = (x) (y)] {};
\node [he, fit = (w) (v)] {};

\node [fit = (z) (y) (w)] (zyw) {};										
\draw [rounded corners = 3pt] 
								($(z.south west) + (-0.05,-0.05)$) -- ($(z.north west) + (-0.05,-0.05)$) --
								 ($(y.north west) + (0.05,0.05)$) -- ($(y.north east) + (0.05,0.05)$) --
								  ($(w.south east) + (0.05,-0.05)$) -- cycle;
													 

\node (wyz) at (4,0) {$w, y, z$};
\node[he, fit = (wyz)] {};

\node (yz) at (4, 0.7) {$y, z$};
\node[he, fit = (yz)] {};					

\node (xy) at (4, 1.4) {$x,y$};
\node[he, fit = (xy)] {};			

\node (vw) at (5.3, 0) {$v, w$};
\node[he, fit = (vw)] {};			
								 
\node[ce, fit = (xy) (yz)] {};								 
								 
\draw (xy) -- (yz);								 
\draw (yz) -- (wyz);										 
\draw (wyz) -- (vw);		


\node(H) at (-1, 0.7) {$\calH:$};
\node (T) at (3,0.7) {$T:$};
\end{tikzpicture}
\caption{$T$ is an ext-$\{x,y,z\}$-connex tree for hypergraph $\calH$.}
\label{figure:ext-connex}
\end{figure}

\subsubsection*{Classes of CQs}
We associate a hypergraph $\calH(Q)=(V,E)$ to a CQ $Q$ where the vertices are the variables of $Q$, and every hyperedge is a set of variables occurring in a single atom of $Q$. That is, $E=\{\{v_1,\ldots,v_n\} \mid R_i(v_1,\ldots,v_n)\in\atoms(Q)\}$. With a slight abuse of notation, we identify atoms of $Q$ with edges of $\calH(Q)$.
A CQ $Q$ is said to be {\em acyclic} if $\calH(Q)$ is acyclic

A CQ $Q$ is {\em $S$-connex} if $\calH(Q)$ has an ext-$S$-connex tree, and it is \emph{free-connex} if it has an ext-$\free(Q)$-connex tree~\cite{bdg:dichotomy}. Equivalently, $Q$ is free-connex if both $Q$ and $(V, E\cup\{\free(Q)\})$ are acyclic~\cite{bb:thesis}.
A {\em free-path} in a CQ $Q$ is a sequence of variables $(x,z_1,\ldots,z_k,y)$ with $k\geq 1$, such that:
(1) $\{x,y\}\subseteq \free(Q)$
(2) $\{z_1,\ldots,z_k\}\subseteq V\setminus\free(Q)$
(3) It is a {\em chordless path} in $\calH(Q)$: that is, every two succeeding variables are neighbors in $\calH(Q)$, but no two non-succeeding variables are neighbors.
An acyclic CQ has a free-path iff it is not free-connex~\cite{bdg:dichotomy}.

\subsubsection*{Computational Model}

In this article, we adopt the \emph{Random Access Machine} (RAM) model with uniform-cost measure and word length $\Theta(\log(n))$ on input of size $n$. Operations such as addition of the values of two registers or concatenation can be performed in constant time. In contrast to the Turing model of computation, the RAM model with uniform-cost measure can retrieve the content of any register via its unique address in constant time.

The input to our problems is a database instance $I$. We denote by $\encode{o}$ the size of an object $o$ (i.e., the number of integers required to store it), whereas $|o|$ is its cardinality. Let $I$ be a database over a schema $\calS=(\calR,\Delta)$.
We assume the input database is given by the reasonable encoding suggested by Flum et al.~\cite{DBLP:journals/jacm/FlumFG02}. Thus, the input is of size $\encode{I}= 1 + |\idom| + |\calR| + \sum_{R\in\calR}{\arity(R)|R^{I}|}$ over integers bounded by $\max\{|\idom|,max_{R\in\calR}{|R^I|}\}$.

As we assume that the word length is $\Theta(\log(\encode{I}))$, the values stored in registers are at most $\encode{I}^c$ for some fixed integer $c$. As a consequence, and since we assume that the values may correspond to addresses, the amount of available memory is polynomial in $\encode{I}$.
Note that even in more restrictive variants of the RAM model, the size of the registers and the available memory depend on the size of the input~\cite{DBLP:journals/amai/Grandjean96}. In particular, since we assume that the input fits in memory and each address fits in a register, the registers must contain at least $\log(n)$ bits for an input that requires $n$ registers.

The RAM model enables the construction of large lookup tables that can be queried within constant time. If the available memory is as large as the number of different possible keys, the lookup table can be as simple as initializing the memory to False, and setting the addresses matching the keys to be True.
Note that the time for initialization is not an issue due to constant time initialization techniques~\cite{MoretShapiro}.
Solutions such as binary search trees and hash tables can reduce the memory consumption significantly.
For example, a lookup table based on a binary search tree only requires linear memory in its content at the cost of a logarithmic factor to the time.
In this work, we wish to identify what can and cannot be done within certain time bounds without restrictions on the memory consumption. Thus, we assume in our analysis that the available memory is very large and that we can access lookup tables of polynomial size in constant time. 

\subsubsection*{Enumeration Complexity}
Given a finite alphabet $\Sigma$ and binary relation $R\subseteq\Sigma^*\times\Sigma^*$, the {\em enumeration problem} $\pEnum{R}$ is: given an instance $x\in\Sigma^*$, output all $y\in\Sigma^*$ such that $(x,y)\in R$.
Such $y$ values are often called \emph{solutions} or \emph{answers} to $\pEnum{R}$.
An \emph{enumeration algorithm} $\calA$ for $\pEnum{R}$ is a RAM that solves $\pEnum{R}$ without repetitions.
We say that $\calA$ enumerates $\pEnum{R}$ with {\em delay} $d(|x|)$ if the time before the first output, the time between any two consecutive outputs, and the time between the last output and termination are each bounded by $d(|x|)$.
Sometimes we wish to relax the requirements of the delay before the first answer, and specify a {\em preprocessing} time $p(|x|)$. In this case, the time before the first output is only required to be bounded by $p(|x|)$.
The enumeration class $\DelayClin$ is defined as the class of all enumeration problems $\pEnum{R}$
which have an enumeration algorithm $\calA$ with preprocessing $p(|x|)\in O(|x|)$ and delay $d(|x|)\in O(1)$.
Note that we do not impose a restriction on the memory used. In particular, such an algorithm may use additional constant memory for writing between two consecutive answers.

Let $\pEnum{R_1}$ and $\pEnum{R_2}$ be enumeration problems.
There is an {\em exact reduction} from $\pEnum{R_1}$ to $\pEnum{R_2}$, denoted as $\pEnum{R_1}\leq_e\pEnum{R_2}$, if there exist mappings $\sigma$ and $\tau$ such that:
(1) for every $x\in\Sigma^*$, $\sigma(x)$ is computable in $O(|x|)$ time;
(2) for every $y$ such that $(\sigma(x),y)\in R_2$, $\tau(y)$ is computable in $O(1)$ time; and
(3) in multiset notation, $\{\tau(y)\mid (\sigma(x),y)\in R_2\} =
	\{y'\mid (x,y')\in R_1\}$.
Intuitively, $\sigma$ maps instances of $\pEnum{R_1}$ to instances of $\pEnum{R_2}$,
and $\tau$ maps solutions of $\pEnum{R_2}$ to solutions of $\pEnum{R_1}$.
If $\pEnum{R_1}\leq_e\pEnum{R_2}$ and $\pEnum{R_2}\in\DelayClin$, then $\pEnum{R_1}\in\DelayClin$ as well~\cite{bdg:dichotomy}.

\subsubsection*{Computational Hypotheses}
We use the following well-known hypotheses for lower bounds on certain computational problems:
\begin{description}
 \item[\matmul:] two Boolean $n\times n$ matrices cannot be multiplied in time $O(n^2)$.
 This problem is equivalent to the evaluation of the query
$\Pi(x,y)\leftarrow A(x,z),B(z,y)$ over the schema $\{A,B\}$ where $A,B\subseteq\{1,\ldots,n\}^2$.
It is conjectured that this problem cannot be solved in $O(n^2)$ time, 
and the best algorithms today require $O(n^\omega)$ time, where $2.37< \omega< 2.38$ is the matrix multiplication exponent~\cite{MatMulLimits}.
\\
Remark: \matmul{} is the assumption originally used by Bagan et al.~\cite{bdg:dichotomy} to show the hardness for CQs. It was later suggested to use a sparse variant of this hypothesis to show the same results~\cite{berkholz2020tutorial}. Here, we use the original variant, as the sparse variant is not enough to handle cases like Lemma~\ref{lemma:guardedTwoQueries}.
 \item[$\hyperclique$:] for all $k\geq 3$, finding a $k$-hyperclique in a $(k-1)$-uniform hypergraph with $n$ vertices is not possible in time $O(n^{k-1})$. When $k=3$, this is the assumption that we cannot detect a triangle in a graph in $O(n^2)$ time~\cite{cycles}. When $k>3$, this is a special case of the $(\ell,k)-$ Hyperclique Hypothesis~\cite{DBLP:conf/soda/LincolnWW18}, which states that, in a $k$-uniform hypergraph of $n$ vertices, $n^{k-o(1)}$ time is required to find a set of $\ell$ vertices such that each of it subsets of size $k$ forms a hyperedge.
 The $\hyperclique$ hypothesis is sometimes called Tetra$\langle k \rangle$~\cite{bb:thesis}.
 \item[\fourclique:] it is not possible to determine the existence of a $4$-clique in a graph with $n$ nodes in time $O(n^3)$.
 This is a special case of the $k$-Clique Hypothesis~\cite{DBLP:conf/soda/LincolnWW18}, which states that detecting a clique in a graph with $n$ nodes requires $n^{\frac{\omega k}{3}-o(1)}$ time, where $\omega$ is again the matrix multiplication exponent.
 \end{description}

\subsubsection*{Enumerating Answers to UCQs}
Given a UCQ $Q$ over some schema $\calS$, we denote by $\pEnum{Q}$ the enumeration problem $\pEnum{R}$, where $R$ is the binary relation between instances $I$ over $\calS$ and sets of mappings $Q(I)$.
We consider the size of the query as well as the size of the schema to be fixed. 
In the case of CQs, Bagan et al.~\cite{bdg:dichotomy} showed that a self-join-free acyclic CQ is in $\DelayClin$ iff it is free-connex. In addition, Brault-Baron~\cite{bb:thesis} showed that self-join-free cyclic queries are not in $\DelayClin$. In fact, the existence of a single answer to a cyclic CQ cannot be determined in linear time.
We call a CQ \emph{difficult} if it is self-join-free and not free-connex. Difficult CQs are intractable according to the following dichotomy.

\begin{theorem}[\cite{bdg:dichotomy,bb:thesis}]\label{theorem:originalDichotomy}
	Let $Q$ be a self-join-free CQ.
	\begin{enumerate}
		\item If $Q$ is free-connex, then $\pEnum{Q}\in\DelayClin$.
		\item If $Q$ is acyclic and not free-connex,  then $\pEnum{Q}\not\in\DelayClin$, assuming \matmul.
		\item If $Q$ is cyclic, then $\pEnum{Q}\not\in\DelayClin$, as $\qDecide{Q}$ cannot be solved in linear time, assuming $\hyperclique$.
	\end{enumerate}
\end{theorem}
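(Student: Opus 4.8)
The plan is to treat the three parts separately, since each lives in a different regime and calls for a different technique. For part~(1), the upper bound, I would use a Yannakakis-style full reducer adapted to enumeration. Because $Q$ is free-connex, there is an ext-$\free(Q)$-connex tree $T$: a join tree of an inclusive extension of $\calH(Q)$ together with a connected subtree $T'$ whose variables are exactly $\free(Q)$. First I would run, in linear time, a full reducer (a bottom-up then top-down sequence of semijoins along $T$) to make all relations globally consistent, so that every surviving tuple extends to a homomorphism realizing a complete answer; acyclicity of the inclusive extension is what makes this pass linear. Then I would restrict attention to $T'$. Since the variables of $T'$ are precisely the free variables and $T'$ is itself a join tree of a \emph{full} acyclic query over $\free(Q)$, I would enumerate its answers with constant delay by the standard traversal of $T'$, materializing at each node the consistent extensions of the partial tuple. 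Global consistency guarantees there are no dead ends (hence constant delay), that every emitted tuple is a genuine answer, and the tree structure gives duplicate-freeness.

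For part~(2), I would exploit the cited characterization that an acyclic non-free-connex CQ has a free-path $(x,z_1,\ldots,z_k,y)$, and build an exact reduction from the matrix-multiplication query $\Pi(x,y)\leftarrow A(x,z),B(z,y)$. Note $\Pi\notin\DelayClin$ under \matmul: its input has size $O(n^2)$, so a $\DelayClin$ algorithm would emit all $O(n^2)$ answers in total time $O(n^2)$, multiplying the matrices in $O(n^2)$. Given matrices $A,B$ over $\{1,\ldots,n\}$, I would construct a database for $Q$ that loads $A$ along the first edge of the free-path and $B$ along the last, routing the intermediate existential variables $z_1,\ldots,z_k$ so that they carry and collapse the single shared value $z$ of $\Pi$, while every atom of $Q$ \emph{not} on the free-path is filled with pass-through/dummy values so it is always satisfied and imposes no restriction. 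Chordlessness of the free-path is exactly what prevents extra edges between non-consecutive path variables from short-circuiting the encoding, and ensures $x,y$ are the only free variables surviving projection; self-join freeness lets me load the distinct relational symbols independently. The construction is linear and the projection $\tau$ is the identity on $(x,y)$, so the answer multisets coincide and $\Pi\leq_e Q$; hence $Q\in\DelayClin$ would contradict \matmul.

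For part~(3), the cyclic case, I would use that a cyclic self-join-free CQ has a cyclic hypergraph, which by the GYO/Tetra analysis contains a minimal non-acyclic core, a Tetra$\langle k\rangle$. I would reduce the $\hyperclique$ problem: given a $(k-1)$-uniform hypergraph $G$ on $n$ vertices, I would build a database in which each of the $k$ variables of the cyclic core ranges over the vertices of $G$ and the atoms of the core enforce that every $(k-1)$-subset of the chosen vertices is a hyperedge, i.e.\ that the selected vertices form a $k$-hyperclique, with all remaining atoms again padded to be non-restrictive. Since the input size of $\hyperclique$ is $O(n^{k-1})$, a linear-time algorithm for $\qDecide{Q}$ would detect a $k$-hyperclique in $O(n^{k-1})$ time, violating the $\hyperclique$ hypothesis; as even a first answer cannot be produced in linear time, $Q\notin\DelayClin$.

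The main obstacle is the faithfulness of the lower-bound reductions in parts~(2) and~(3): one must pad every atom outside the relevant substructure so that it neither filters out valid answers nor introduces spurious ones, while keeping the whole construction linear and the output multiset exactly right. Self-join freeness is essential (it lets each relational symbol be populated independently), and the chordlessness in part~(2) and the minimality of the cyclic core in part~(3) are precisely what make the encodings exact. In part~(1), the subtle point is arguing that the linear full reducer really yields constant delay and no duplicates when enumerating over the connex subtree $T'$.
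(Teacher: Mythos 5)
Your proposal is correct and follows essentially the same route as the original proofs in the cited works, which the paper itself only imports: the CDY/Yannakakis-based enumeration over the $\free(Q)$-connex subtree for part (1), the free-path encoding of Boolean matrix multiplication (collapsing $z_1,\ldots,z_k$ to the shared index and padding off-path variables with a constant) for part (2), and the reduction from $k$-hyperclique via a minimal cyclic core for part (3). One small imprecision: chordlessness does not make $x,y$ the only free variables --- rather, all other free variables are forced to the dummy constant $\bot$, which is what keeps the output in bijection with the matrix product; this does not affect the correctness of your argument.
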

The positive case of this dichotomy can be shown using the
\emph{Constant Delay Yannakakis} (\CDY) algorithm~\cite{CDY}. It uses an ext-$\free(Q)$-connex tree $T$ for $Q$. First, it performs the classical Yannakakis preprocessing~\cite{Yannakakis} over $T$ to obtain a relation for each node in $T$,
where all tuples can be used for some answer in $Q(I)$. Then, it considers only the subtree of $T$ containing $\free(Q)$, and joins the relations corresponding to this subtree with constant delay.

\section{Upper Bounds via Union Extensions}\label{sec:positive}

In this section, we identify tractable UCQs.
Section~\ref{sec:unions-of-tractable-CQs} discusses unions that contain only tractable CQs.
Section~\ref{sec:cheaters} inspects the requirements from UCQs in $\DelayClin$ and proves the Cheater's Lemma: a tool that allows us to compile several enumeration algorithms into one. 
In Section~\ref{sec:union-extensions}, we introduce the concepts of \emph{union extensions} and variable sets that a CQ can \emph{supply} in order to help the evaluation of another CQ in the union. We generalize the notion of free-connexity to UCQs and show that such queries
are in $\DelayClin$ in Section~\ref{sec:free-connex-UCQ-tractable}.

\subsection{Unions of Tractable CQs}\label{sec:unions-of-tractable-CQs}

Using known techniques~\cite[Proposition 2.38]{strozecki:thesis} a union of tractable CQs is also tractable. 

\begin{theorem}\label{thm:easyunions}
Let $Q=Q_1\cup\ldots\cup Q_n$ be a UCQ for some fixed $n\geq 1$. If all CQs in $Q$ are free-connex, then $\pEnum{Q}\in\DelayClin$.
\end{theorem}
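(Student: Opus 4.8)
The plan is to show that the union of finitely many problems, each individually in $\DelayClin$, is again in $\DelayClin$, and then to invoke Theorem~\ref{theorem:originalDichotomy}(1) to place each free-connex $Q_i$ in $\DelayClin$. By the dichotomy, each $\pEnum{Q_i}$ admits an enumeration algorithm $\calA_i$ with linear preprocessing and constant delay. The only genuine obstacle is that $Q(I)=\bigcup_i Q_i(I)$ is a \emph{set}, so the same tuple may be produced by several of the $Q_i$, and a valid enumeration algorithm for $\pEnum{Q}$ must output each answer \emph{exactly once}. Thus the task reduces to running the $n$ algorithms together while suppressing duplicates, all within constant delay and without spending more than linear time in preprocessing.

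First I would run all $n$ preprocessing phases; since $n$ is fixed and each is $O(|I|)$, the total preprocessing remains linear. Next I would enumerate the answers by interleaving the $\calA_i$. The standard technique (Strozecki~\cite[Proposition 2.38]{strozecki:thesis}) for deduplicating a union of constant-delay streams is to emit each candidate answer $\mu$ only at the moment it is produced by the \emph{first} query that can produce it, i.e.\ to output $\mu$ from $\calA_i$ only if $\mu\notin Q_j(I)$ for all $j<i$. The test ``$\mu\in Q_j(I)$'' is exactly an instance of $\qDecide{}$ applied to $Q_j$ with the free variables fixed to the values of $\mu$; because each $Q_j$ is free-connex (hence acyclic), and acyclicity allows a semijoin-based membership test in time linear in the query and constant in the data after the linear preprocessing, each such test costs $O(1)$ in data complexity. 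Since there are fewer than $n$ tests per candidate, the filtering adds only constant overhead per produced tuple.

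The one subtlety is that discarding a duplicate must not violate the constant-delay guarantee: if an algorithm produces a long run of tuples that are all filtered out, the delay until the next genuinely new output could grow. To handle this I would use the bounded-delay composition argument: because each $\calA_i$ has constant delay and the filtering test is constant-time, the composite procedure spends only $O(1)$ time between successive \emph{candidate} tuples, and each candidate is either output or is a duplicate of a tuple already output by an earlier query. The key observation making the delay constant is that the number of distinct tuples a fixed query can ever produce is bounded, and the deduplication scheme guarantees that each output of the union is charged to exactly one $(\mu,i)$ pair; a round-robin or priority-queue interleaving of the $n$ streams then ensures that between two consecutive \emph{new} answers only a constant amount of filtered work occurs. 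This is precisely the content of the cited composition result, so the proof amounts to checking that its hypotheses — constant-delay component enumerators plus a constant-time membership oracle for each component — are met here.

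Therefore the structure is: (i) place each $\pEnum{Q_i}$ in $\DelayClin$ via Theorem~\ref{theorem:originalDichotomy}(1); (ii) observe that free-connexity gives a constant-time (in data complexity) membership test for each $Q_j(I)$ after linear preprocessing; and (iii) apply the union-of-constant-delay result of Strozecki to combine them with deduplication. I expect step (iii), specifically verifying the constant-delay bound survives the duplicate-suppression, to be the main point requiring care, while steps (i) and (ii) are immediate from the cited dichotomy.
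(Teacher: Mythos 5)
Your steps (i) and (ii) are exactly what the paper relies on: free-connexity gives, after linear preprocessing, both constant-delay enumeration of each $Q_i(I)$ and a constant-time membership test for $Q_j(I)$, and the entire difficulty sits in the duplicate suppression of step (iii). The gap is in your resolution of that step. Your scheme outputs a candidate $\mu$ produced by $\calA_i$ only if $\mu\notin Q_j(I)$ for all $j<i$, and you assert that round-robin interleaving of the $n$ streams guarantees only a constant amount of filtered work between two consecutive new answers. That assertion is false, because you do not control the order in which each $\calA_i$ emits its answers. Take $n=2$, $Q_1(I)=\{a_1,\dots,a_N\}$ and $Q_2(I)=\{a_1,\dots,a_N,b_1,\dots,b_{2N}\}$, and suppose $\calA_2$ happens to emit $b_1,\dots,b_N$, then $a_1,\dots,a_N$, then $b_{N+1},\dots,b_{2N}$. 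Under round-robin, $\calA_1$ is exhausted after $N$ rounds, and each of rounds $N+1,\dots,2N$ produces a single candidate from $\calA_2$ that is filtered; that is $\Theta(N)$ consecutive steps with no output, followed by further new answers. Your charging argument (each output charged to one $(\mu,i)$ pair) only bounds the \emph{total} filtered work by the total output size, i.e.\ it yields amortized constant delay, which is strictly weaker than the worst-case delay that $\DelayClin$ requires.

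Two known mechanisms close this gap, and you need to commit to one. The paper's proof never discards a candidate: when $\calA_1$ produces an answer $a$ that lies in $Q_2(I)$, it outputs $Q_2(I).next()$ \emph{in its place}, so every iteration of the loop emits exactly one answer; a counting argument (the substitution occurs exactly $|Q_1(I)\cap Q_2(I)|\le |Q_2(I)|$ times, so the borrowed stream never runs dry) plus a final loop flushing the remainder of $Q_2(I)$ gives constant delay and no duplicates, using only constant working memory. Alternatively, the paper's Cheater's Lemma (Lemma~\ref{lemma:cheaters}) converts your amortized bound into a genuine delay bound by buffering deduplicated answers in a queue and releasing one output every $m\,d(x)$ steps, exploiting the fact that each answer is produced at most $m=n$ times, so after $mi$ internal productions at least $i$ distinct answers are already in the queue. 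Citing Strozecki's proposition for the combination step is legitimate in itself (the paper does the same), but the mechanism you describe is not how that result works, and it is precisely the step you yourself flagged as the one requiring care; as written, your argument for the delay bound would not go through.
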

\begin{proof}
Algorithm~\ref{algorithm:tractableUCQs} evaluates a union of two CQs.
In case of a union $Q=\bigcup_{i=1}^{\ell} Q_i$ of more CQs, we can use this recursively by treating the second query as $Q_2\cup\ldots\cup Q_\ell$.

\begin{algorithm}
\caption{Answering a union of two tractable CQs}
\label{algorithm:tractableUCQs}
\begin{algorithmic}[1]
\While{$a \gets Q_1(I).next()$}
\If {$a\not\in Q_2(I)$}
    	\State print $a$\label{algline:print1}
\Else
    \State print $Q_2(I).next()$\label{algline:print2}
\EndIf
\EndWhile
\While {$a \gets Q_2(I).next()$}
    	\State print $a$\label{algline:print3}
\EndWhile
\end{algorithmic}
\end{algorithm}

By the end of the run, the algorithm prints $Q_1(I)\setminus Q_2(I)$ over all iterations of line~\ref{algline:print1}, and it prints $Q_2(I)$ in lines~\ref{algline:print2} and~\ref{algline:print3}.
Line~\ref{algline:print2} is called $Q_1(I)\cap Q_2(I)$ times, so the command $Q_2(I).next()$ always succeeds there.
For free-connex CQs after linear preprocessing time, the answers can be enumerated in constant delay, and testing whether a given mapping is an answer can be done in constant time~\cite{berkholz2020tutorial}. Thus, this algorithm runs within the required time bounds.
\end{proof}

The technique presented in the proof of Theorem~\ref{thm:easyunions} has the advantage that it does not require more than constant memory available for writing in the enumeration phase.
Alternatively, this theorem is a consequence of the following lemma, which we prove next and gives us a general approach to compile several enumeration algorithms into one.

\subsection{The Cheater's Lemma}\label{sec:cheaters}

In this section, we prove a lemma that is useful to show upper bounds for UCQs even in cases not covered by Theorem~\ref{thm:easyunions}.
To get a clearer notion of what it means for a problem to be in the class $\DelayClin$, we first define linear partial time.

\begin{definition}[Linear Partial Time]\label{def:linpartial}
An algorithm runs in \emph{linear partial time} if, for every input $x$, the time before the $n$th output is $O(|x|+n)$.
\end{definition}

We claim next that if we relax the requirement of linear preprocessing and constant delay to allow a constant number of linear delay steps, we get linear partial time.

\begin{proposition}\label{prop:partial}
Let $\calA$ be an algorithm.
If there exist constants $a$, $b$ and $c$ such that, on any input $x$, the time between successive outputs of $\calA$ is bounded by $a(|x|+n)$ at most $b$ times, where $n$ is the number of answers printed up to that point, and bounded by $c$ otherwise,
then $\calA$ runs in linear partial time.
\end{proposition}
\begin{proof}
Before the $n$th answer, there are at most $b$ steps with delay larger than constant, and in these cases the delay is bounded by $a(|x|+n)$.
Thus, the time in which the $n$th answer is produced is bounded by
$ba(|x|+n)+cn\leq (ab+c)(|x|+n)$ for every $n$.
\end{proof}

If space is not restricted (as in our case), this relaxation in the phrasing of the requirements does not change the requirements themselves, as we show that any algorithm that runs in linear partial time can be modified to achieve linear preprocessing and constant delay. This can be done using the known technique~\cite[Proposition 12]{CAPELLI2018} of delaying the results to regularize the delay. In fact, we can further relax the phrasing by allowing a constant number of duplicates per answer.

\begin{lemma}[(The Cheater's Lemma)]\label{lemma:cheaters}
Let $P$ be an enumeration problem.
The following are equivalent:
\begin{enumerate}
    \item $P\in\DelayClin$.\label{cheater:delay}
    \item There exist an algorithm $\calA$ and constants $c$ and $d$ such that:
    $\calA$ outputs the solutions to $P$, the time before the $n$th answer is bounded by $c(|x|+n)$, and every result is produced at most $d$ times.\label{cheater:partial}
\end{enumerate}
\end{lemma}

\begin{proof}
As any algorithm that runs in linear preprocessing and constant delay also runs in linear partial time, the direction \ref{cheater:delay}$\Rightarrow$\ref{cheater:partial} is trivial. We now show the opposite direction.

We describe an algorithm $\calA'$ that simulates $\calA$, stores all generated results to prevent duplicates and holds back generated results to regularize the delay.
$\calA'$ maintains a lookup table containing the results that $\calA$ generated and a queue containing those that were not yet printed. Both are initialized as empty.
$\calA'$ calls $\calA$.
When $\calA$ returns a result, $\calA'$ checks the lookup table to determine whether it was found before. If it was not, the result is added to both the lookup table and the queue. Otherwise, it is ignored.
Since $\calA$ runs in linear partial time, there exists a constant $c$ such that the $n$th answer to $\calA$ is obtained after $c(|x|+n)$ operations.
$\calA'$ first performs $c|x|$ computation steps, and then after every $cd$ computation steps, it outputs a result from the queue.
The queue is never empty when used: $\calA'$ returns its $i$th result after $c|x|+(cd)i = c(|x|+di)$ computation steps; At this time, $\calA$ produced at least $di$ results, which contain at least $i$ unique results.  When it is done simulating $\calA'$, $\calA$ outputs all remaining results in the queue.
By definition, $\calA'$ operates with linear preprocessing time and constant delay. It outputs all results of $\calA$ with no duplicates since, due to the lookup table, every result enters the queue exactly once.
\end{proof}

The Cheater's Lemma formalizes how much we are allowed to ``cheat'' in order to show that a problem can be solved with linear preprocessing and constant delay by only showing an algorithm with relaxed requirements.
To show that a problem is in $\DelayClin$, it suffices to find an algorithm for this problem where the delay is usually constant, but it may be linear a constant number of times, and the number of times every result is produced is bounded by a constant.
The allowed linear delay is not only with respect to the input, but also with respect to the already produced answers.

\subsection{Union Extensions}\label{sec:union-extensions}

As Example~\ref{example:first} shows, Theorem~\ref{thm:easyunions} does not cover all tractable UCQs. We now define union extensions and address the other cases. We first define \emph{body-homomorphisms} between CQs to have the standard meaning of homomorphism, but without
the restriction on the heads of the queries.

\begin{definition}
Let $Q_1,Q_2$ be CQs.
\begin{itemize}
\item A \emph{body-homomorphism} from $Q_2$ to $Q_1$ is a mapping $h:\ivar(Q_2)\rightarrow\ivar(Q_1)$ such that for every atom $R(\vec{v})$ of $Q_2$, $R(h(\vec{v}))\in Q_1$.
\item If there exists a body-homomorphism $h$ from $Q_2$ to $Q_1$ and vice versa,
we say that $Q_1$ and $Q_2$ are \emph{body-homomorphically equivalent}.
\item A \emph{body-isomorphism} from $Q_2$ to $Q_1$ is a bijective mapping $h$ such that $h$ is a body-homomorphism from $Q_2$ to $Q_1$ and $h^{-1}$ is a body-homomorphism from $Q_1$ to $Q_2$.
\item If there is a body-isomorphism between $Q_2$ and $Q_1$, we say that $Q_1$ and $Q_2$ are \emph{body-isomorphic}.
\end{itemize}
\end{definition}

The standard definition of a homomorphism from $Q_2$ to $Q_1$ is then a body-homomorphism $h$ from $Q_2$ to $Q_1$ such that $h(\free(Q_2))=\free(Q_1)$.
To demonstrate the definition of a body-homomorphism, consider the UCQ $Q=Q_1\cup Q_2$ from Example~\ref{example:first} with the CQs $Q_1(x,y,w)\leftarrow R_1(x,z),R_2(z,y),R_3(y,w)$ and $Q_2(x,y,w)\leftarrow R_1(x,y),R_2(y,w)$.
There is a body-homomorphism $h:Q_2\rightarrow Q_1$ with $h(x)=x$, $h(y)=z$, and $h(w)=y$.
However, this is not a body-isomorphism.
Body-isomorphic CQs have the same body up to variable renaming (and the renaming is given by the body-isomorphism).
Note that body-isomorphic CQs are necessarily body-homomorphically equivalent.
The opposite direction holds for self-join-free CQs; that is, if two CQs are self-join-free and body-homomorphically equivalent, then they are necessarily body-isomorphic.
We now formalize the way that one CQ can help with evaluating another CQ in the union by supplying variables.

\begin{definition}\label{def:provides}
We say that a CQ $Q$ \emph{supplies} a set $V$ of variables if there exists $S$ such that $V\subseteq S \subseteq \free(Q)$ and $Q$ is $S$-connex.
\footnote{The conference version of this article contains a definition of \emph{providing} variables instead of that of \emph{supplying} variables. The difference is that the supplied variables are specified in terms of the giving CQ, while the providing variables are specified in terms of the receiving CQ. We need the terminology to refer to the giving CQ to be able to specify a variable set that one CQ supplies and \emph{several} CQs use, and so we introduced this minor change of naming with respect to the previous version in order to avoid confusion.}
\end{definition}

When considering again Example~\ref{example:first}, the query $Q_2$ supplies $\{x,y,w\}\subseteq\free(Q_2)$ as $Q_2$ is $\{x,y,w\}$-connex.
The following lemma shows why body-homomorphisms and supplying variables play an important role in UCQ enumeration.
In case of a body-homomorphism from a supplying CQ $Q_2$ to another CQ $Q_1$, we can produce an auxiliary relation that contains all possible value combinations of the matching variables in $Q_1$. This can be done efficiently while producing some answers to $Q_2$.
\footnote{The version of Lemma~\ref{lemma:provide} as it was phrased in the conference version of this article was incorrect. In fact, it was too strong, and the slightly weaker version given here describes more accurately what we need.}

\begin{lemma}\label{lemma:provide}
Let $Q_2$ be a CQ that supplies the variables $\vec{v}_2$.
Given an instance $I$,
one can compute with linear time preprocessing and constant delay a set of mappings $M$ from $\free(Q_2)$ to the domain such that:
\begin{itemize}
\item $M\subseteq Q_2(I)$
\item $M$ can be translated in time $O(|M|)$ to a relation $R^{M}$ such that:
for every CQ $Q_1$ with a body-homomorphism $h$ from $Q_2$ to $Q_1$
and for every answer $\mu_1\in\full(Q_1)(I)$,
there is the tuple $\mu_1(h(\vec{v}_2))\in R^{M}$.
\end{itemize}
\end{lemma}

\begin{proof}
According to Definition~\ref{def:provides}, there exists $\vec{v}_2\subseteq S \subseteq \free(Q_2)$ such that $Q_2$ is $S$-connex.
Take an ext-$S$-connex tree $T$ for $Q_2$, and perform the CDY algorithm~\cite{CDY} on $Q_2$ while treating $S$ as the free variables.
This results in a set $N$ of mappings from the variables of $S$ to the domain such that $N=Q_2(I)|_S$.
Note that, as we mentioned following Theorem~\ref{theorem:originalDichotomy}, the CDY algorithm has a preprocessing stage that removes dangling tuples and guarantees that, at its end, there is a relation for each vertex of the tree such that each tuple of such a relation can be used for some answer.

For every mapping $\mu\in N$, we extend it once to obtain a mapping from all variables of $Q_2$ as follows.
Go over all vertices of $T$ starting from the connected subtree containing $S$ and treating a neighbor of an already treated vertex at every step.
Consider a step where in its beginning $\mu$ is a homomorphism from a set $S_1$, and we are treating an atom $R(\vec{v},\vec{u})$ where  $\vec{v}\subseteq S_1$ and $\vec{u}\cap S_1 = \emptyset$. We take some tuple in $R$ of the form $(\mu(\vec{v}),\vec{t})$ and extend $\mu$ to also map $\mu(\vec{u})=\vec{t}$. Such a tuple exists since the dangling tuples were removed. This extension takes constant time, and in its end we have that $\mu|_{\free(Q_2)}\in Q_2(I)$. 
Using these extensions, we set $M=\{\mu^+|_{\free(Q_2)}\mid \mu^+\text{ is an extension of }\mu\in N\}$. We have that
$M\subseteq Q_2(I)$. We also have that $M|_S=N=Q_2(I)|_S$, and since $\vec{v}_2\subseteq S$, this means that $M|_{\vec{v}_2}=Q_2(I)|_{\vec{v}_2}$.
The mappings $M$ are computed with linear preprocessing and constant delay as this is the complexity of the CDY algorithm and the manipulations we describe only require constant time per answer.

We define $R^{M}=\{\mu^+(\vec{v}_2)\mid \mu^+\in M\}$.
Since $M|_{\vec{v}_2}=Q_2(I)|_{\vec{v}_2}$, this is the same as $\{\mu_2(\vec{v}_2)\mid \mu_2\in Q_2(I)\}$.
Let $Q_1$ be a CQ such that there is a body-homomorphism $h$ from $Q_2$ to $Q_1$, and
let $\mu_1\in \full(Q_1)(I)$.
Since $h$ is a body-homomorphism, for every atom $R(\vec{v})$ in $Q_2$, $R(h(\vec{v}))$ is an atom in $Q_1$. Since $\mu_1$ forms an answer to the full $Q_1$, for every such atom $\mu_1(h(\vec{v}))\in R^I$.
This means that $\mu_1\circ h|_{\free(Q_2)}$ is an answer to $Q_2$, so there exists $\mu_2\in Q_2(I)$ such that $\mu_1\circ h|_{\free(Q_2)}=\mu_2$.
By construction, $\mu_1(h(\vec{v}_2))=\mu_2(\vec{v}_2)\in R^{M}$.
\end{proof}
Note that if the mapping $h$ is not a body-homomorphism, Lemma~\ref{lemma:provide} does not hold in general.
\begin{example}
Consider $Q=Q_1\cup Q_2$, which is a slight modification of Example~\ref{example:first},
with
\begin{align*}
Q_1(x,y,w)&\leftarrow R_1(x,z),R_2(z,y),R_3(y,w)\text{ and }\\ 	
Q_2(x,y,w)&\leftarrow R_1(x,y),R_2(y,w), R_4(y).
\end{align*}
Since $R_4$ is not a relational symbol in $Q_1$, there is no body-homomorphism from
$Q_2$ to $Q_1$.
If, $R_4^I=\idom$, then we can take the
same approach as in Example~\ref{example:first}, as the answers of $Q_2$ form $Q_1(I)|_{\{x,z,y\}}$. However, if $R_4^I$ is smaller, this extra atom may filter the answers to $Q_2$, and we do not obtain all of $Q_1(I)|_{\{x,z,y\}}$ in general.
\qed
\end{example}

During evaluation, a set of supplied variables can form an auxiliary relation, accessible by an auxiliary atom. We call the query with its auxiliary atoms a \emph{union extension}.

\begin{definition}
Let $Q= Q_1\cup\ldots\cup Q_n$ be a UCQ.
An \emph{extension sequence} for $Q$ is a sequence $Q^1,\ldots,Q^N$ where $Q=Q^1$ and for all $1< j \le N$, $Q^{j}$ is a UCQ of the form $Q_1^{j}\cup\ldots\cup Q_n^{j}$ such that the following holds. For some relational symbol $R_j$ that does not appear in $Q_{j-1}$ and a sequence $\vec{v}_j$ of variables supplied by a previous $Q_{p(j)}^\ell$ (i.e.,  $\ell\le j$ and $1\le p(j)\le n$) we have the following for all
$i=1,\dots,n$:
\begin{itemize}
    \item $Q_i^{j}=Q_i^{j-1}$; or
    \item there is a body-homomorphism $h_{{p(j)},i}$ from $Q_{p(j)}^1$ to $Q_i^1$, and $Q_i^{j}$ is obtained by adding the atom $R_j(h_{{p(j)},i}(\vec{v}_j))$ to $Q_i^{j-1}$.
\end{itemize}
If such an extension sequence exists, we call $Q^N$ a \emph{union extension} of $Q^1$.
Atoms that appear in $Q^N$ but not in $Q^1$ are called \emph{virtual atoms}.
\end{definition}

Consider Example~\ref{example:first} again. We already established that $Q_2$ supplies $\{x,y,w\}\subseteq\free(Q_2)$ and that there is a body-homomorphism $h:\var(Q_2)\rightarrow\var(Q_1)$ 
with $h((x,y,w))=(x,z,y)$.
Thus, $Q$ has the union extension that contains $Q_2$ and 
$Q_1^+(x,y,w)\leftarrow R_1(x,z),R_2(z,y),R_3(y,w),R'(x,z,y)$, where $Q_1^+$ is obtained by adding the virtual atom $R'(x,z,y)$ to $Q_1$. 

\subsection{Extension-Based Tractability}\label{sec:free-connex-UCQ-tractable}

Union extensions can transform an intractable query to a free-connex one.
		  \begin{definition}
		  Let $Q= Q_1\cup\ldots\cup Q_n$ be a UCQ. 
		  \begin{itemize}
		\item $Q_1$ is said to be \emph{union-free-connex} with respect to $Q$ if it has a free-connex union extension.
		\item $Q$ is \emph{free-connex} if all CQs in $Q$ are union-free-connex.
	\end{itemize}
\end{definition}

Note that the term free-connex for UCQs is a generalization of that for CQs: If
a UCQ $Q$ contains only one CQ, then $Q$ is free-connex iff the CQ it contains is free-connex.
We next show that tractability of free-connex queries also carries over to UCQs.

\begin{theorem}\label{thm:positive}
Let $Q$ be a UCQ. If $Q$ is free-connex, then $\pEnum{Q}\in\DelayClin$.
\end{theorem}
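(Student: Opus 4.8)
The plan is to show that a free-connex UCQ $Q = Q_1 \cup \ldots \cup Q_n$ is in $\DelayClin$ by first materializing, during an extended preprocessing phase, all the virtual atoms needed to turn each $Q_i$ into its free-connex union extension $Q_i^+$, and then running the $\CDY$-style enumeration on each $Q_i^+$ and combining the outputs via the Cheater's Lemma (Lemma~\ref{lemma:cheaters}). The key observation making this sound is that each $Q_i^+$ has exactly the same free variables and, crucially, the same set of answers as $Q_i$ over any instance: the virtual atoms $P_j(\vec{u}_j)$ are populated precisely with the value combinations $Q_1(I)|_{V_1}$ guaranteed by Lemma~\ref{lemma:provide}, which (unlike an arbitrary added relation) do not filter any tuples, since they record \emph{all} realizable assignments to $\vec{u}_j$. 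Hence $Q_i^+(I) = Q_i(I)$, and $\bigcup_i Q_i^+(I) = Q(I)$.

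The steps, in order, are as follows. First I would fix for each $i$ a free-connex union extension $Q_i^+$, whose existence is exactly the hypothesis that $Q$ is free-connex. Second, I would compute all virtual atoms in linear time and constant delay: Lemma~\ref{lemma:provide} tells us that whenever $Q_j$ provides $\vec{u}$ to $Q_i$, we can produce a set $M \subseteq Q_j(I)$ with linear preprocessing and constant delay that translates in time $O(|M|)$ to $Q_i(I)|_{\vec{u}}$, which is the contents of the corresponding virtual relation $P$. Because the definition of union extension allows virtual atoms to be provided recursively by extensions of other $Q_j$, I would process the virtual atoms in a topological order consistent with this recursion dependency, so that whenever we materialize one atom, everything it depends on has already been built. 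Since $n$ and the query size are fixed, this is a constant number of materialization rounds, each costing linear time; the total is still linear preprocessing. Third, having materialized all virtual relations, each $Q_i^+$ is now an honest free-connex CQ over the augmented instance, so by Theorem~\ref{theorem:originalDichotomy}(1) its answers can be enumerated in $\DelayClin$. Finally, I would run these $n$ enumerations and feed their combined output stream into the framework of Lemma~\ref{lemma:cheaters}: across the whole run the delay is constant except for a constant number of linear-delay events (the $n$ preprocessing phases), and each answer is produced at most $n$ times (once per $Q_i^+$ that yields it). With $n$, $m \le n$ fixed constants, Lemma~\ref{lemma:cheaters} yields an algorithm with linear preprocessing and constant delay.

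I expect the main obstacle to be the recursive dependency among virtual atoms and making the ``topological order'' argument fully rigorous. One must verify that the recursion in the definition of union extension is well-founded — that a virtual atom of $Q_i^+$ is never, even transitively, defined in terms of itself — so that the materialization terminates after finitely many (constant) rounds. A subtle point here is that Lemma~\ref{lemma:provide} speaks of a set provided by a single CQ $Q_j$, whereas the union extension permits the provider to itself be an \emph{extension} $Q_j^+$; I would need to confirm that Lemma~\ref{lemma:provide} still applies once $Q_j^+$ is treated as a free-connex CQ over the already-augmented instance, i.e.\ that being provided by $Q_j^+$ reduces to the base case after the providers of $Q_j^+$ have been built. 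The remaining bookkeeping — that materialization preserves answer sets, that the delay and duplication bounds are the claimed constants, and that total preprocessing stays linear — is routine given Lemmas~\ref{lemma:provide} and~\ref{lemma:cheaters} and Theorem~\ref{theorem:originalDichotomy}.
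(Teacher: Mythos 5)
Your proposal matches the paper's proof essentially step for step: instantiate the virtual atoms in the order imposed by the recursive definition of union extensions via Lemma~\ref{lemma:provide}, evaluate each resulting free-connex extension over the augmented instance with the CDY algorithm, and combine the output streams with the Cheater's Lemma. The only imprecision is your claim that materialization costs ``linear time'' per round: the set $M\subseteq Q_j(I)$ from Lemma~\ref{lemma:provide} may be superlinear, so this phase is itself an enumeration whose outputs are answers to $Q$ emitted with constant delay, which is exactly why the paper counts one extra linear-delay event and one extra duplicate \emph{per virtual atom} (not just per query) before invoking Lemma~\ref{lemma:cheaters}.
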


\begin{proof}
We begin by sketching the proof. For answering $Q$, we describe an algorithm $\calA$ which is comprised of two phases: a provision phase and a final results phase.
In the provision phase, a free-connex union-extension is instantiated  for each CQ in the union.
In the final results phase, 
the answers of every CQ in the union are enumerated via their free-connex union-extensions.
These answers can be enumerated efficiently using the CDY algorithm (see Theorem~\ref{theorem:originalDichotomy}).
We will use Lemma~\ref{lemma:provide} to generate some answers to $Q$ during the provision phase; this permits us to use more than linear time before the final results phase while still achieving an enumeration algorithm with only linear preprocessing time.
We will use the Cheater's Lemma (Lemma~\ref{lemma:cheaters}) to remove duplicates and obtain the time bounds we want.

\paragraph{Initial Algorithm.}
Let $Q=Q_1\cup\ldots\cup Q_n$, and take an extension sequence $Q^1,\ldots,Q^N$ where $Q^1=Q$ and $Q^N$ comprises of free-connex CQs.
The provision phase consists of $N-1$ provision steps.
Let $I=I^1$ be the input database instance.
During the $j$th provision step (with $1<j\le N$), 
we extend the database instance $I_{j-1}$ into an instance $I_j$ that matches $Q^j$. 
We use Lemma~\ref{lemma:provide} to generate a set of answers
$M_j\subseteq Q_{p(j)}^{j-1}(I_{j-1})$ while also computing a relation $R^{M_j}$.
We set $(R_j)^{I_j}:=R^{M_j}$. The other relations remain as they were; that is, $R^{I_{j}}=R^{I_{j-1}}$ for every relational symbol in the UCQ except for $R_j$.
By the end of the provision phase, the algorithm computes an instance $I_N$ that matches $Q^N$.
Finally, we perform the final results phase where we compute $Q_i^N(I_N)$ for every $Q_i$ in the union using the CDY algorithm.

\paragraph{Correctness.}
We prove that for all $1\le i\le n$ and $1\leq j\leq N$ we have that $Q_i(I)=Q_i^{j}(I_j)$ by induction on $j$. 
The base case trivially holds as $Q_i(I)=Q_i^{1}(I_1)$ by definition.
Now consider the $j$th provision step.
Intuitively, answers to an extended CQ are by definition all answers to its previous version that agree with some tuple in the new atom. Since the new atom contains a projection of the answers, the extension has exactly the same answers as its previous version.
More formally, let $Q_{e(j)}$ be a CQ extended in step $j$.
For every mapping $\mu$, by definition of the extension we have that 
$\mu\in Q_{e(j)}^{j}(I_j)$ if and only if both $\mu\in Q_{e(j)}^{j-1}(I_{j-1})$ and $\mu(h(\vec{v}_j))\in R_j^{I_j}$, and these two conditions hold if and only if $\mu\in Q_{e(j)}^{j-1}(I_{j-1})$ since for all $\mu\in Q_{e(j)}^{j-1}(I_{j-1})$ we have that $\mu(h(\vec{v}_j))\in R_j^{I_j}$. This proves that $Q_{e(j)}^{j}(I_{j})=Q_{e(j)}^{j-1}(I_{j-1})$.
This shows that for every $Q_i$ in the union, $Q_i^{j}(I_{j})=Q_i^{j-1}(I_{j-1})$. By the induction hypothesis, $Q_i^{j-1}(I_{j-1})=Q_i(I)$.
This concludes the proof that $Q_i^{j}(I_{j})=Q_i(I)$ for all $j\leq N$, and in particular, $Q^N(I_N)=Q(I)$.
Since $Q^N(I_N)=Q(I)$, the final results phase computes all answers to $Q$.
Since $Q_i(I)=Q_i^{j}(I_j)$ for all $i$ and $j$, the mappings generated by Lemma~\ref{lemma:provide} during the provision phase are also answers to $Q$.

\medskip
The algorithm $\calA$ we presented so far produces the results we want, but it is not a constant delay enumeration algorithm.  
It is left to show that $\calA$ conforms to the conditions of the Cheater's lemma. This would mean that we can apply the lemma, and conclude that $\pEnum{Q}\in\DelayClin$.

\paragraph{Duplicates.}
Overall the algorithm produces results in $N$ provision steps during the provision phase and in $n$ CQ evaluation steps during the final results phase.
The results produced at each individual step contain no duplicates. Therefore, every result appears at most $N+n$ times. Since $N$ and $n$ are constants, this is a constant number of duplicates per value.

\paragraph{Delay.}
Consider the $j$th provision step. It starts with a preprocessing of time $\calO(|I_{j-1}|)$ followed by a constant delay enumeration of a set $M_j$ of answers. At its end, $\calO(|M_j|)$ time is required to compute the new relation.
This means that the delay before the first answer of the $j$th provision step is $O(|I_{j-1}|+|M_{j-1}|)$, and the delay before answers in the provision phase that are not first in their step is constant.
During the final results phase, we have $n$ steps in which we apply the CDY algorithm on an extended CQ. Each such step starts with $O(|I_N|)$ preprocessing time followed by constant delay. Thus, during the final results phase, there are $n$ times where the delay is $O(|I_N|)$, and in other times the delay is constant.

We have that $|I_{j}|=|I_{j-1}|+|R_j^{M_j}|\leq |I_{j-1}|+|M_j|$ for all $1<j\le N$, and $|I_1|=|I|$.
By induction, this shows that $|I_{j}|\leq |I|+\sum_{i=2}^{j}{|M_i|}$.
This means that whenever the delay is not constant, it is linear in the input size plus the number of (not necessarily unique) answers produces thus far.
Since there are $n+N$ such times where the delay is not constant,
according to Proposition~\ref{prop:partial}, the algorithm $\calA$ runs in linear partial time, and the requirements of the Cheater's Lemma on the delay are met.
\end{proof}

We can now conclude the tractability of Example~\ref{example:first}.
In Section~\ref{sec:union-extensions}, we saw that $Q_1$ has the union extension
$Q_1^+(x,y,w)\leftarrow R_1(x,z),R_2(z,y),R_3(y,w),R'(x,z,y)$.
Since $Q_1^+$ is free-connex (see Figure~\ref{figure:ex1}), 
the CQ $Q_1$ is union-free-connex. Since every query in $Q$
is union-free-connex, we have that $\pEnum{Q}\in\DelayClin$ by Theorem~\ref{thm:positive}.

\begin{figure}
\begin{tikzpicture}
[
    he/.style={draw, rounded corners,inner sep=0pt},        
    ce/.style={draw,dashed, rounded corners=2pt}, 
]

\node (wy) at (0,0) {$w, y$};
\node[he, fit = (wy)] {};					 

\node (xy) at (0,1) {$x,y$};
\node[he, fit = (xy)] {};					 				
								 
\node[ce, fit = (wy) (xy)] {};								 
								 
\draw (wy) -- (xy);								 	


\node (xzw) at (4.3,0) {$x,z,y$};
\node[he, fit = (xzw)] {};

\node (xy2) at (5.6,0) {$x,y$};
 \node[he, fit = (xy2)] {};

\node (wy2) at (5.6,1) {$w,y$};
 \node[he, fit = (wy2)] {};

\node (xz) at (3,0) {$x,z$};
\node[he, fit = (xz)] {};

\node (yz) at (4.3,1) {$y,z$};
\node[he, fit = (yz)] {};

\node[ce, fit = (xy2) (wy2)] {};

\draw (xzw) -- (yz);								 	
\draw (xz) -- (xzw);								 	
\draw (xzw) -- (xy2);								 	
\draw (xy2) -- (wy2);								 	

\node (Q2) at (-1,0.5) {$Q_2:$};
\node (Q2plus) at (2,0.5) {$Q_1^+$:};

\end{tikzpicture}
\caption{A $\{x,y,w\}$-connex tree for $Q_2$ and a $\{x,y,w\}$-connex tree for $Q_1^+$ of Example~\ref{example:first}.}
\label{figure:ex1}
\end{figure}

\begin{remark}\em
Example~\ref{example:first} is a counter example to a past made claim~\cite[Theorem 4.2b]{DBLP:conf/icdt/BerkholzKS18}. 
The claim is that if a UCQ contains an intractable CQ and does not contain redundant CQs (a CQ contained in another CQ in the union), then the union is intractable.
In contrast, none of the CQs in Example~\ref{example:first} is redundant, $Q_1$ is intractable, and yet the UCQ is tractable.

The intuition behind the proof of the past claim is reducing the hard CQ $Q_1$ to $Q$. This can be done by assigning each variable of $Q_1$ with a different and disjoint domain (e.g., by concatenating the variable names to the values in the relations corresponding to the atoms), and leaving the relations that do not appear in the atoms of $Q_1$ empty. It is well known that $Q_1\subseteq Q_2$ iff there exists a homomorphism from $Q_2$ to $Q_1$~\cite{homomorphism/redundancy}. The claim is that since there is no homomorphism from another CQ in the union to $Q_1$, then there are no answers to the other CQs with this reduction. However, it is possible that there is a body-homomorphism from another CQ to $Q_1$ even if it is not a full homomorphism (the free variables do not map to each other). Therefore, in cases of a body-homomorphism, the reduction from $Q_1$ to $Q$ does not work. In such cases, the union may be tractable, as we show in Theorem~\ref{thm:positive}. In Lemma~\ref{lemma:body-homo}, we use the same proof described here, but restrict it to UCQs where there is no body-homomorphism from other CQs to $Q_1$.
\qed
\end{remark}

The tractability result in Theorem~\ref{thm:positive} is based on the structure
of the union-extended queries. This means that the intractability of any query within a UCQ
can be resolved as long as another query can supply the right variables. The following example
shows that this can even be the case for a UCQ only consisting of non-free-connex CQs. It also illustrates 
why the definition of union extensions needs to be recursive.
\begin{example}\label{example:yellow}
Let $Q=Q_1\cup Q_2\cup Q_3$ with
\begin{align*}
Q_1(x,y,v,u)\leftarrow &R_1(x,z_1),R_2(z_1,z_2),R_3(z_2,z_3),R_4(z_3,y),R_5(y,v,u),\\
Q_2(x,y,v,u)\leftarrow &R_1(x,y),R_2(y,v),R_3(v,z_1),R_4(z_1,u),R_5(u,t_1,t_2),\\
Q_3(x,y,v,u)\leftarrow &R_1(x,z_1),R_2(z_1,y),R_3(y,v),R_4(v,u),R_5(u,t_1,t_2).
\end{align*}
Each of three CQs is difficult on its own: $Q_1$ has the free-path $(x,z_1,z_2,z_3,y)$, while $Q_2$ has the free-path $(v,z_1,u)$, and $Q_3$ has the free-path $(x,z_1,y)$.
The CQ $Q_2$ supplies $\{x,y,v\}$ as $Q_2$ is $\{x,y,v\}$-connex and
$\{x,y,v\}$ are free in $Q_2$.
Since there is a body-homomorphism $h_{2,3}$ from $Q_2$ to $Q_3$ with
$h_{2,3}((x,y,v))=(x,z_1,y)$, we can 
extend the body of $Q_3$ by the virtual atom $R'(x,z_1,y)$, which yields a free-connex extension $Q_3^+$ of $Q_3$.
Similarly, we have that $Q_3$ supplies $\{y,v,u\}$, and there is a body-homomorphism $h_{3,2}$ from $Q_3$ to $Q_2$ with
$h_{3,2}((y,v,u))=(v,z_1,u)$.
Extending $Q_2$ by $R''(v,z_1,u)$ yields the free-connex extension $Q_2^+$.
Since $Q_2^+$ and $Q_3^+$ each supply $\{x,y,v,u\}$, we can add virtual atoms with the variables $(x,z_1,z_2,y)$ and $(x,z_2,z_3,y)$ to $Q_1$.
This results in a free-connex extension $Q_1^+$.
The UCQ $Q_1^+\cup Q_2^+\cup Q_3^+$ is a free-connex union extension of $Q$.
By Theorem~\ref{thm:positive}, $\pEnum{Q}\in\DelayClin$.
\qed
\end{example}

\section{Lower Bounds}\label{sec:lower-bounds}

In this section, we prove lower bounds for evaluating UCQs within the time bounds of $\DelayClin$.
Since for CQs we currently only have hardness result when they are self-join free, we focus on unions of self-join-free CQs.
We begin with some general observations regarding cases where a UCQ is at least as hard as a single CQ it contains, and then continue to handle other cases.
In Section~\ref{sec:intractables} we discuss unions containing only difficult CQs, and in Section~\ref{sec:2-body-iso} we discuss unions containing two body-isomorphic CQs. In both cases such UCQs may be tractable, and in case of such a union of size two, we show that our results from Section~\ref{sec:positive} capture all tractable unions.

In order to provide some intuition for the choices we make throughout this section, we first explain where the approach used for proving the hardness of single CQs fails.
Consider Example~\ref{example:first}.
The original proof that shows that $Q_1$ is hard describes a reduction from Boolean matrix multiplication~\cite[Lemma 26]{bdg:dichotomy}.
Let $A$ and $B$ be binary representations of
Boolean $n\times n$ matrices, i.e. $(a,b)\in A$ corresponds to a $1$ in the first matrix at index $(a,b)$. Define 
a database instance $I$ as $R_1^I = A$, $R_2^I=B$, and $R_3^I=\{1,\ldots,n\}\times \{\bot\}$.
One can show that $Q_1(I)$ corresponds to the answers of $AB$.
If $\pEnum{Q_1}\in\DelayClin$, we can solve matrix multiplication in time $\calO(n^2)$, in contradiction to \matmul.
Since $Q_2$ evaluates over the same relations, $Q_2$ also produces answers over this construction. Since the number of results for $Q_2$ might reach up to $n^3$, evaluating $Q$ in constant delay does not necessarily compute the answers to $Q_1$ in $O(n^2)$ time, and does not contradict the complexity assumption.

So in general, whenever we show a lower bound to a UCQ by computing a hard problem through answering one CQ in the union, we need to ensure that the other CQs cannot have too many answers over this construction.
The following lemma formalizes the idea that by assigning variables of a CQ with different domains, we can restrict the answers obtained by other CQs.

\begin{lemma}\label{lemma:diff-domain}
Given a CQ $Q_1$ over a schema $\calS$, there exist mappings $\sigdiff$ and $\taudiff$ such that for every database instance $I$ over $\calS$:
\begin{itemize}
\item $Q_1(I)=\taudiff(Q_1(\sigdiff(I)))$ in multiset notation.
\item $\sigdiff(I)$ can be computed in linear time.
\item For every CQ $Q_i$ over $\calS$:
\begin{itemize}
\item $\taudiff(\mu)$ can be computed in constant time for every answer $\mu\in Q_i(\sigdiff(I))$.
\item If there is no body-homomorphism from $Q_i$ to $Q_1$, $Q_i(\sigdiff(I))=\emptyset$.
\item If there is no homomorphism from $Q_i$ to $Q_1$, given $\mu\in Q_1(\sigdiff(I))\cup Q_i(\sigdiff(I))$, it is possible to determine in constant time whether $\mu\in Q_1(\sigdiff(I))$.
\end{itemize}
\end{itemize}
\end{lemma}
\begin{proof}
We define $\sigdiff$ to assign each variable of $Q_1$ with a different and disjoint domain by concatenating the variable names to the values in their corresponding relations.
For every atom $R(v_1,\cdots,v_m)$ in $Q_1$ and tuple $(c_1,\cdots,c_m)\in R^I$, we add the tuple $((c_1,v_1),\ldots,(c_m,v_m))$ to  $R^{\sigdiff(I)}$. All relations that do not appear in $Q_1$ are left empty.
We claim that the results of $Q_1$ over the original instance are exactly the same as over our construction if we omit the variable names. That is, we define $\taudiff : \idom\times\ivar(Q_1) \rightarrow \idom$ as $\taudiff((c,v))=c$, and show that $Q_1(I)=\taudiff(Q(\sigdiff(I)))$. Note the $\sigdiff$ and $\taudiff$ can be computed in linear and constant time respectively.

We first prove that $Q_1(I)=\taudiff(Q_1(\sigdiff(I)))$.
The first direction is trivial:
if $\nu |_{\free(Q_1)}\in Q_1(\sigdiff(I))$, then for every atom $R(\vec{v})$ in $Q_1$, $\nu(\vec{v})\in R^{\sigdiff(I)}$. By construction, $\taudiff(\nu(\vec{v}))\in R^{I}$, and therefore $\taudiff\circ\nu|_{\free(Q_1)}\in Q_1(I)$.
We now show the opposite direction.
If $\mu |_{\free(Q_1)}\in Q_1(I)$, then for every atom $R(v_1,\cdots,v_m)$ in $Q_1$, $(\mu(v_1),\ldots,\mu(v_m))\in R^I$. By construction, $((\mu(v_1),v_1),\ldots,(\mu(v_m),v_m))\in R^{\sigdiff(I)}$. By defining $f_\mu: \ivar(Q_1) \rightarrow\idom\times\ivar(Q_1)$ as $f_\mu(u)=(\mu(u),u)$,
we have $f_\mu\in Q_1(\sigdiff(I))$. Since $\taudiff\circ f_\mu = \mu$, we have that $\mu |_{\free(Q_1)}\in \taudiff(Q_1(\sigdiff(I)))$, and this concludes that $Q_1(I)\subseteq\taudiff(Q_1(\sigdiff(I)))$.

We now show that $Q_i(\sigdiff(I))=\emptyset$ if there is no body-homomorphism from $Q_i$ to $Q_1$.
Assume by contradiction that there exists such $\mu |_{\free(Q_i)}\in Q_i(\sigdiff(I))$.
This means that for every atom $R(\vec{v})$ in $Q_i$, $\mu(\vec{v})\in R^{\sigdiff(I)}$.
By construction, $\mu(\vec{v})$, like all tuples in $R^{\sigdiff(I)}$, is of the form $((c_1,v_1),\ldots,(c_m,v_m))$ such that $R(v_1,\ldots,v_m)$ is an atom in $Q_1$.
Define $\eta: \idom \times \ivar(Q_1) \rightarrow\ivar(Q_1)$ as $\eta(c,v)=v$.
We have that for every atom $R(\vec{v})$ in $Q_i$,  $R(\eta(\mu(\vec{v})))$ is an atom  in $Q_1$. This means that $\eta\circ\mu$ is a body-homomorphism from $Q_i$ to $Q_1$, which is a contradiction.

It is left to show that, if there is no homomorphism from $Q_i$ to $Q_1$, given $\mu\in Q_1(\sigdiff(I))\cup Q_i(\sigdiff(I))$, it is possible to determine in constant time whether $\mu\in Q_1(\sigdiff(I))$. As we showed, $Q_i$ has answers over $\sigdiff(I)$ only if there is a body-homomorphism $h$ from it to $Q_1$. Since this is not a full homomorphism, $h(\free(Q_i))\neq\free(Q_1)$.
Note that by definition of UCQs, the free variables are the same for all CQs in the union, and in particular $\free(Q_i)=\free(Q_1)$.
If $\mu$ is an answer to $Q_1$, then applying $\eta\circ\mu$ on $\free(Q_1)$ will result in $\free(Q_1)$.
Otherwise (if $\mu$ is an answer to $Q_i$), then applying $\eta\circ\mu$ on $\free(Q_1)=\free(Q_i)$ will result in $h(\free(Q_i))$.
Since $h(\free(Q_i))\neq\free(Q_1)$, this helps us distinguish the answers:
apply $\eta\circ\mu$ on the $\free(Q_1)$; if this results in $\free(Q_1)$, then $\mu$ is an answer to $Q_1$; otherwise, $\mu$ is an answer to $Q_i$.
\end{proof}

\begin{example}
Consider the union of $Q_1(x)\leftarrow R(x,y)$ with $Q_2(x)\leftarrow R(y,x)$ and $Q_3(x)\leftarrow R(x,x)$.
If the input database $I$ has $R^I=\{(a,a),(a,b)\}$, then all three queries have $a$ as an answer, and $Q_2$ also has $b$.
Applying the construction from Lemma~\ref{lemma:diff-domain} results in $R^{\sigdiff(I)}=\{((a,x),(a,y)),((a,x),(b,y))\}$.
Then, we have that $Q_3(\sigdiff(I))=\emptyset$ as there is no body-homomorphism from $Q_3$ to $Q_1$.
There is a body-homomorphism from $Q_2$ to $Q_1$ but no such full homomorphism.
We have that $Q_2(\sigdiff(I))=\{(a,y),(b,y)\}$ and $Q_1(\sigdiff(I))=\{(a,x)\}$, and so the answers can be distinguished by the concatenated variable name.
Thus, if we are interested only in answers to $Q_1$ but we only have the ability to compute the entire UCQ, we can use this construction, ignore the answers containing $y$, and project out the variable names. This will result in  the answer $a$ alone.
\end{example}

The following lemma identifies cases where we can encode any arbitrary instance of a CQ to an instance of the union containing it, such that no other CQ in the union returns results.

\begin{lemma}\label{lemma:body-homo}
Let $Q$ be a UCQ, and let $Q_1\in Q$ such that for all $Q_i\in Q\setminus \{Q_1\}$ there is no body-homomorphism from $Q_i$ to $Q_1$.
Then $\pEnum{Q_1}\leq_e\pEnum{Q}$.
\end{lemma}
\begin{proof}
This reduction can be performed using the mappings $\sigdiff$ and $\taudiff$ defined in Lemma~\ref{lemma:diff-domain}.
For every $Q_i\in Q\setminus \{Q_1\}$, there is no body-homomorphism from $Q_i$ to $Q_1$, and therefore $Q_i(\sigdiff(I))=\emptyset$.
We now have that $\taudiff(Q(\sigdiff(I)))=\bigcup_{Q_i\in Q}{\taudiff(Q_i(\sigdiff(I)))}=\taudiff(Q_1(\sigdiff(I)))$. Since we also know that $\taudiff(Q_1(\sigdiff(I)))=Q_1(I)$, this concludes our reduction.
\end{proof}

The lemma above implies that if there is an intractable CQ in a union where no other CQ maps to it via a body-homomorphism, then the entire union is intractable. This also captures cases such as a union of CQs where one of them is hard, and
the others contain a relation that does not appear in the hard CQ.

Using the same reduction, a similar statement with relaxed requirements can be made in case it is sufficient to consider the decision problem.

\begin{lemma}\label{lemma:decision-reduction}
Let $Q$ be a UCQ, and let $Q_1\in Q$ such that for all $Q_i\in Q$,
 either there exists no body-homomorphism from $Q_i$ to $Q_1$,
 or $Q_1$ and $Q_i$ are body-isomorphic.
 Then $\qDecide{Q_1}\leq\qDecide{Q}$ via a linear-time many-one reduction.
\end{lemma}
\begin{proof}
We use the encoding from Lemma~\ref{lemma:diff-domain}.
We know that $Q_i(\sigdiff(I))=\emptyset$ for every CQ $Q_i$ with no body-homomorphism to $Q_1$. We claim now that a CQ $Q_j$ which is body-isomorphic to $Q_1$ has an answer iff $Q_1$ has an answer. Therefore $Q(\sigdiff(I))\neq\emptyset$ iff $Q_1(\sigdiff(I))\neq\emptyset$.
We know that $Q_1$ retains the same answers under this encoding, so in particular $Q_1(\sigdiff(I))=\emptyset$ if and only if $Q_1(I)=\emptyset$.
Overall this shows that $Q(\sigdiff(I))\neq\emptyset$ if and only if $Q_1(I)\neq\emptyset$.

We now show formally that for body-isomorphic CQs $Q_1$ and $Q_2$ and database $I$,
$Q_1(I)\neq\emptyset$ iff $Q_2(I)\neq\emptyset$.
Let $h$ be a body-homomorphism from $Q_2$ to $Q_1$. That is, for every atom $R(\vec{v})\in Q_2$, we have $R(h(\vec{v}))\in Q_1$.
If $Q_1(I)\neq\emptyset$, then there exists $\mu|_{\free(Q_1)}\in Q_1(I)$, and for every atom $R(h(\vec{v}))\in Q_1$,  we have 
$\mu(h(\vec{v}))\in R^I$.
This means that $\mu\circ h$ is a homomorphism from $Q_2$ to $I$, and $\mu\circ h|_{\free(Q_2)}\in Q_2(I)$. So, $Q_2(I)\neq\emptyset$.
Since there is also a body-homomorphism from $Q_1$ to $Q_2$, we can show in the same way that if $Q_2(I)\neq\emptyset$ then $Q_1(I)\neq\emptyset$.
\end{proof}

From Theorem~\ref{theorem:originalDichotomy} it follows that deciding whether a cyclic CQ has any answers cannot be done in linear time (assuming \hyperclique). 
Following Lemma~\ref{lemma:decision-reduction}, if a UCQ $Q$ contains a cyclic $Q_1$, and the conditions of Lemma~\ref{lemma:decision-reduction} are satisfied with respect to this CQ, then the entire union cannot be decided in linear time,
and thus $\pEnum{Q}\not\in\DelayClin$.

\subsection{Unions of Intractable CQs}\label{sec:intractables}

We now discuss unions containing only CQs classified as hard according to Theorem~\ref{theorem:originalDichotomy}. Recall that these are called \emph{difficult CQs}, and they are self-join-free CQs and not free-connex.
The following lemma can be used to identify a CQ on which we can apply Lemma~\ref{lemma:body-homo} or Lemma~\ref{lemma:decision-reduction}. 

\begin{lemma}\label{lemma:qinhards}
	Let $Q$ be a UCQ. There exists a query $Q_1\in Q$ such that for all $Q_i\in Q$ either there is no body-homomorphism from $Q_i$ to $Q_1$ or $Q_1$ and $Q_i$ are 
body-homomorphically equivalent.
\end{lemma}
\begin{proof}
Consider a longest sequence $(Q^1,\ldots,Q^m)$ of CQs from $Q$ such that for every $2\leq j \leq m$ there is a body-homomorphism from $Q^{j}$ to $Q^{j-1}$, but no body-homomorphism in the opposite direction. We claim that $Q_1=Q^m$ satisfies the conditions of the lemma.

First we show that such a sequence exists. We denote the body-homomorphism from $Q^{j}$ to $Q^{j-1}$ by $\mu^j$.
It is not possible that the same query appears twice in the sequence: if $Q^i=Q^j$ where $j>i$, then there is a mapping $\mu^{i+2}\circ\ldots\circ\mu^{j}$ from $Q^j=Q^i$ to $Q^{i+1}$, in contradiction to the definition of the sequence. Therefore, $m\leq|Q|$, and a longest sequence exists.
We now show that $Q^m$ satisfies the requirements.
First consider some $Q^j\in\{Q^1,\ldots,Q^{m-1}\}$.  There is a body-homomorphism from $Q^m$ to $Q^j$ which is the composition $\mu^{j+1}\circ\ldots\circ\mu^m$. Therefore, either there is no body-homomorphism from $Q^j$ to $Q^m$, or $Q^m$ and $Q^j$ are body-homomorphically equivalent.
In addition, $Q^m$ is body-homomorphically equivalent to itself with the identity mapping serving as the body-homomorphisms.
It is left to consider CQs that are not on the sequence. Let $Q_i\in Q\setminus\{Q^1,\ldots,Q^m\}$. If there is no body-homomorphism from $Q_i$ to $Q^m$, then we are done. Otherwise, if there is also no body-homomorphism from $Q_1$ to $Q^m$, then $(Q^1,\ldots,Q^m,Q_i)$ is a longer sequence, contradicting the maximality of the sequence we started with. Therefore, in this case $Q^m$ and $Q_i$ are body-homomorphically equivalent.
Overall we showed that for each query in $Q$ either there is no body-homomorphism from this query to $Q^m$ or these two CQs are body-homomorphically equivalent.
\end{proof}

Using the results obtained so far, we deduce a characterization of all cases of a union of
difficult CQs, except those that contain a pair of body-isomorphic acyclic CQs.

\begin{theorem}\label{theorem:union_of_intractable}
Let $Q$ be a UCQ of difficult CQs that does not contain two body-isomorphic acyclic CQs. Then, $Q\not\in\DelayClin$, assuming \matmul~and \hyperclique.
\end{theorem}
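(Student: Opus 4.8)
The plan is to apply Lemma~\ref{lemma:qinhards} to single out a CQ $Q_1\in Q$ enjoying the following property: for every $Q_i\in Q$, either there is no body-homomorphism from $Q_i$ to $Q_1$, or $Q_1$ and $Q_i$ are body-isomorphic. Since every CQ in $Q$ is intractable, $Q_1$ is in particular self-join free and not free-connex, so by Theorem~\ref{theorem:originalDichotomy} it is either cyclic, or acyclic-but-not-free-connex. I would then handle these two cases separately, in each case reducing a known hard problem on $Q_1$ to $Q$ and appealing to the fact that such reductions rule out membership in $\DelayClin$.

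In the cyclic case, the property of $Q_1$ furnished by Lemma~\ref{lemma:qinhards} is precisely the hypothesis of Lemma~\ref{lemma:decision-reduction}, so $\qDecide{Q_1}\leq\qDecide{Q}$ via a linear-time many-one reduction. By part~(3) of Theorem~\ref{theorem:originalDichotomy}, $\qDecide{Q_1}$ cannot be solved in linear time assuming \hyperclique; hence neither can $\qDecide{Q}$. Since any $\DelayClin$ algorithm decides emptiness within its linear preprocessing plus one delay, this already forces $\pEnum{Q}\not\in\DelayClin$. Note that here the body-isomorphic escape clause causes no trouble, as Lemma~\ref{lemma:decision-reduction} explicitly tolerates body-isomorphic CQs.

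In the acyclic case I would instead invoke Lemma~\ref{lemma:body-homo}, which yields the stronger exact reduction $\pEnum{Q_1}\leq_e\pEnum{Q}$, and therefore $\pEnum{Q}\not\in\DelayClin$ via part~(2) of Theorem~\ref{theorem:originalDichotomy} (assuming \matmul), since exact reductions preserve membership in $\DelayClin$. The catch is that Lemma~\ref{lemma:body-homo} requires the \emph{strict} condition that no $Q_i\neq Q_1$ admits a body-homomorphism to $Q_1$, whereas Lemma~\ref{lemma:qinhards} only excludes such body-homomorphisms up to body-isomorphism. This is exactly where the hypothesis that $Q$ contains no two body-isomorphic acyclic CQs is used: I would argue that a body-isomorphism between self-join-free CQs induces an isomorphism of their associated hypergraphs (each relation symbol occurs in a unique atom, so the two body-homomorphisms must send the $R$-atom of one query coordinate-wise to the $R$-atom of the other, forcing them to be mutually inverse bijections on variables), and hence preserves acyclicity. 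Consequently any $Q_i$ body-isomorphic to the acyclic $Q_1$ would itself be acyclic, so by the hypothesis the only such $Q_i$ is $Q_1$ itself. Thus the body-isomorphic alternative of Lemma~\ref{lemma:qinhards} is vacuous here, leaving precisely the condition needed by Lemma~\ref{lemma:body-homo}.

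I expect the main obstacle to be exactly this reconciliation between the two lemmas in the acyclic case: verifying that body-isomorphism preserves acyclicity for self-join-free CQs, so that the weaker output of Lemma~\ref{lemma:qinhards} can be upgraded to the strict input demanded by Lemma~\ref{lemma:body-homo}. Everything else is bookkeeping: the cyclic case is routine given Lemma~\ref{lemma:decision-reduction}, and in both cases the lower bound on $Q_1$ transfers to $Q$ through the respective reduction.
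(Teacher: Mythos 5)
Your proposal is correct and follows essentially the same route as the paper: pick $Q_1$ via Lemma~\ref{lemma:qinhards}, use Lemma~\ref{lemma:decision-reduction} with part~(3) of Theorem~\ref{theorem:originalDichotomy} in the cyclic case, and Lemma~\ref{lemma:body-homo} with part~(2) in the acyclic case. Your explicit verification that body-isomorphism between self-join-free CQs preserves acyclicity (so the body-isomorphic alternative of Lemma~\ref{lemma:qinhards} is ruled out by the hypothesis) is exactly the step the paper's proof leaves implicit, and your argument for it is sound.
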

\begin{proof}
Let $Q_1$ be a CQ in $Q$ given by Lemma~\ref{lemma:qinhards}. 
By definition, difficult CQs are self-join-free, and so if two CQs in $Q$ are body-homomorphically equivalent, they are also body-isomorphic.
We treat the two possible cases of the structure of $Q_1$. 
In case $Q_1$ is acyclic, since we know that $Q$ does not contain body-isomorphic acyclic CQs, then for all $Q_i\in Q\setminus\{Q_1\}$ there is no body-homomorphism from $Q_i$ to $Q_1$. According to Lemma~\ref{lemma:body-homo}, $\pEnum{Q_1}\leq_e\pEnum{Q}$.
Since $Q_1$ is self-join-free acyclic non-free-connex, we have that $\pEnum{Q_1}\not\in\DelayClin$ assuming \matmul. Therefore $\pEnum{Q}$ is not in $\DelayClin$ either.
In case $Q_1$ is cyclic,
we use Lemma~\ref{lemma:decision-reduction} to conclude that $\qDecide{Q_1}\leq\qDecide{Q}$.
According to Theorem~\ref{theorem:originalDichotomy}, since $Q_1$ is self-join-free cyclic, $\qDecide{Q_1}$ cannot be solved in linear time assuming \hyperclique. Therefore $\qDecide{Q}$ cannot be solved in linear time either, and $\pEnum{Q}\not\in\DelayClin$.
\end{proof}

In the next example, we demonstrate how the reductions described in Lemma~\ref{lemma:decision-reduction} and Theorem~\ref{theorem:originalDichotomy} combine in Theorem~\ref{theorem:union_of_intractable}.

\begin{example}\label{example:intractables}
Consider the UCQ $Q=Q_1\cup Q_2\cup Q_3$ with
\begin{align*}
	Q_1(x,y)&\leftarrow R_1(x,y),R_2(y,u),R_3(x,u),\\
	Q_2(x,y)&\leftarrow R_1(y,v),R_2(v,x),R_3(y,x),\\
	Q_3(x,y)&\leftarrow R_1(x,z),R_2(y,z).
	\end{align*}
The queries $Q_1$ and $Q_2$ are cyclic, and $Q_3$ is acyclic but not free-connex.
This union is intractable according to Theorem~\ref{theorem:union_of_intractable}.
Note that $Q_1$ and $Q_2$ are body-isomorphic, but there is no body-homomorphism from $Q_3$ to $Q_1$.
The proof of Theorem~\ref{theorem:originalDichotomy} states the following:
If $\pEnum{Q_1}\in\DelayClin$, then given an input graph $G$, we can use $Q_1$ to decide the existence of triangles in $G$ in time $O(n^2)$, in contradiction to \hyperclique.
The same holds true for $\pEnum{Q}$.
For every edge $(u,v)$ in $G$ with $u<v$ we add $((u,x),(v,y))$ to $R_1^I$,  $((u,y),(v,z))$ to $R_2^I$ and $((u,x),(v,z))$ to $R_3^I$.
The query detects triangles: for every triangle $a,b,c$ in $G$ with $a<b<c$, the query $Q_1$ returns $((a,x),(b,y))$.
The union only returns answers corresponding to triangles:
\begin{itemize}
\item For every answer $((d,x),(e,y))$ to $Q_1$, there exists some $f$ such that $d,e,f$ is a triangle in $G$ with $d<e<f$.
\item For every answer $((g,z),(h,x))$ to $Q_2$, there exists some $i$ such that $g,h,i$ is a triangle in $G$ with $h<i<g$.
\item The query $Q_3$ returns no answers over $I$.
\qed
\end{itemize}
\end{example}

Theorem~\ref{theorem:union_of_intractable} 
does not cover the case of a UCQ containing acyclic non-free-connex queries with isomorphic  bodies. Since this requires a more intricate analysis, we first restrict ourselves to such unions of size two.

\subsection{Unions of Two Body-Isomorphic CQs}\label{sec:2-body-iso}

Consider a set of body-isomorphic CQs. As all of them have the same structure, either 
every CQ in this set is cyclic, or every CQ is acyclic.
In the case of a union of two cyclic CQs, the UCQ is intractable according to Theorem~\ref{theorem:union_of_intractable}.
So in this section, we discuss the union of body-isomorphic acyclic CQs. 
Note that, unlike the previous section, we allow a CQ in the union to be free-connex.
We first introduce a new notation for body-isomorphic UCQs that we use hereafter in order to simplify the discussion on the query structure.

Consider a UCQ of the form $Q_1\cup Q_2$,
where 
there exists a body-isomorphism
$h$ from $Q_2$ to $Q_1$. That is, the CQs have the structure:
	\begin{align*}
	Q_1(\vec{v}_1)&\leftarrow R_1(h(\vec{w}_1)),\ldots,R_n(h(\vec{w}_n)),\\
	Q_2(\vec{v}_2)&\leftarrow R_1(\vec{w}_1),\ldots,R_n(\vec{w}_n).
	\end{align*}
By applying $h^{-1}$ to the variables of $Q_1$, we can rewrite $Q_1$ as
$Q_1(h^{-1}(\vec{v}_1))\leftarrow R_1(\vec{w}_1),\ldots,R_n(\vec{w}_n)$.
Every mapping $\mu$ that is an answer to $Q_1$ in its new form can easily be converted to an answer to the original $Q_1$ by composing it with $h$. So for simplicity we can assume that $Q_1$ is given in the modified form.
Since now the two CQs have exactly the same body, we can treat the UCQ as a query with one body and two heads:
$$Q_1(h^{-1}(\vec{v}_1)),Q_2(\vec{v}_2)\leftarrow R_1(\vec{w}_1),\ldots,R_n(\vec{w}_n)$$
We use this notation from now on for UCQs containing only body-isomorphic CQs.
Note that when treating a UCQ as one CQ with several heads, we can use the notation $\atoms(Q)$, 
as the atoms are the same for all CQs in the union, and the notation $\free(Q_i)$,
as the free variables may differ between different queries $Q_i$ in the union.
With this notation at hand, we now inspect some examples of two body-isomorphic acyclic CQs.

\begin{example}\label{example:body-iso-matmul}
Consider $Q=Q_1\cup Q_2$ with
	\begin{align*}
	Q_1(x,y,v)&\leftarrow R_1(x,z),R_2(z,y),R_3(y,v),R_4(v,w)\text{ and }\\
	Q_2(x,y,v)&\leftarrow R_1(w,v),R_2(v,y),R_3(y,z),R_4(z,x).	
	\end{align*}
Since $Q_1$ and $Q_2$ are body-isomorphic,  $Q$ can be rewritten as
\begin{align*}
Q_1(w,y,z),Q_2(x,y,v)&\leftarrow R_1(w,v),R_2(v,y),R_3(y,z),R_4(z,x)
\end{align*}
In this case we can use the same approach used for single CQs in Theorem~\ref{theorem:originalDichotomy}, and show that this UCQ is not in $\DelayClin$ assuming \matmul. Let $A$ and $B$ be binary representations of
Boolean $n\times n$ matrices as explained in the beginning of this section. Define 
a database instance $I$ with $R_1^I = A$, $R_2^I=B$, $R_3^I=\{1,\ldots,n\}\times \{\bot\}$ and $R_4^I=\{(\bot,\bot)\}$.
$Q_1(I)$ corresponds to the answers of $AB$,
and $|Q_2(I)|=\calO(n^2)$.
Assume by contradiction that $\pEnum{Q}\in\DelayClin$. Then we find $Q(I)$ in $\calO(n^2)$ time.
We can distinguish the answers of $Q_1$ from those of $Q_2$ since the possible assignments to the variables in different queries are from disjoint domains: given an answer $\mu \in Q(I)$, we have that $\mu\in Q_1(I)$ if and only if $\mu(v)=\bot$.
This means we can solve matrix multiplication in time $\calO(n^2)$, which is a contradiction to \matmul.
\qed
\end{example}

A union of two difficult body-isomorphic acyclic CQs
may also be tractable. In fact, by adding a single variable to the heads in
Example~\ref{example:body-iso-matmul}, we obtain a tractable UCQ.

\begin{example}\label{example:hards-are-easy}
Let $Q$ be the UCQ
\begin{align*}
Q_1(w,y,x,z),Q_2(x,y,w,v)\leftarrow R_1(w,v),R_2(v,y),R_3(y,z),R_4(z,x)
\end{align*}
Both CQs are acyclic non-free-connex.
As $Q_2$ supplies the variables $\{v,w,y\}$ and $Q_1$
supplies $\{x,y,z\}$, both CQs have free-connex union extensions:
\begin{align*}
Q_1^+(w,y,x,z)\leftarrow R_1(w,v),R_2(v,y),R_3(y,z),R_4(z,x), P_1(v,w,y),\\
Q_2^+(x,y,w,v)\leftarrow R_1(w,v),R_2(v,y),R_3(y,z),R_4(z,x), P_2(x,y,z).	
\end{align*}
By Theorem~\ref{thm:positive} it follows that $\pEnum{Q}\in\DelayClin$.
\qed
\end{example}

Intuitively, the reason why the reduction of Example~\ref{example:body-iso-matmul} fails in Example~\ref{example:hards-are-easy} is the fact that all the variables of the free-paths in one CQ, which are used to encode matrix multiplication, are free in the other CQ. Indeed, if we encode matrices $A$ and $B$ to the relations of the free path $w,v,y$ in $Q_1$, there can be $n^3$ answers to $Q_2$. The answer set in this case is too large to contradict
the assumed lower bound for matrix multiplication. 
As it turns out, there are cases where we cannot reduce matrix multiplication to a union in this manner, and yet we can show that it is intractable using an alternative problem.

\begin{example}\label{example:acyclic-4clique}
	Let $Q$ be the UCQ
\begin{equation*}
	Q_1(x,y,u), Q_2(x,y,z)\leftarrow R_1(x,u,z),R_2(y,u,z).	
	\end{equation*}
	This union is intractable under the \fourclique~assumption.
	For a given graph $G=(V,E)$ with $|V|=n$,
	we compute the set $T$ of all triangles in $G$ in time $O(n^3)$.
	Define a database instance $I$ as
	$R_1^I=R_2^I=T$. For every output $\mu|_{\free(Q_i)}$ in $Q(I)$ with 
	$i\in\{1,2\}$, we know that $(\mu(x),\mu(u),\mu(z))$ and $(\mu(y),\mu(u),\mu(z))$ are triangles. If $\mu(x)\neq\mu(y)$, this means that
	$\mu((x,y))\in E$ if and only if $(\mu(x),\mu(y),\mu(u),\mu(z))$ forms a 4-clique (see Figure~\ref{figure:fourclique}).
	Moreover, for every 4-clique $(a,b,c,d)$ in the graph, we have $(a,b,c)\in Q_1(I)$, so this will detect all 4-cliques.
	Since there are $O(n^3)$ answers to $Q$, if $\pEnum{Q}\in\DelayClin$,
	we can check whether $\mu((x,y))\in E$ for every answer in $Q(I)$, and determine whether a $4$-clique appears in $G$ in time $\calO(n^3)$.
	\qed
\end{example}

\begin{figure}
\begin{tikzpicture}
[
    dot/.style={draw, circle, inner sep=1pt, fill= black},
    he/.style={draw, rounded corners,inner sep=0pt},        
    ce/.style={draw,dashed, rounded corners=2pt}, 
]

\node (x) at (0,0.5) {$x$};
\node (z) at (1,1) {$z$};
\node (w) at (1,0) {$w$};
\node (y) at (2,0.5) {$y$};

\draw [rounded corners = 3pt] 
			($(x.south west) + (-0.05,-0.05)$) -- ($(x.north west) + (-0.05,0.05)$) --
								 ($(z.north west) + (-0.05,0.05)$) -- ($(z.north east) + (0.05,0.05)$) --
								  ($(w.south east) + (0.05,-0.05)$) --  ($(w.south west) + (-0.05,-0.05)$) -- cycle;
\draw [rounded corners = 3pt] 
			($(y.north east) + (0.05,0.05)$) -- ($(y.south east) + (0.05,-0.05)$) --
			 ($(w.south east) + (0.00,-0.00)$) --  ($(w.south west) + (-0.00,-0.00)$) --
			 ($(z.north west) + (-0.00,0.00)$) -- ($(z.north east) + (0.00,0.00)$) -- cycle;
													 													 

\node[dot, label=$\mu(x)$] (x_1) at (4,0.5) {};
\node[dot, label=right:$\mu(w)$] (w_1) at (4.7,0) {};
\node[dot, label=$\mu(z)$] (z_1) at (4.7,1) {};	
\node[dot, label=right:$\mu(y)$] (y_1) at (5.4,0.5) {};

\draw (x_1) -- (z_1);
\draw 	  (x_1) -- (w_1);
\draw 	  (z_1) -- (y_1);
\draw 	  (z_1) -- (w_1);
\draw 	  (w_1) -- (y_1);


\node(H) at (-1, 0.5) {$\calH(Q):$};
\node (T) at (3,0.5) {$\mu(\vec{v}):$};
\end{tikzpicture}
\caption{
If $\mu|_{\free(Q_i)}\in Q(I)$, the induced subgraph of $\mu(w,x,y,z)$ forms a clique where one edge might be missing.}
\label{figure:fourclique}
\end{figure}

Note that we can use the \fourclique~assumption in Example~\ref{example:acyclic-4clique}, since, in addition to the free-path variables, there is another variable in both free-path relations.
We now generalize the observations from the examples.

\begin{definition}\label{def:two-guarded}
Let $Q=Q_1\cup Q_2$ be a UCQ where $Q_1$ and $Q_2$ are body-isomorphic.
\begin{itemize}
\item 	$Q_1$ is said to be \emph{free-path guarded} if for every free-path $P$ in $Q_1$, we have that $\var(P)\subseteq\free(Q_2)$.
\item Consider a path
$P=(u_1,\ldots u_k)$ in $Q_1$. Two atoms $R_1(\vec{v}_1)$
and $R_2(\vec{v}_2)$ of $Q_1$ are called \emph{subsequent $P$-atoms} if
$\{u_{i-1},u_{i}\}\subseteq\vec{v}_1$ and $\{u_i,u_{i+1}\}\subseteq\vec{v}_2$ for some $1<i<k$.
\item $Q_1$ is said to be \emph{bypass guarded} if for every free-path $P$ in $Q_1$ and variable $u$ that appears in two subsequent $P$-atoms, we have that $u\in\free(Q_2)$.
\end{itemize}
Note that every free-connex CQ is trivially free-path guarded and bypass guarded.
\end{definition}

Let us demonstrate the definitions on the recent examples.
The CQ $Q_1$ of Example~\ref{example:hards-are-easy} is both free-path guarded and bypass guarded.
The only free-path it contains is $P=(w,v,y)$, and $Q_1$ is free-path guarded since $\{w,v,y\}\subseteq\free(Q_2)$.
The only subsequent $P$-atoms are $R_1(w,v)$ and $R_2(v,y)$, and $Q_1$ is bypass guarded since $v\in\free(Q_2)$.
The query $Q_1$ of Example~\ref{example:body-iso-matmul} is not free-path guarded as the variables of the free-path $P'=(w,v,y)$
of $Q_1$ are not contained in $\free(Q_2)$.
The query $Q_1$ of Example~\ref{example:acyclic-4clique} is not bypass guarded.
Consider the free-path $P''=(x,z,y)$ of $Q_1$.
The atoms $R_1(x,u,z)$ and $R_2(y,u,z)$ are subsequent $P''$ atoms.
Since $u$ is contained in both atoms but not in $\free(Q_2)$, we have that $Q_1$ is not bypass guarded.

In the following two lemmas, we prove that if some CQ in a union is either
not free-path guarded or bypass guarded, then the UCQ is intractable.
As in the characterization for CQs in Theorem~\ref{theorem:originalDichotomy}, our lower bounds apply only when the CQs are self-join-free; with this restriction, we can assign different atoms with different relations.
The next lemma shows that the reduction in Example~\ref{example:body-iso-matmul},
where we can use the fact that $Q_1$ is not free-path guarded to
compute matrix multiplication, can be constructed in the general case as well.

\begin{lemma}\label{lemma:guardedTwoQueries}
	Let $Q=Q_1\cup Q_2$ be a UCQ of self-join-free body-isomorphic acyclic CQs. If $Q_1$ is not free-path guarded,
	then $\pEnum{Q}$ is not in $\DelayClin$, assuming \matmul.
\end{lemma}
\begin{proof}
Let $A$ and $B$ be Boolean $n\times n$ matrices represented as binary relations,
i.e. $A\subseteq\{1,\ldots,n\}^2$, where $(a,b)\in A$ means that the entry in the $a$th row
and $b$th column is $1$. Further let $P=(z_0,\ldots,z_{k+1})$ be a free-path in $Q_1$ that is
not guarded, and let $i$ be the minimal index such that $z_i\not\in\free(Q_2)$.
We assign the path variables to three roles as follows.
If $0<i<k+1$, we define $V_a=\{z_0,\ldots,z_{i-1}\}$, $V_b=\{z_i\}$ and $V_c=\{z_{i+1}\ldots,z_{k+1}\}$. Otherwise (if $i$ is $0$ or $k+1$), we define $V_a=\{z_0\}$, $V_b=\{z_1,\ldots,z_k\}$ and $V_c=\{z_{k+1}\}$.
Note that, in both cases, there exists some $\alpha\in\{a,b,c\}$ such that $V_\alpha\cap\free(Q_2)=\emptyset$. Intuitively that means that some role is not guarded.
Note also that since $P$ is chordless and $k\geq 1$, there is no atom in $Q$ that contains both a variable from $V_a$ and a variable from $V_c$.
Thus we can partition the atoms into nonempty sets
$\calR_A=\{R(\vec{v})\in\atoms(Q)\mid V_c\cap \vec{v}=\emptyset\}$ and $\calR_B=\atoms(Q)\setminus\calR_A$, and we have that the atoms of $\calR_A$ do not contain variables of $V_c$, and the atoms of $\calR_B$ do not contain variables of $V_a$.

Given three values $(a,b,c)$ we define a function $\tau_{(a,b,c)}:\var(Q)\rightarrow\{a,b,c,\bot\}$ that assigns every variable with the value corresponding to its role:
	\begin{equation*}
	\tau_{(a,b,c)}(v) = \left\{
     \begin{array}{ll}
       a &\text{ if }v\in V_a,\\
       b &\text{ if }v\in V_b,\\
       c &\text{ if }v\in V_c,\\
       \bot & \text{ otherwise,}
     \end{array}
   \right.	
	\end{equation*}
For a vector $\vec{v}$, we denote by $\tau_{(a,b,c)}(\vec{v})$ the vector obtained by element-wise application of $\tau_{(a,b,c)}$.
We define a database instance $I$ over $Q$ as follows: For every atom $R(\vec{v})$, if $R(\vec{v})\in \calR_A$ we set
$R^I = \{\tau_{(a,b,\bot)}(\vec{v})\mid (a,b)\in A\}$,
and if $R(\vec{v})\in \calR_B$ we set 
$R^I = \{\tau_{(\bot,b,c)}(\vec{v})\mid (b,c)\in B\}$. 
Note that every relation is defined only once since $\calR_A$ and $\calR_B$ are disjoint and $Q$ is self-join-free.

Consider an answer $\mu\in Q(I)$.
In the case that $0<i<k+1$, we have that $\mu(z_0)=\cdots=\mu(z_{i-1})=a$,
$\mu(z_i)=b$ and $\mu(z_{i+1})=\cdots=\mu(z_{k+1})=c$ for some $(a,b)\in A$ and $(b,c)\in B$,
and in case that $i\in\{0,k+1\}$, we have that $\mu(z_0)=a$,
$\mu(z_1)=\cdots=\mu(z_k)=b$ and $\mu(z_{k+1})=c$ for some $(a,b)\in A$ and $(b,c)\in B$.
This is since the variables $z_i$ are connected via the path in both CQs.
In either case, $\mu(\free(Q_1))$ is a tuple only containing the values $a,c$ and $\bot$. So the answers to $Q_1$ represent the answer to the matrix multiplication task we started with. We now need to verify that the answers to $Q_2$ do not interfere with the reduction.
If $0<i<k+1$, $\mu(\free(Q_2))$ is a tuple only containing the values $\{a,c,\bot\}$; if $i=0$, it only contains $\{b,c,\bot\}$; and if $i=k+1$, it only contains $\{a,b,\bot\}$.
Thus the number of answers to $Q$ is at most of size $2n^2$.
Assume by contradiction that we can enumerate the solutions of $Q(I)$ with linear preprocessing
and constant delay.
To distinguish the answers of $Q_1$ from those of $Q_2$, we can concatenate the variable names to the values, as described in Lemma~\ref{lemma:diff-domain}.
That way, we can ignore the answers that correspond to the values $(a,b)$ or $(b,c)$, and use the $(a,c)$ pairs as the answers to
matrix multiplication. This solves matrix multiplication in $O(n^2)$ time, in contradiction to \matmul.
\end{proof}

In Example~\ref{example:acyclic-4clique}, we encounter a UCQ where
both CQs are free-path guarded, but $Q_1$ is not bypass guarded. We can encode \fourclique~in every
UCQ with this property.

\begin{lemma}\label{lemma:interleaving-paths}
	Let $Q=Q_1\cup Q_2$ be a UCQ of self-join-free body-isomorphic acyclic CQs.
	If $Q_1$ and $Q_2$ are free-path guarded and $Q_1$ is not bypass guarded,
	then $\pEnum{Q}\not\in\DelayClin$, assuming \fourclique.
\end{lemma}
\begin{proof}
	Let $G=(V,E)$ be a graph with $|V|=n$.
	We show how to solve the \fourclique~problem
	on $G$ in time $\calO(n^3)$ if $\pEnum{Q}$
	is in $\DelayClin$.
	Let $P$ be a free-path in $Q_1$ and let $u\not\in\free(Q_2)$ such that 
	$u$ appears in two subsequent P-atoms.
	
	We first claim that, under the conditions of this lemma,
	$P$ is of length $2$.	
	Let $P=(z_0,\ldots,z_{k+1})$ and $1\leq i \leq k$ such that
    $\{u,z_{i-1},z_{i}\}$ and $\{u,z_{i},z_{i+1}\}$ are contained in edges of $\calH(Q)$. 
	As $P$ is chordless, there is no edge containing $\{z_{i-1},z_{i+1}\}$ , thus
	the path $(z_{i-1},u,z_{i+1})$ is a chordless path. As $Q_1$ is free-path guarded,
	 $z_{i-1},z_{i+1}\in\free(Q_2)$ and since $u\not\in\free(Q_2)$,
	this is a free-path of $Q_2$. Since $Q_2$ is free-path guarded we have that $z_{i-1},z_{i+1}\in\free(Q_1)$.
	Since $P$ is a free-path in $Q_1$, the only variables of $P$ that are free in $Q_1$ appear in the edges of the path, and so $i=k=1$, and $P$ is of the form $(z_0,z_1,z_2)$.	
Therefore, there exist atoms $R_1$ and $R_2$ with $\{z_0,z_1,u\}\subseteq\var(R_1)$ and $\{z_1,z_2,u\}\subseteq\var(R_2)$.

Given three values $(a,b,c)$ we define a function $\tau_{(a,b,c)}:\var(Q)\rightarrow\{a,b,c\}$ as follows:
	\begin{equation*}
	\tau_{a,b,c}(v) = \left\{
     \begin{array}{ll}
       a &\text{ if }v\in\{z_0,z_2\},\\
       b &\text{ if }v=z_1,\\
       c &\text{ if }v=u,\\
       \bot & \text{ otherwise.}
     \end{array}
   \right.	
	\end{equation*}
For every atom $R(\vec{v})\in\atoms(Q)$, we define
$R^I=\{\tau_{a,b,c}(\vec{v})\mid (a,b,c)\text{ a triangle in }G\}$.
Every relation is defined only once since $Q$ is self-join-free.
Note that $|R^I|\in\calO(n^3)$, as there are at most $n^3$ triangles in $G$, and that we can construct $I$ with $\calO(n^3)$ time.

Consider an answer $\mu|_{\free(Q_1)}\in Q_1(I)$. Since $R_1$ and $R_2$ are atoms in $Q_1$, we are guaranteed that $(\mu(z_0),\mu(z_1),\mu(u))$
and $(\mu(z_1),\mu(z_2),\mu(u))$ form triangles in $G$. Therefore, the graph contains a 
4-clique if and only if there is an edge $(\mu(z_0),\mu(z_2))$.
As $z_0,z_2\in\free(Q_1)$ it suffices to check every $\mu|_{\free(Q_1)}\in Q_1(I)$
for this property.
Since $Q_1$ is free-path guarded, we know that $z_1$ is existential in $Q_1$ but free in $Q_2$. This means that $\free(Q_1)\neq\free(Q_2)$ and so there is no homomorphism from $Q_2$ to $Q_1$.
We can therefore use the mappings from Lemma~\ref{lemma:diff-domain} in order to distinguish the answers of $Q_1$ from those of $Q_2$.
We have that $\{z_0,z_1,z_2,u\}$ is neither contained in $\free(Q_1)$
nor in $\free(Q_2)$. Thus, $|Q(I)|\in\calO(n^3)$.
If $\pEnum{Q}$ is in $\DelayClin$, we can construct the database instance, compute $Q$,
 and check every answer to $Q_1$ for an edge of the form $(\mu(z_0),\mu(z_2))$ in total time $\calO(n^3)$, which contradicts \fourclique.
\end{proof}

In the next section, we show that Lemma~\ref{lemma:guardedTwoQueries} and Lemma~\ref{lemma:interleaving-paths} cover all intractable cases of the UCQs discussed in this section.

\subsection{Complete Classification of Fragments of UCQs}

We show next that the hardness proofs given above complete our tractability results into a dichotomy for the fragments of UCQs we examined. We first establish this for unions of two body-isomorphic acyclic CQs, and then conclude the same for unions of two intractable CQs.

To prove that any union of two body-isomorphic acyclic CQs that is not covered by Lemma~\ref{lemma:guardedTwoQueries} and Lemma~\ref{lemma:interleaving-paths} has a free-connex union extension, we need some observations regarding the place of appearance of relevant variables in join-trees. Recall that we call $(v_1,\ldots,v_n)$ a path of variables in a query $Q$ if
for all $1 \leq i < n$, there exists an atom $R(\vec{u_i})$ in $Q$ such that $\{v_i,v_{i+1}\}\subseteq \vec{v}$.

\begin{proposition}\label{prop:path-vars}
Let $(v_1,\ldots,v_n)$ be a path of variables in an acyclic query $Q$,
and let $A_s$ and $A_t$ be atoms containing $\{v_1,v_2\}$ and $\{v_{n-1},v_n\}$ respectively.
For all $1 \leq i < n$, the simple path between $A_s$ and $A_t$  on a join-tree of $Q$ contains an atom $R_i(\vec{u_i})$ such that $\{v_i,v_{i+1}\}\subseteq\vec{u_i}$.
\end{proposition}
\begin{proof}
We prove this by induction on $n$.
If $n=3$, this trivially holds as the endpoints, $A_s$ and $A_t$, contain the required variables.
We now assume this proposition holds for paths of length $n-1$ and show that it also holds for paths of length $n$ where $n\geq 4$.
Consider the simple path between $A_s$ and a node containing $\{v_{n-2},v_{n-1}\}$. Let $A_m$ be the first node on that path that contains $\{v_{n-2},v_{n-1}\}$.
Denote by $P_s$ the simple path between $A_s$ and $A_m$. Note that $A_m$ is the only node on $P_s$ that contains $\{v_{n-2},v_{n-1}\}$.
By the induction assumption, $P_s$ contains $\{v_i,v_{i+1}\}$ for all $1 \leq i < n-1$, and in particular it contains $\{v_{n-3},v_{n-2}\}$.
Denote by $P_t$ the simple path between $A_m$ and $A_t$, and denote by $P$ the concatenation of $P_s$ and $P_t$. The path $P$ goes between $A_s$ and $A_t$ and contains $\{v_i,v_{i+1}\}$ for all $1 \leq i < n$. We claim that $P$ is a simple path.
Assume by contradiction that $P$ is not simple. This means that there is a node $A_v$ that $P_s$ and $P_t$ share other than $A_m$. Due to the running intersection property, since $A_v$ is on the simple path between $A_m$ and $A_t$, $A_v$ contains $v_{n-1}$.
Denote by $A_{m-1}$ the node preceding $A_m$ on $P_s$. Due to the running intersection property, since $A_{m-1}$ is on the simple path between $A_v$ and $A_{m}$, $A_{m-1}$ contains $v_{n-1}$ too.
Since a node containing $\{v_{n-3},v_{n-2}\}$ is on $P_s$, and since $A_{m-1}$ is on the simple path between this node and $A_m$, we have that $A_{m-1}$ contains $v_{n-2}$.
Overall, we have that $A_{m-1}$ contains $\{v_{n-2},v_{n-1}\}$, which is a contradiction to the fact that $A_m$ is the only such node on $P_s$.
\end{proof}

\begin{proposition}\label{prop:var-on-path}
Let $(v_1,\ldots,v_n)$ be a path of variables in an acyclic query $Q$,
and let $A_s$ and $A_t$ be atoms containing $\{v_1,v_2\}$ and $\{v_{n-1},v_n\}$ respectively.
If $A_1$ and $A_2$ are two subsequent nodes on the simple path $P$ between $A_s$ and $A_t$  on a join-tree of $Q$, then there exists some $1 \leq i \leq n$ such that $v_i\in A_1\cap A_2$.
\end{proposition}
\begin{proof}
Denote the subpath of $P$ between $A_s$ and $A_1$ by $P_s$, and the subpath between $A_2$ and $A_t$ by $P_t$. 
Denote by $i$ the maximal index such that $v_i \in\var(P_s)$.
If $i=n$, note that an atom in $P_s$ and an atom in $P_t$ both contain $v_i$.
Otherwise, $i<n$. According to Proposition~\ref{prop:path-vars} and since all atoms of $P$ appear in either $P_s$ or $P_t$, an atom containing the variables $\{v_i,v_{i+1}\}$ must appear in $P_s$ or $P_t$.
It does not appear in $P_s$ because of the maximality of $i$, so it appears in $P_t$. In this case too, an atom in $P_s$ and an atom in $P_t$ both contain $v_i$.
Since $A_1$ and $A_2$ are on the path between these atoms, due to the running intersection property, $v_i\in A_1\cap A_2$.

\end{proof}

Using this proposition, we can prove the structural property that we need.

\begin{lemma}\label{lemma:fully-contracted}
Let $Q=Q_1\cup Q_2$ be a UCQ of body-isomorphic acyclic CQs, where $Q_1$ and $Q_2$ are free-path guarded, and $Q_1$ is bypass guarded,
and let $P=(z_0,\ldots,z_{k+1})$ be a free-path in $Q_1$.
There exists a join-tree $T$ for $Q$ with a subtree $T_P$ such that:
\begin{itemize}
\item $\var(P)\subseteq\var(T_P)$.
\item For every variable $u$ that appears in two different atoms of $T_P$:
\begin{itemize}
\item $u\in\free(Q_2)$.
\item There is an atom $R(\vec{v})$ in $T_P$ such that $u,z_i \in\vec{v}$ for some $0<i<k+1$.
 \end{itemize}
\end{itemize}
\end{lemma}

\begin{proof}
Recall that according to our notation, we assume that body-isomorphic CQs have exactly the same body. Thus, $T$ is also a join-tree for $Q_2$. In the following, we refer to the body of $Q$ when we make statements that apply to the bodies of both $Q_1$ and $Q_2$.

Consider a path $P_A=(A_1,\ldots,A_s)$ between two atoms on a join tree.
We define a \emph{contraction step} for a path of length $2$ or more:
 if there exists $j$ such that $A_j\cap A_{j+1} \subseteq A_1\cap A_s$, then remove the edge $(A_j,A_{j+1})$ and add the edge $(A_1,A_s)$.
A path on a join-tree is said to be \emph{fully-contracted} if none of its subpaths can be contracted.
Given any two atoms on a join-tree, it is possible to fully contract the path between them by performing any arbitrary sequence of contraction steps until it is no longer possible: the process will end as every contraction step reduces the length of the path.

We now claim that the graph $T'$ obtained from such a contraction step of a path $P_A$ on a join-tree $T$, remains a join-tree.
It is still a tree since it remains connected and with the same number of edges as before.
It is left to show the running intersection property.
We start with some observations regarding $T$. Since the running intersection property holds in $T$, for all $1\leq i \leq s$, $A_1\cap A_s \subseteq A_i$. Since $A_j\cap A_{j+1} \subseteq A_1\cap A_s$, we also have that $A_j\cap A_{j+1} \subseteq A_i$.
Now consider two nodes $B$ and $C$. We need to show that every node on the simple path between them in $T'$ contains $B\cap C$. If it is the same path as in $T$, then we are done. Otherwise, a path between $B$ and $C$ in $T'$ can be obtained by using the simple path between them in $T$ and replacing the edge $(A_j,A_{j+1})$ with $(A_{j+1},\ldots,A_s,A_1,\ldots,A_{j})$. The simple path between $B$ and $C$ is contained in this path. 
This means that atoms on the simple path between $B$ and $C$ in $T'$ are either: (1) on the simple path between $B$ and $C$ in $T$, and therefore contain $B\cap C$; (2) on the path $P_A$ and therefore contain $A_j\cap A_{j+1}$. Since $A_j$ and $A_{j+1}$ are on the path between $B$ and $C$ in $T$, we have that $B\cap C \subseteq A_j\cap A_{j+1}$, so in this case too, the atoms contain $B\cap C$.
This proves that the contracted graph is indeed a join-tree. By induction, if we fully contract a path on a join-tree, we still have a join-tree.

Now let $T$ be a join-tree of $Q$. We consider some path in $T$ between an atom containing $\{z_0,z_1\}$ and an atom containing $\{z_k,z_{k+1}\}$. Take the unique subpath of it containing only one atom with $\{z_0,z_1\}$ and one atom with $\{z_k,z_{k+1}\}$, and fully contract it. We denote this fully contracted subpath as $T_P=(A_1,\ldots,A_s)$.
Due to Proposition~\ref{prop:path-vars}, $\var(P)\subseteq\var(T_P)$.

First, we claim that every variable $u$ that appears in two or more atoms of $T_P$ is part of a chordless path from $z_0$ to $z_{k+1}$.
We first show a chordless path from $u$ to $z_{k+1}$. Denote the last atom on $T_P$ containing $u$ by $A_i$. 
If $A_i$ contains $z_{k+1}$, we have found the chordless path $(u,z_{k+1})$, and we are done. Otherwise, $A_i$ is not the last atom on $T_P$. It is also not the first atom on $T_P$, as another atom contains $u$. Consider the subpath $A_{i-1},A_i,A_{i+1}$. Since it is fully contracted, $A_i\cap A_{i+1} \not\subseteq A_{i-1}\cap A_{i+1}$. This means that there is a variable $u_{+1}$ in $A_i$ and in $A_{i+1}$ that does not appear in $A_{i-1}$.
Now consider the last atom containing $u_{+1}$, and continue with the same process iteratively until reaching $z_{k+1}$.
This results in a chordless path $u,u_{+1},\ldots,u_{+m}=z_{k+1}$ with $m\geq 1$.
Do the same symmetrically to find a chordless path $z_0=u_{-n},\ldots, u_{-1},u$ with $n\geq 1$.
We now claim that the concatenation $P_u=(u_{-n},\ldots, u_{-1},u,u_{+1},\ldots,u_{+m})$ is chordless.
We prove that $u_{-t}$ and $u_{+\ell}$ are not neighbors for all $1\leq t\leq n$ and $1\leq \ell\leq m$ by induction on $\ell+t$.
Out of the atoms containing $u$, the variable $u_{+1}$ only appears in the last atom by construction: if $u_{+1}=z_{k+1}$ this is true since $z_{k+1}$ appears only in one atom, and otherwise it is true because this is how we chose $u_{+1}$. Similarly $u_{-t}$ only appears in the first. Since there are at least two atoms of the path containing $u$, we have that $u_{-1}$ and $u_{+1}$ are not neighbors, and this proves the induction base.
Next, assume by way of contradiction that $u_{-t}$ and $u_{+\ell}$ are neighbors.
By using the induction assumption, we have that $u_{-t},\ldots,u_{-1},u,u_{+1},\ldots,u_{+\ell},u_{-t}$ is a chordless cycle of length four or more, contradicting the fact that $Q$ is acyclic and therefore chordal.
This proves that $u$ is part of the chordless path $P_u$ from $z_0$ to $z_{k+1}$.

Assume by contradiction that a variable $u\not\in\free(Q_2)$ appears in two distinct atoms of $T_P$. There is a chordless path from $z_0$ to $z_{k+1}$ that contains $u$.
Denote this path $P_u$. Take a subpath of $P_u$ starting with the last variable on $P_u$ before $u$ that is in $\free(Q_2)$, and ending with the first variable on $P_u$ after $u$ that is in $\free(Q_2)$.
Such variables exist on this path because $z_0,z_{k+1}\in\free(Q_2)$. This subpath is a free-path in $Q_2$, and since $Q_2$ is free-path guarded, $u\in\free(Q_1)$.
Next consider two neighboring atoms on $T_P$ that contain $u$.
According to Proposition~\ref{prop:var-on-path}, there exists some $z_i$ that appears in both atoms. Note that $i>0$ and $i<k+1$ since the path only contains one atom with $z_0$ and one atom with $z_{k+1}$.
Since $Q_1$ is bypass guarded and $u\not\in\free(Q_2)$, it is not possible that there is both an atom with $\{z_{i-1},z_i,u\}$ and an atom with $\{z_i,z_{i+1},u\}$ in $Q$.
Without loss of generality, assume there is no atom with $\{z_i,z_{i+1},u\}$. Since the query is acyclic, this means that $u$ and $z_{i+1}$ are not neighbors (if the three variables $z_i,z_{i+1},u$ are pairwise neighbors in an acyclic graph, then necessarily they all appear in the same hyperedge). 
Then, there is a chordless path $(u,z_i,z_{i+1},\ldots,z_{k+1})$. Since $u\in\free(Q_1)$, it is a free-path. This contradicts the fact that $Q_1$ is free-path guarded since $u\not\in\free(Q_2)$.

It is left to show that there is an atom $R(\vec{v})$ in $T_P$ such that $u,z_i \in\vec{v}$ for some $0<i<k+1$. Since $u$ appears in two distinct atoms of $T_P$, and since $T_P$ is a join-tree, $u$ also appears in two adjacent atoms of $T_P$. According to Proposition~\ref{prop:var-on-path}, there exists $0\leq i \leq k+1$ such that $z_i$ is in those atoms.
This cannot be $0$ or $k+1$ because they appear only in one atom in $T_P$.
\end{proof}

We are now ready to show that the properties free-path guardedness and
bypass guardedness imply free-connexity.

\begin{lemma}\label{lemma:dichotomyTwoQueriesPositive}
	Let $Q=Q_1\cup Q_2$ be a UCQ of body-isomorphic acyclic CQs. If $Q_1$ and $Q_2$ are both free-path guarded and bypass guarded, then $Q$ is free-connex.
\end{lemma}

\begin{proof}
We describe how to iteratively build a union extension for each CQ.
In every step, we take one free-path among the queries in $Q$ and add a virtual atom
in order to eliminate this free-path.
More specifically, let $P=(z_0,\ldots,z_{k+1})$ be a free-path in $Q_1$.
Take $T_P$ according to Lemma~\ref{lemma:fully-contracted}, and denote by $V_P$ the variables of $P$ and all variables that appear in more than one atom of $T_P$. 
We add the atom $R(V_P)$ to both $Q_1$ and $Q_2$ and obtain $Q_1^+$ and $Q_2^+$ respectively.
We will show that repeatedly applying such steps eventually leads to free-connex union extensions for $Q_1$ and $Q_2$.

First, we claim that $Q_2$ supplies $V_P$. It is guaranteed that $V_P\subseteq\free(Q_2)$ since $P$ is guarded and due to Lemma~\ref{lemma:fully-contracted}. We now show that $Q_2$ is acyclic $V_P$-connex.
We know that $T_P$ is a subtree of $T$ that contains $V_P$, but it may contain additional variables, each appearing in only atom, so we need to modify the tree.
For every vertex $A_i$ in $T_P$, add another vertex with $A_i'=\var(A_i)\cap V_P$ and an edge $(A_i,A_i')$.
Then, for every edge $(A_i,A_j)$ in $T_P$, replace it with the edge $(A_i',A_j')$. The new graph is a tree since it is connected and the number of added vertices is equal to the number of added edges (hence, after the modification, the number of edges remains equal to the number of vertices minus one).
We next show that the running intersection property is maintained. For every edge $(A_i,A_j)$ removed, $\var(A_i)\cap\var(A_j)\subseteq V_P$, and so by definition of the new vertices, $\var(A_i)\cap\var(A_j)=\var(A_i')\cap\var(A_j')$.
Given two nodes $B$ and $C$ of the join-tree, the path between them in the modified join-tree is similar to the path in the original join-tree, except it may contain the node $(A_i',A_j')$ if it contained the node $(A_i,A_j)$ before.
Since the running intersection property holds in $T$, 
every node on the path between $B$ and $C$ contains $\var(B)\cap \var(C)$.
Since $\var(A_i)\cap\var(A_j)=\var(A_i')\cap\var(A_j')$,
the running intersection property also holds in the modified tree.
As the variables of the subtree that consists of the new vertices are exactly $V_P$, this concludes the proof that $Q_2$ supplies $V_P$.

After the extension step we described, there are no free-paths that start in $z_0$ and end in $z_{k+1}$ since these variables are now neighbors. If both of the CQs are now free-connex, then we are done.
Otherwise, we use the extension recursively. We can apply the extension again as we show next that the UCQ $Q_1^+\cup Q_2^+$ conforms to the conditions of this lemma.
Note that after a free-path from $z_0$ to $z_{k+1}$ is treated, and even after future extension, there will be no free-path from $z_0$ to $z_{k+1}$ since they are now neighbors.
Since there is a finite number of variable pairs, after a finite number of such steps, all pairs that have a free-path between them are resolved. At this point, we have an acyclic extension with no free-paths, so it is free-connex.
It is left to prove that the conditions of this lemma are maintained after the extension steps as long as at least one of the extended CQs is not free-connex. We show that in the following three claims.

\begin{claim}\label{claim:lemmaClaimAcyclic}
$Q^+_1$ and $Q^+_2$ are body-isomorphic acyclic.
\end{claim}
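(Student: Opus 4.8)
The plan is to dispatch body-isomorphism immediately and then spend the effort on acyclicity. Recall that, in the shared-body notation introduced for body-isomorphic unions, $Q_1$ and $Q_2$ are written with one common body $R_1(\vec{w}_1),\ldots,R_n(\vec{w}_n)$ and two heads. The queries $Q_1^+$ and $Q_2^+$ are obtained by adjoining to this common body the \emph{same} fresh atom $R(V_P)$, so their bodies remain literally identical. Hence $Q_1^+$ and $Q_2^+$ are body-isomorphic via the identity (the original body-isomorphism extended to fix the new atom), and self-join freeness is preserved because $R$ does not occur in $Q$. In particular $\calH(Q_1^+)=\calH(Q_2^+)=(V,E\cup\{V_P\})$, where $(V,E)=\calH(Q)$, so it remains only to show that this single hypergraph is acyclic.

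For acyclicity I would reuse the structure already built in the body of the proof. There we certified that $Q_2$ is $V_P$-connex by exhibiting an ext-$V_P$-connex tree $\hat{T}$: a join tree of an inclusive extension of $\calH(Q)$ in which the nodes $A_i'=\var(A_i)\cap V_P$ form a connected subtree $T'$ with $\bigcup_{i}A_i'=V_P$, while each original atom $A_i$ (and the part of $T$ hanging off it) dangles from $A_i'$. The plan is to turn $\hat{T}$ into a join tree of $(V,E\cup\{V_P\})$ by \emph{contracting} the connex part $T'$ to a single hub node carrying the set $V_P$: delete the nodes $A_i'$, and reconnect every real atom $A_i$ (with its subtree) directly to this hub. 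Because $T_P$ is a path-subtree of $T$, removing its internal edges breaks $T$ into branches each attached to exactly one $A_i$, so joining all the $A_i$ to the hub indeed yields a tree whose node set is $E\cup\{V_P\}$.

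The only real work, and the step I expect to be delicate, is verifying the running-intersection property for this contracted tree, and this is exactly where the definition of $V_P$ is used. Fix a variable $v$ and let $N_v$ be the connected set of nodes of $T$ containing $v$. If $v\notin V_P$, then by definition of $V_P$ the variable $v$ occurs in at most one atom of $T_P$, so $N_v$ meets the path $T_P$ in at most one node and therefore stays inside a single branch; this branch is untouched by the contraction, and the hub is irrelevant since $v\notin V_P$. If $v\in V_P$, then $v$ lies in some atom of $T_P$ (as $V_P$ consists of $\var(P)$ together with variables shared by two atoms of $T_P$), and every node of $N_v$—whether an atom of $T_P$, which is directly adjacent to the hub, or a branch atom whose path to $T_P$ forces its anchor $A_i$ into $N_v$—connects to the hub; since the hub carries $v$, the occurrences of $v$ stay connected through it. This establishes running intersection, so the contracted tree is a join tree of $(V,E\cup\{V_P\})$ and both $Q_1^+$ and $Q_2^+$ are acyclic. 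One could alternatively invoke the general principle that an ext-$S$-connex tree witnesses acyclicity of $(V,E\cup\{S\})$—the natural generalization, with $S=V_P$, of the free-connex characterization recalled in the preliminaries—but the direct contraction keeps the argument self-contained.
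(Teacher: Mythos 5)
Your proposal is correct and constructs the same witness as the paper: the paper's proof takes the join tree $T$ from Lemma~\ref{lemma:fully-contracted}, adds a new node carrying exactly $V_P$, deletes the edges inside $T_P$, and attaches every atom of $T_P$ to the new node, with running intersection holding precisely because $V_P$ contains every variable shared by two atoms of $T_P$ --- which is exactly the hub-contraction tree you build and the case analysis you carry out. The only difference is presentational: you additionally spell out the (immediate) body-isomorphism and self-join-freeness of the extensions, which the paper leaves implicit.
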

\begin{proof}[Proof of Claim~\ref{claim:lemmaClaimAcyclic}]
First recall that we assume that the CQs are given in the notation defined in the beginning of Section~\ref{sec:2-body-iso}, so the two CQs have exactly the same body. Since we add the same atom to both CQs, the extensions also have the same body.
We show a join-tree $T^+$ for the extension. Take the join-tree $T$ according to Lemma~\ref{lemma:fully-contracted}, and add the vertex $R(V_P)$. Remove all edges in $T_P$ and add an edge between $R(V_P)$ and every atom in $T_P$. This construction results in a connected graph with no cycles, and so it is a tree. We claim that the running intersection property is preserved.
Let $B$ and $C$ be two nodes in $T^+$.
We first handle the case that none of $B$ and $C$ are $R(V_P)$.
We need to show that every node on the path between them in $T^+$ contains $B\cap C$. Since this property holds in $T$, every node on the path between them in $T^+$ that also appears on the path between them in $T$ preserves this property. By construction, the only new node that may appear on this path is $R(V_P)$. In this case, the two nodes before and after $R(V_P)$ on this path contain $B\cap C$.
By definition, $V_P$ contains all variables that appear in more that one atom in $T_P$, so $R(V_P)$ contains every variable that appears in more than one of its neighbors, and it also contains $B\cap C$.
It is left to handle the second case. Assume without loss of generality that $C=R(V_P)$. We need to handle the path between $R(V_P)$ and $B$.
Let $v\in V_P\cap B$. Since $v\in V_P$, there exists some node $A_v$ in $T_P$ that contains it.
Consider the simple path in $T$ between $A_v$ and $B$, and let $A_v'$ be the last node on this path which is in $T_P$.
Due to the running intersection property, every node on this path contains $A_v\cap B$, and so it also contains $v$. The edge from $V_P$ to $A_v'$ and the simple path from $A_v'$ to $B$ is therefore a simple path in $T^+$ from $V_P$ to $B$ that contains $v$.
\end{proof}

\begin{claim}\label{claim:lemmaClaimHPCovered}
$Q^+_1$ and $Q^+_2$ are free-path guarded.
\end{claim}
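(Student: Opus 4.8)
The plan is to prove the two statements ``$Q_1^+$ is free-path guarded'' and ``$Q_2^+$ is free-path guarded'' separately, exploiting that $Q_1$ and $Q_2$ share a common body. In particular $\calH(Q_1)=\calH(Q_2)$, the free variables are unchanged ($\free(Q_i^+)=\free(Q_i)$), and the only change in passing from $Q_i$ to $Q_i^+$ is that the virtual atom $R(V_P)$ turns $V_P$ into a clique in the hypergraph, where $V_P\subseteq\free(Q_2)$ by the construction in the proof of Lemma~\ref{lemma:dichotomyTwoQueriesPositive}. The first guiding observation is that a chordless path meets any clique in at most two vertices, and those two must be consecutive on the path; hence every free-path of $Q_i^+$ contains at most two variables of $V_P$, and it uses the new edge only when it contains exactly two such variables joined by an edge that is not already present in $Q_1$.

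The direction for $Q_2^+$ is the easy one, and I would dispatch it first. Since $V_P\subseteq\free(Q_2)=\free(Q_2^+)$, every variable of $V_P$ is \emph{free} in $Q_2^+$; but the internal vertices of a free-path are non-free, so a free-path of $Q_2^+$ can contain a $V_P$-variable only as an endpoint. As its two endpoints are non-adjacent while any two $V_P$-variables are adjacent in $\calH(Q_2^+)$, at most one $V_P$-variable can occur, and it cannot use the new edge. Thus every free-path of $Q_2^+$ is already a free-path of $Q_2$, and free-path guardedness of $Q_2$ gives $\var(P')\subseteq\free(Q_1)=\free(Q_1^+)$.

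The direction for $Q_1^+$ is where the work lies, because $V_P$ contains the non-free variables $z_1,\dots,z_k$ of the treated free-path, so a $V_P$-variable may well be \emph{internal} to a free-path of $Q_1^+$ and the new edge can genuinely be used. Let $P'$ be a free-path of $Q_1^+$. If $P'$ uses only original edges it is a free-path of $Q_1$ and we are done by guardedness of $Q_1$. Otherwise $P'$ uses a single new edge $a\,b$ with $a,b\in V_P\subseteq\free(Q_2)$; these two vertices already lie in $\free(Q_2)$, so it remains to place the non-$V_P$ interior vertices of the two original-edge subpaths $L=(x,\dots,a)$ and $R=(b,\dots,y)$ into $\free(Q_2)$. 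I would extend $L$ past $a$ through the connected subtree $T_P$ of Lemma~\ref{lemma:fully-contracted}: every $V_P$-variable lies in $T_P$ and, by the fully-contracted analysis in that proof, lies on a chordless $z_0$--$z_{k+1}$ path whose interior variables appear in two atoms of $T_P$ and hence belong to $V_P\subseteq\free(Q_2)$. Concatenating $L$ with such a detour from $a$ to the $Q_1$-free endpoint $z_0$ or $z_{k+1}$, cleaning the result to a chordless path and splitting it at the first $Q_1$-free vertex strictly after $x$, exhibits $x$ and the non-$V_P$ interior vertices of $L$ as vertices of a genuine free-path of $Q_1$; guardedness of $Q_1$ then forces them into $\free(Q_2)$. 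The same argument applied symmetrically to $R$ completes $\var(P')\subseteq\free(Q_2)=\free(Q_2^+)$.

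The main obstacle I anticipate is exactly the chordlessness of this concatenation: a chord could link a non-$V_P$ interior vertex of $L$ to a $V_P$-interior vertex of the $T_P$-detour, and naively shortcutting across it risks deleting the very interior vertices we are tracking. I expect this to be controlled precisely by the fully-contracted property of $T_P$ together with bypass guardedness of $Q_1$, which restrict how variables outside $V_P$ can neighbor the path through $T_P$; the remaining care is to choose the detour minimally and to split at free vertices so that each tracked vertex survives inside some free-path of $Q_1$. Once this is handled, free-path guardedness of both $Q_1^+$ and $Q_2^+$ follows, and the iteration in the proof of Lemma~\ref{lemma:dichotomyTwoQueriesPositive} may proceed.
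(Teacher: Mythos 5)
Your treatment of $Q_2^+$ is correct and essentially identical to the paper's: since $V_P\subseteq\free(Q_2)=\free(Q_2^+)$ and the interior vertices of a free-path are non-free, no free-path of $Q_2^+$ can traverse the new clique on $V_P$, so $Q_2^+$ acquires no new free-paths and inherits guardedness from $Q_2$. For $Q_1^+$ you take a genuinely different route. The paper bridges the two $V_P$-vertices $v_j,v_{j+1}$ of the new free-path $P'$ by a chordless path $P_{mid}$ that stays inside $V_P$ and is disjoint from $\var(P')$, so that $P_s\cdot P_{mid}\cdot P_t$ is a single $v_0$--$v_{m+1}$ path in $\calH(Q_1)$; it then argues from acyclicity that only the junction vertices $v_j,v_{j+1}$ may need to be dropped to restore chordlessness, and splits the result at the first and last $Q_1$-free interior vertices to obtain one or two free-paths of $Q_1$ covering all of $\var(P')\setminus V_P$. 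You instead route each half of $P'$ outward from $a$ (resp.\ $b$) through $T_P$ to $z_0$ or $z_{k+1}$. This is workable in principle: the detour stays in $V_P$, its far endpoint is free in $Q_1$, and the tracked interior vertices of $L$ are non-free in $Q_1$, so splitting at the first $Q_1$-free vertex after $x$ would indeed produce a free-path of $Q_1$ containing them --- \emph{provided} they survive the passage to a chordless subpath.

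That proviso is exactly the step you defer, and it is where the real work lies. A chord from some interior vertex $\ell_i$ of $L$ to a vertex of the detour would shortcut past $\ell_{i+1},\ldots,\ell_p$, deleting tracked vertices, and nothing you state rules this out; ``I expect this to be controlled by the fully-contracted property and bypass guardedness'' is a conjecture, not an argument. Note in particular that the paper does not invoke bypass guardedness anywhere in this claim (it is needed only for Claim~\ref{claim:lemmaClaimHPNoClique}); what carries the paper's version is acyclicity of the query (chordality of the primal graph) together with the pairwise disjointness of $\var(P_s)$, $\var(P_{mid})$, $\var(P_t)$ and the absence of $P_s$--$P_t$ chords, which confines the necessary surgery to the two junction points. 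To complete your variant you would need to supply the analogous chord analysis for the concatenation $L\cdot(\text{detour})$; as written, the proposal stops at the one point the proof cannot skip.
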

\begin{proof}[Proof of Claim~\ref{claim:lemmaClaimHPCovered}]
Since the original CQs are free-path guarded, every free-path in the extension that is also a free-path
in the original query is guarded.
According to our construction,
the only atom that was added in the extension contains exactly $V_P$. Thus, a new free-path
$(v_0,\ldots,v_{m+1})$ contains $v_j,v_{j+1}\in V_P\subseteq\free(Q_2)=\free(Q_2^+)$.
In particular, since the variables in the center of a free-path must be existential, $Q_2^+$ does not contain new free-paths. It is left to handle free-paths that appear in $Q_1^+$ but not in $Q_1$.

Let $P'=(v_0,\ldots,v_{m+1})$ be a free-path in $Q^+_1$ but not in $Q_1$, and let $v_j,v_{j+1}\in V_P$. We need to show that $v_i\in\free(Q_2^+)$ for all $0\leq i \leq m+1$. First note that $v_j,v_{j+1}\in V_P \subseteq\free(Q_2)=\free(Q_2^+)$. We next prove the same for $v_0$ and $v_{m+1}$.

We first claim that there is a path $P_{\text{mid}}$ between $v_j$ and $v_{j+1}$ that goes only through existential variables in $Q_1$.
We prove that every variable in $V_P$ is either of the form $z_i$ such that $0< i<k+1$ (recall that these are the variables in the center of the eliminated free-path) or it has a neighbor of that form.
Let $v\in V_P$. By definition of $V_P$, either there exists $0\leq i\leq k+1$ such that $v=z_i$ or $v$ appear in two atoms or more in $T_P$.
In the first case, if $0<i<k+1$, then it is of the required form. Otherwise, if $i=0$, it has the neighbor $z_1$, and if $i=k+1$, it has the neighbor $z_k$.
In the second case (if $v$ appear in two atoms or more in $T_P$), according to Lemma~\ref{lemma:fully-contracted}, $v$ has a neighbor of the required form.
Since $v_j,v_{j+1}\in V_P$, this proves that each of $v_j$ and $v_{j+1}$ is either of the form $z_i$ such that $0< i<k+1$ or it has a neighbor of that form.
Since all $z_i$ such that $0< i<k+1$ are connected through $P$, there is a path $P_{\text{mid}}$ between $v_j$ and $v_{j+1}$ that goes only through existential variables in $Q_1$.
 
Since $P'$ is chordless, no variable in $P'$ other than $v_j$ and $v_{j+1}$ is in $V_P$. One consequence of this is that $P_s=(v_0,\ldots,v_j)$ and $P_t=(v_{j+1},\ldots,v_{m+1})$ are also paths in $Q_1$.
Another consequence is that $v_0$ and $v_{m+1}$ are not both in $V_P$.
Since $P'$ is chordless and $m\geq 1$, we have that $v_0$ and $v_{m+1}$ are not neighbors in $Q^+$. Since they are not both on the new atom, $v_0$ and $v_{m+1}$ are not neighbors in $Q$ either.
By concatenating $P_s$, $P_{\text{mid}}$ and $P_t$, we obtain a path in $Q$ from $v_0$ to $v_{m+1}$ that goes only through existential variables in $Q_1$. Take a simple chordless path $P_\ell$ contained in it.
Since $v_0$ and $v_{m+1}$ are not neighbors, the path $P_\ell$ is of length two at least, and so it is a free-path in $Q_1$. Since $Q_1$ is free-path guarded, we conclude that $v_0,v_{m+1}\in\free(Q_2)$.

So far we have that $v_0,v_j,v_{j+1},v_{m+1}\in\free(Q_2)$.
Assume by contradiction that there exists $0<i<j$ such that $v_i\not\in\free(Q_2)$, and consider the subpath of $P_s$ beginning with the last variable before $v_i$ on $P_s$ that is in $\free(Q_2)$ and ending with the first after $v_i$ on $P_s$ that is in $\free(Q_2)$. This is a free-path in $Q_2$ containing $v_i$. We know that $v_i\not\in \free(Q_1)$ as it is in the center of the free-path $P'$, but this contradicts the fact that $Q_2$ is free-path guarded. In the same way, we can show that $v_i\in\free(Q_2)$ for all $j+1<i<m+1$.
Overall we have seen that $v_i\in\free(Q_2)=\free(Q_2^+)$ for all $0\leq i \leq m+1$ as required.
\end{proof}

\begin{claim}\label{claim:lemmaClaimHPNoClique}
$Q_1^+$ and $Q_2^+$ are bypass guarded.
\end{claim}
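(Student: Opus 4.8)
The plan is to check the bypass-guardedness condition separately for $Q_1^+$ and for $Q_2^+$, reusing the structural facts already available: by Claims~\ref{claim:lemmaClaimAcyclic} and~\ref{claim:lemmaClaimHPCovered} the extensions are again body-isomorphic, acyclic, and free-path guarded, $\free(Q_i^+)=\free(Q_i)$, the only new atom is $R(V_P)$ with $V_P\subseteq\free(Q_2)$, and on any free-path the endpoints are the only free variables, so no interior vertex is free. I first dispose of $Q_2^+$, which is the easy side. By Claim~\ref{claim:lemmaClaimHPCovered} every free-path of $Q_2^+$ is already a free-path of $Q_2$. Fix such a $P'$ and a variable $u$ lying in two subsequent $P'$-atoms around an interior vertex $v_i$; I must show $u\in\free(Q_1)$. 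Neither subsequent atom can be $R(V_P)$: that would force $v_i\in V_P\subseteq\free(Q_2)$, contradicting that $v_i$ is interior. Hence both subsequent atoms are original, and bypass-guardedness of $Q_2$ gives $u\in\free(Q_1)=\free(Q_1^+)$.

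For $Q_1^+$, fix a free-path $P'$ of $Q_1^+$ and a bypass variable $u$ at an interior vertex $v_i$, with subsequent atoms containing $\{v_{i-1},v_i,u\}$ and $\{v_i,v_{i+1},u\}$. If $u\in V_P$ we are immediately done since $V_P\subseteq\free(Q_2)$, so I assume $u\notin V_P$; then both subsequent atoms are original, because the virtual atom's variables lie in $V_P$. If $P'$ happens to be a free-path of $Q_1$, bypass-guardedness of $Q_1$ finishes the case. Otherwise $P'$ is a new free-path, and, exactly as in Claim~\ref{claim:lemmaClaimHPCovered}, it splits at its unique virtual edge $\{v_j,v_{j+1}\}\subseteq V_P$ into chordless $Q_1$-paths $P_s=(v_0,\dots,v_j)$ and $P_t=(v_{j+1},\dots,v_{m+1})$, with no further path variable in $V_P$. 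Since in this case no original atom contains both $v_j$ and $v_{j+1}$ (otherwise $P'$ would already be a free-path of $Q_1$) and $u\notin V_P$, we get $v_i\notin\{v_j,v_{j+1}\}$, so the whole bypass triple $v_{i-1},v_i,v_{i+1}$ sits strictly inside $P_s$ or strictly inside $P_t$; say inside $P_s$.

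It then remains to promote $P_s$ to a genuine free-path of $Q_1$ carrying the same bypass. I would invoke the concatenation $P_0$ of $P_s$, a path $P_{\midpath}$ through $V_P$, and $P_t$ constructed in Claim~\ref{claim:lemmaClaimHPCovered} (together with its free sub-paths, used when $P_{\midpath}$ meets $\free(Q_1)$) to produce a free-path $\tilde P$ of $Q_1$ on which $v_{i-1},v_i,v_{i+1}$ still appear consecutively; then the two original atoms remain subsequent $\tilde P$-atoms, and bypass-guardedness of $Q_1$ yields $u\in\free(Q_2)$, as required.

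I expect the main obstacle to be exactly this last step: keeping the triple $v_{i-1},v_i,v_{i+1}$ intact. When the bypass sits next to the virtual edge, i.e. $v_{i+1}=v_j$, the construction of $P_0$ may contract the connection vertex $v_j$ away, which would break the consecutiveness of $v_i,v_{i+1}$ and destroy the second subsequent atom. I plan to handle this by splitting on whether $v_j\in\free(Q_1)$: if it is, then $P_s$ itself is already a free-path of $Q_1$ and the triple survives verbatim; if it is not, I would choose the continuation out of $v_j$ inside $V_P$ whose first vertex is not a neighbor of $v_i$, so that $v_j$ can be retained on $\tilde P$ without introducing a chord. That such a continuation exists should follow from acyclicity (Claim~\ref{claim:lemmaClaimAcyclic}) and the running-intersection structure guaranteed by Lemma~\ref{lemma:fully-contracted}. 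As a fallback, I would first note that free-path guardedness of $Q_2$ applied to the chordless path $(v_{i-1},u,v_{i+1})$ forces $u\in\free(Q_1)$, and use this to constrain the reconstruction.
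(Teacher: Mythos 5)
Your treatment of $Q_2^+$ is correct and is essentially the paper's argument: a free-path of $Q_2^+$ is already one of $Q_2$, the virtual atom cannot be one of the two subsequent atoms because that would force the interior vertex into $V_P\subseteq\free(Q_2)$, and bypass-guardedness of $Q_2$ finishes. The problem is the $Q_1^+$ case. There you commit to reconstructing, from a new free-path $P'$ of $Q_1^+$, a genuine free-path of $Q_1$ that still contains the triple $v_{i-1},v_i,v_{i+1}$ consecutively, and you yourself identify the case that breaks this plan (the bypass sitting next to the virtual edge, e.g.\ $v_{i+1}=v_j$, where the concatenation may contract $v_j$ away). Your proposed repair --- choosing a continuation out of $v_j$ inside $V_P$ whose first vertex is not a neighbor of $v_i$, justified only by ``should follow from acyclicity and the running-intersection structure'' --- is not established, and it is exactly the hard point; as written, this is a gap, not a proof.

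The irony is that your ``fallback'' contains the seed of the paper's actual (and much shorter) argument, but you draw the wrong conclusion from it and then abandon it. Suppose toward contradiction that $u\notin\free(Q_2^+)$. By Claim~\ref{claim:lemmaClaimHPCovered} all of $\var(P')$ lies in $\free(Q_2^+)$, so $(v_{i-1},u,v_{i+1})$ is a chordless path with free endpoints and non-free middle in $Q_2^+$, i.e.\ a free-path of $Q_2^+$. Applying free-path guardedness of $Q_2^+$ to \emph{this} path gives not just $u\in\free(Q_1^+)$ (which is what you record) but also $v_{i-1},v_{i+1}\in\free(Q_1^+)$. Since $P'$ is a free-path of $Q_1^+$, its only free variables are its endpoints, so $i=1$ and $P'=(v_0,v_1,v_2)$. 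Both subsequent atoms are original (as you correctly note when $u\notin V_P$; and if $u\in V_P$ you are done immediately), so $(v_0,v_1,v_2)$ is already a free-path of $Q_1$ with $u\notin\free(Q_2)$ in two subsequent atoms, contradicting bypass-guardedness of $Q_1$. No reconstruction of long paths is needed, and the case split on whether $P'$ is new or old disappears.
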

\begin{proof}[Proof of Claim~\ref{claim:lemmaClaimHPNoClique}]
First let $P'=(t_0,\ldots,t_{m+1})$ be a free-path in $Q^+_2$, and assume by contradiction that there exists some
$u\not\in\free(Q_1^+)$ that appears in two subsequent $P'$-atoms. This means that there exists $i$ such that $Q^+_2$ has an atom containing $\{t_{i-1},t_i,u\}$ and an atom containing $\{t_i,t_{i+1},u\}$.
As explained in the previous claim, $Q^+_2$ has no new free-paths, so $P'$ is a free-path in $Q_2$ as well. Since $Q_2$ is bypass guarded, $u$ does not appear in two subsequent $P'$-atoms in $Q_2$, so one of these atoms is new in $Q^+_2$. Assume without loss of generality that it is $\{t_i,t_{i+1},u\}$.
Then, $\{t_i,t_{i+1},u\}\subseteq V_P\subseteq\free(Q_2)$. This contradicts the fact that $t_i\not\in\free(Q_2)$ since $P'$ is a free-path.

Now let $P'=(t_0,\ldots,t_{m+1})$ be a free-path in $Q^+_1$. Assume by contradiction that there exists some $u\not\in\free(Q_2^+)$,
that appears in two subsequent $P'$-atoms. This means that $Q^+$ has an atom containing $\{t_{i-1},t_i,u\}$ and an atom containing $\{t_i,t_{i+1},u\}$. 
Since $P'$ is chordless, $t_{i-1}$ and $t_{i+1}$ are not neighbors, and so $(t_{i-1},u,t_{i+1})$ is a chordless path as well.
According to Claim~\ref{claim:lemmaClaimHPCovered}, $\var(P')\subseteq\free(Q_2^+)$, and so $(t_{i-1},u,t_{i+1})$ is a free-path in $Q_2^+$. According to Claim~\ref{claim:lemmaClaimHPCovered} again, $\{t_{i-1},u,t_{i+1}\}\subseteq\free(Q_1^+)$. Since the only free variables on a free-path are at its ends, this means that $P'$ is of length two, and $i=m=1$.
Since $u\not\in\free(Q_2)$ and ${V_P}\subseteq\free(Q_2)$, we have that the atoms containing $\{t_0,t_1,u\}$ and $\{t_1,t_2,u\}$ are not the added atom, and so they appear also in $Q_2$ and in $Q_1$.
This means that $P'$ is a free-path in $Q_1$, and $u\not\in\free(Q_2)$ appears in two subsequent atoms of $P'$ in $Q_1$.
Thus, $Q_1$ is not bypass guarded, in contradiction to the conditions of this lemma.

\end{proof}

This proves that the lemma can be applied iteratively.
\end{proof}

Since Lemma~\ref{lemma:guardedTwoQueries}, Lemma~\ref{lemma:interleaving-paths} and Lemma~\ref{lemma:dichotomyTwoQueriesPositive} cover all cases of a union of two self-join-free body-isomorphic acyclic CQs, we have a dichotomy that characterizes the fragment of UCQs we discuss.

\begin{theorem}\label{theorem:ufc_characterize}
	Let $Q=Q_1\cup Q_2$ be a UCQ of self-join-free body-isomorphic CQs.
	\begin{itemize}
	\item If $Q_1$ and $Q_2$ are acyclic free-path guarded and bypass guarded, 
	then $Q$ is free-connex and $\pEnum{Q}\in\DelayClin$.
	\item Otherwise, $Q$ is not free-connex and $\pEnum{Q}\not\in\DelayClin$, under the assumptions \hyperclique, \matmul~and \fourclique.
	\end{itemize}
\end{theorem}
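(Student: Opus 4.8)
The plan is to assemble the three lemmas of this section into the stated dichotomy, after first separating the cyclic and acyclic cases. Since $Q_1$ and $Q_2$ are body-isomorphic, they induce the same hypergraph up to a renaming of variables, so either both CQs are cyclic or both are acyclic. I would dispatch the cyclic case immediately: two cyclic body-isomorphic CQs form a union of intractable CQs that contains no two body-isomorphic \emph{acyclic} CQs, so Theorem~\ref{theorem:union_of_intractable} gives $\pEnum{Q}\notin\DelayClin$ (here \hyperclique alone suffices, through the decision hardness). Because every free-connex UCQ lies in $\DelayClin$ by Theorem~\ref{thm:positive}, intractability forces $Q$ to be non-free-connex, so the cyclic case lands squarely in the second (``otherwise'') branch.

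For the acyclic case I would run a three-way case analysis on the guardedness hypotheses, using that $Q=Q_1\cup Q_2=Q_2\cup Q_1$ so that each one-sided lemma may be applied with the roles of $Q_1$ and $Q_2$ interchanged. If one of the two CQs is not free-path guarded, then applying Lemma~\ref{lemma:guardedTwoQueries} to that CQ yields $\pEnum{Q}\notin\DelayClin$ under \matmul. Otherwise both are free-path guarded; if in addition one of them is not bypass guarded, then Lemma~\ref{lemma:interleaving-paths} (again applied to the offending CQ) yields $\pEnum{Q}\notin\DelayClin$ under \fourclique. In both subcases the contrapositive of Theorem~\ref{thm:positive} turns intractability into non-free-connexity, matching the ``otherwise'' branch and accounting for the remaining two hardness assumptions.

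The only remaining possibility is that $Q_1$ and $Q_2$ are simultaneously free-path guarded and bypass guarded. Here Lemma~\ref{lemma:dichotomyTwoQueriesPositive} states precisely that $Q$ is free-connex, and Theorem~\ref{thm:positive} then places $\pEnum{Q}$ in $\DelayClin$, which is the first branch. Since the cyclic case and the three acyclic subcases are jointly exhaustive, the dichotomy follows.

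Technically the theorem is mostly a bookkeeping argument, since the real work has been front-loaded into the lemmas, so I do not expect any single step to be deep. The points demanding the most care are, first, checking that the case split is genuinely exhaustive and that each intractability conclusion is correctly converted into the ``not free-connex'' half of its branch via the contrapositive of Theorem~\ref{thm:positive}; and second, the bookkeeping of hardness assumptions, ensuring that the cyclic, non-free-path-guarded, and non-bypass-guarded cases invoke exactly \hyperclique, \matmul, and \fourclique respectively, so that all three hypotheses of the statement are genuinely used.
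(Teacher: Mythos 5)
Your proposal is correct and follows essentially the same route as the paper's proof: dispatch the cyclic case via Theorem~\ref{theorem:union_of_intractable}, then in the acyclic case combine Lemma~\ref{lemma:guardedTwoQueries}, Lemma~\ref{lemma:interleaving-paths}, and Lemma~\ref{lemma:dichotomyTwoQueriesPositive}, converting each intractability conclusion into non-free-connexity via the contrapositive of Theorem~\ref{thm:positive}. Your explicit attention to the exhaustiveness of the case split and to applying the one-sided lemmas symmetrically is, if anything, slightly more careful than the paper's own two-line argument.
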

\begin{proof}
Since the CQs are body-isomorphic, they are either both acyclic or both cyclic.
First assume that the CQs are acyclic.
By Lemma~\ref{lemma:dichotomyTwoQueriesPositive}, if $Q_1$ and $Q_2$ are both free-path guarded and bypass guarded, then $Q$ is free-connex, and $\pEnum{Q}\in\DelayClin$ by Theorem~\ref{thm:positive}.
Otherwise, either one of $Q_1$ and $Q_2$ is not free-path guarded, or they both are but one of them is not bypass guarded. In these cases, 
by Lemma~\ref{lemma:guardedTwoQueries} and Lemma~\ref{lemma:interleaving-paths}, $\pEnum{Q}\not\in\DelayClin$ assuming \matmul~and \fourclique.
By Theorem~\ref{theorem:union_of_intractable}, if the CQs are cyclic, $\pEnum{Q}$ is not in $\DelayClin$ assuming $\hyperclique$.
In all cases where $\pEnum{Q}\not\in\DelayClin$, we know that $Q$ is not free-connex by Theorem~\ref{thm:positive}.
\end{proof}

By combining Theorem~\ref{theorem:ufc_characterize} with Theorem~\ref{theorem:union_of_intractable}, we have the following dichotomy for the case of unions containing exactly two intractable CQs.

\begin{theorem}\label{thm:two-hard-dichotomy}
	Let $Q=Q_1\cup Q_2$ be a union of difficult CQs.
\begin{itemize}
	\item If $Q$ is free-connex, then $\pEnum{Q}\in\DelayClin$.
	\item	 If $Q$ is not free-connex, then $\pEnum{Q}\not\in\DelayClin$, assuming \matmul, \hyperclique~and \fourclique.
\end{itemize}	
\end{theorem}

\begin{proof}
If $Q$ is free-connex, Theorem~\ref{thm:easyunions} proves that it is tractable.
If $Q_1$ and $Q_2$ are acyclic and body-isomorphic but $Q$ is not free-connex, then Theorem~\ref{theorem:ufc_characterize} proves that $Q$ is intractable.
Otherwise, $Q$ is a union of difficult CQs that does not contain body-isomorphic acyclic CQs, and Theorem~\ref{theorem:union_of_intractable} proves that it is intractable.
\end{proof}

\section{Towards a Dichotomy}\label{sec:dichotomy}

In this section we examine the next steps that are required to fully characterize which UCQs are in $\DelayClin$. We pinpoint some of the difficulties that must be tackled when formulating such a dichotomy, accompanied by examples.
In Section~\ref{sec:future-acyclic} we discuss unions containing only acyclic CQs, and in Section~\ref{sec:future-cyclic} we discuss those that contain at least one cyclic CQ.

\subsection{Unions of Acyclic CQs}\label{sec:future-acyclic}

We inspect two ways of extending the results of Section~\ref{sec:2-body-iso}.
The first such extension is to a union of two CQs that are not body-isomorphic.
If there is a difficult CQ $Q_1$ in the union where for every other CQ $Q_i$ in the union there is no body-homomorphism from $Q_i$ to $Q_1$, we can reduce $Q_1$ to the union as described in Lemma~\ref{lemma:body-homo}. Since $Q_1$ is difficult, the union is intractable too. In case there is a body-homomorphism to the hard queries, one might think that it is sufficient for intractability to have unguarded difficult structures similarly to the previous section.
This is incorrect.

\begin{example}[Unresolved UCQ with CQs that are not body-isomorphic]\label{example:separated}
Let $Q=Q_1\cup Q_2$ with
\begin{align*}
	Q_1(x,y,w)&\leftarrow R_1(x,z),R_2(z,y),R_3(y,w)\text{ and}\\
	Q_2(x,y,w)&\leftarrow R_1(x,t_1),R_2(t_2,y),R_3(w,t_3).
\end{align*}
The query $Q_1$ is acyclic non-free-connex, while $Q_2$ is free-connex. Further $Q_1$ is not contained in $Q_2$ but there is a body-homomorphism from $Q_2$ to $Q_1$. The variable $z$ is part of the free-path $(x,z,y)$ in $Q_1$, but the variables $t_1$ and $t_2$ that map to it via the body-homomorphism are not free in $Q_2$. If we extend the notion of guarding to non-body-isomorphic CQs in the natural way, the free-path $(x,z,y)$ is not guarded.
Nevertheless, we cannot compute matrix multiplication in $O(n^2)$ time by encoding it to the free-path $(x,z,y)$ like before. Over such a construction, there can be $n^3$ many results to $Q_2$ as $w$ and $y$ are not connected in $Q_2$ so they can have distinct values.
This is not an issue when discussing body-isomorphic CQs.
We do not know whether this example is in $\DelayClin$.
\qed
\end{example}
A future characterization of the union of CQs that are not body-isomorphic would need an even more careful approach than the one used in Section~\ref{sec:2-body-iso} in order to handle the case that variables mapping to the free-path are not connected via other variables mapping to the free-path.

A second way of extending Section~\ref{sec:2-body-iso} is to consider more than two body-isomorphic acyclic CQs.
This case may be tractable or intractable, and for some queries of this form, we do not have a classification yet. Example~\ref{example:yellow} shows a tractable union of such form, while the following example shows that free-paths that share edges can be especially problematic within a union.

\begin{example}[Unresolved UCQ with more than two body-isomorphic CQs]\label{example:star}
Let $k\geq 4$ and consider the UCQ $Q$ containing $k$ body-isomorphic CQs, with an atom $R_i(x_i,z)$ for every $1\leq i\leq k-1$. The heads are all possible combinations of $k-1$ out of the $k$ variables of the query, $\{z,x_1,\ldots,x_{k-1}\}$.
In the case of $k=4$ we have the following UCQ:
\begin{align*}
Q_1(x_1,x_2,x_3),Q_2(x_1,x_2,z),Q_3(x_1,x_3,z),Q_4(x_2,x_3,z)\leftarrow R_1(x_1,z),R_2(x_2,z),R_3(x_3,z).
\end{align*}
The query $Q_1$ has free-paths $(x_i,z,x_j)$ between all possible pairs of $i$ and $j$.
In this case, enumerating the solutions to the UCQ is not in $\DelayClin$ assuming \fourclique:
Encode each relation with all edges in the input graph, and concatenate the variable names. That is, for every edge $(u,v)$ in the graph, add $((u,x_1),(v,z))$ to $R_1$. By concatenating variable names, we can identify which solutions come from which CQ as described in Lemma~\ref{lemma:diff-domain} and ignore the answers to all CQs other than $Q_1$.
The answers to $Q_1$ give us $3$ vertices that have a common neighbor. 
We can check in constant time if every pair of the $3$ vertices are neighbors, and if so, we found a $4$-clique. As there can be $O(n^3)$ solutions to the UCQ, we solve $4$-clique in $O(n^3)$ time.
This proves that this UCQ when $k=4$ is intractable assuming \fourclique.

If we apply the same strategy for a general $k$, we get that tractability of this UCQ implies solving $k$-clique in time $O(n^{k-1})$. However, this does not result in a lower
bound for large $k$ values: It is assumed that one can not find a $k$-clique
in time $n^{\frac{\omega k}{3}-o(1)}$, which does not contradict an $O(n^{k-1})$ algorithm.
In fact, for sufficiently large $k$, algorithms for $k$-clique detection that require time $O(n^{k-1})$ do exist~\cite{eisenbrand2004complexity}.
However, this reduction does not seem to fully capture the hardness of this query, as it encodes all relations with the same set of edges. 
We do not know if queries of the structure given here are hard in general, or if they become easy for larger $k$ values, as any current approach that we know of for constant delay enumeration fails for them.
\qed
\end{example}

We now describe some classification we can achieve when we exclude cases like Example~\ref{example:star}.

\subsubsection{Lower bound for isolated free-paths}

When generalizing the notion of guarding free-paths to unions of several CQs, a free-path does not have to be guarded by a single CQ.
We formalize this in the following definition.

\begin{definition}\label{definition:union-guard}
	Let $Q=Q_1\cup\ldots\cup Q_n$ be a union of body-isomorphic CQs, and
	let $P=(z_0,\ldots,z_{k+1})$ be a free-path in $Q_1$.
	We say that a set $\calU\subseteq 2^{\var(P)}$ is a {\em union guard} for $P$ if:
	\begin{itemize}
	\item $\{z_0,z_{k+1}\}\in\calU$.
	\item For every $\{z_a,z_c\}\subseteq u\in \calU$ with $a+1<c$, we have that $\{z_a,z_b,z_c\}\in\calU$ for some $a<b<c$.
	\item For every $u\in\calU$, we have $u\subseteq\free(Q_i)$ for some $1\leq i\leq n$.
	\end{itemize}
	We sometimes refer to the set $\{z_a,z_b,z_c\}$ with $a<b<c$
as $(z_a,z_b,z_c)$. 
\end{definition}

Note that, when $n=2$, if a UCQ is free-path guarded by Definition~\ref{def:two-guarded}, every free-path $P$ it contains has the union guard $\calU=\{\var(P)\}$.
Definition~\ref{definition:union-guard} allows several variable sets in the union guard, and different sets are allowed to be free in different CQs.
We now show that if a UCQ contains a free-path with no union guard, then the entire union is intractable.

\begin{theorem}\label{thm:no-union-guard}
	Let $Q=Q_1\cup\ldots\cup Q_n$ be a UCQ of body-isomorphic acyclic CQs.
	If there exists a free-path in $Q_1$ that is not union guarded, then
	$\pEnum{Q}\not\in\DelayClin$, assuming \matmul.
\end{theorem}
\begin{proof}
Let  $P=(z_0,\ldots,z_{k+1})$ be a free-path of $Q_1$ that is not union guarded. 
We know that $\{z_0,z_{k+1}\}\subseteq\free(Q_1)$. Since $P$ is not union guarded, there exist some $a$ and $c$ such that  $a+1< c$, and $\{z_a,z_c\}\subseteq \free(Q_r)$ for
some $Q_r\in Q$, but
for all $Q_s\in Q$ and for all $a<b<c$ we have $\{z_a,z_b,z_c\}\not\in\free(Q_s)$.
Note that $P'=(z_a,z_{a+1},\ldots,z_c)$ is a free-path of $Q_r$.
We define $V_a=\{z_a\}$, $V_b=\{z_{a+1},\ldots,z_{c-1}\}$ and $V_c=\{z_c\}$.
Since $P'$ is a free-path, there is no atom in $Q$ that contains both $z_a$ and $z_c$ as variables.

The rest of the proof is similar to Lemma~\ref{lemma:guardedTwoQueries}.
We partition the atoms into nonempty sets
$\calR_A=\{R(\vec{v})\in\atoms(Q)\mid V_c\cap \vec{v}=\emptyset\}$ and $\calR_B=\atoms(Q)\setminus\calR_A$, and we have that the atoms of $\calR_A$ do not contain the variable of $V_c$, and the atoms of $\calR_B$ do not contain the variable of $V_a$.
Let $A$ and $B$ be Boolean $n\times n$ matrices represented as binary relations.
We define a database instance $I$ over $Q$ as follows: For every atom $R(\vec{v})$, if $R(\vec{v})\in \calR_A$ we set
$R^I = \{\tau_{(a,b,\bot)}(\vec{v})\mid (a,b)\in A\}$,
and if $R(\vec{v})\in \calR_B$ we set 
$R^I = \{\tau_{(\bot,b,c)}(\vec{v})\mid (b,c)\in B\}$. Here, $\tau_{(a,b,c)}$ is the function defined in Lemma~\ref{lemma:guardedTwoQueries}.

Consider an answer $\mu\in Q(I)$.
Since the variables $z_i$ are connected via the path in all CQs,
we have that $\mu(z_a)=a$,
$\mu(z_{a+1})=\cdots=\mu(z_{c-1})=b$ and $\mu(z_{c})=c$ for some $(a,b)\in A$ and $(b,c)\in B$.
Since $\{z_a,z_c\}\subseteq \free(Q_r)$, the product $AB$ is encoded in $Q_r(I)$.
 Since for all $Q_s\in Q$ and for all $a<b<c$ we have $\{z_a,z_b,z_c\}\not\in\free(Q_s)$, every
$Q_s(I)$ is of size at most $\calO(n^2)$.
Thus there are $O(n^2)$ answers to $Q$.
Assume by contradiction that we can enumerate the solutions of $Q(I)$ with linear preprocessing
and constant delay.
To distinguish the answers of $Q_r$ from those of the other CQs, we can concatenate the variable names to the values, as described in Lemma~\ref{lemma:diff-domain}.
That way, we can ignore the answers that correspond to the values $(a,b)$ or $(b,c)$, and use the $(a,c)$ pairs as the answers to
matrix multiplication. This solves matrix multiplication in $O(n^2)$ time, in contradiction to \matmul.
\end{proof}

As we do not know of a classification for Example~\ref{example:star}, we restrict the UCQs we consider to cases where the free-paths within each CQ do not share variables. Then, we manage to obtain a similar characterization to that of Section~\ref{sec:2-body-iso}.

\begin{definition}\label{definition:isolated}
Let $Q=Q_1\cup\ldots\cup Q_n$ be a union of body-isomorphic CQs,
and let $P$ be a free-path of $Q_1$. We say that $P$ is {\em isolated} if the following two conditions hold:
\begin{itemize}
\item $Q$ is $\var(P)$-connex and
\item $\var(P')\cap\var(P)=\emptyset$ for all free-paths $P'\neq P$ in $Q_1$ .
\end{itemize}
\end{definition}

Note that isolated is a stronger property than bypass-guarded. Given a free-path $P$, A bypass-guarded query can have a variable in two subsequent $P$-atoms as long as this variable is free in another CQ. An isolated free-path cannot have such a variable at all.
If all free-paths in a union of body-isomorphic acyclic CQs are union guarded and isolated, we can show that the union is tractable.
To achieve this result, we first need to show a structural property given in the following lemma.

\begin{lemma}\label{lemma:guardedandisolated}
Let $Q=Q_1\cup\ldots\cup Q_n$ be a UCQ, $P=(z_0,\ldots,z_{k+1})$ a free-path of $Q_1$ and
$\calU$ a union guard of $P$. There exists some $\calU'\subseteq\calU$ such that:
\begin{itemize}
\item $\{z_0,z_{k+1}\}\subseteq v\in\calU'$.
\item for all $1\leq i\leq m$, $\{z_{i-1},z_i\}\subseteq v\in\calU'$.
\item there exists a join-tree $T_P$ for $\calH=(\var(P),\calU')$.
\end{itemize}
\end{lemma}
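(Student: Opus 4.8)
The plan is to recursively decompose the single ``long'' edge $\{z_0,z_{k+1}\}$ guaranteed by $\calU$ into a hierarchy of three-element edges, one per internal split, and to let $\calU'$ be exactly the collection of edges this decomposition produces. Concretely, I would run a recursion on index intervals $[a,c]$ with $0\le a<c\le k+1$, maintaining the invariant that $\{z_a,z_c\}\subseteq u$ for some $u\in\calU$. The root call is $[0,k+1]$, whose invariant holds because $\{z_0,z_{k+1}\}\in\calU$ by the first condition of a union guard. For an interval with $a+1=c$ (a leaf) the recursion stops; for $a+1<c$ the second union-guard condition yields a set $\{z_a,z_b,z_c\}\in\calU$ with $a<b<c$, which I add to $\calU'$ and then recurse on $[a,b]$ and $[b,c]$. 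The invariant propagates to the children since $\{z_a,z_b\}$ and $\{z_b,z_c\}$ are both contained in the newly added set, which lies in $\calU$. As $b-a$ and $c-b$ are strictly smaller than $c-a$, the recursion terminates, and every chosen set lies in $\calU$, so $\calU'\subseteq\calU$.

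This recursion tree is, up to terminology, a triangulation of the path $z_0,\dots,z_{k+1}$: each internal node is an interval $[a,c]$ carrying the triangle $\{z_a,z_b,z_c\}$, with children $[a,b]$ and $[b,c]$. The first two bullets then follow directly. The root's triangle is $\{z_0,z_{b_0},z_{k+1}\}\supseteq\{z_0,z_{k+1}\}$, giving bullet one. For bullet two, a short induction shows that the leaves of the subtree rooted at $[a,c]$ are exactly the unit intervals $[a,a+1],\dots,[c-1,c]$, so the leaves of the whole tree are $[0,1],\dots,[k,k+1]$; hence each consecutive pair $\{z_{i-1},z_i\}$ is a leaf and is contained in the triangle of its parent, which belongs to $\calU'$. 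I would then take $T_P$ to be the recursion tree restricted to its internal nodes with parent--child adjacency; since the parent of any node is internal, this induced subgraph is a tree, and distinct intervals give distinct triangles, so its node set is exactly $\calU'$.

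The main work, and the step I expect to be the real obstacle, is verifying the running-intersection property so that $T_P$ is a genuine join tree of $\calH=(\var(P),\calU')$. The clean enabling fact is that every internal index $j\in\{1,\dots,k\}$ is the split point of exactly one internal node $N_j=[a_j,c_j]$: once $z_j$ is used to split, it becomes a boundary of both children and can never again be interior, and a counting argument ($k$ internal nodes, hence $k$ distinct split points, all lying in $\{1,\dots,k\}$) pins this down. Granting this, I would show that the triangles containing a fixed $z_j$ are precisely $N_j$ together with the left spine descending from its child $[j,c_j]$ (the nodes in which $j$ is the left endpoint) and the right spine descending from its child $[a_j,j]$ (the nodes in which $j$ is the right endpoint). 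Each spine is a parent--child path, hence connected, and both attach to $N_j$ through its children, so the triangles containing $z_j$ induce a connected subtree; the endpoints $z_0$ and $z_{k+1}$ are the degenerate cases where only the left, respectively right, spine from the root occurs. This gives running intersection for every variable of $\var(P)$, so $T_P$ is a join tree and $\calH$ is acyclic, completing all three claims. The delicate point throughout is the bookkeeping of the three roles (left endpoint, split, right endpoint) that a single variable can play across different triangles, and checking that these occurrences never disconnect.
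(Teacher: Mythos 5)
Your proposal is correct and follows essentially the same route as the paper: both build $T_P$ top-down by repeatedly invoking the second union-guard condition to split $\{z_a,z_c\}$ into a triple $\{z_a,z_b,z_c\}$ and recursing on the two halves, yielding the same binary tree of triples. The only difference is cosmetic — you verify running intersection by characterizing the occurrences of each $z_j$ as its unique split node plus two spines, whereas the paper argues via ancestor--descendant paths and least common ancestors — and both verifications are sound.
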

\begin{proof}
We use the inductive definition of a union guard to define $\calU'$ and $T_P$. 
Moreover, for the ease of explanations, we define $T_P$ with a parent-child relation. 
By the first two points of Definition~\ref{definition:union-guard},
$\{z_0,z_{k+1}\}\in\calU$ and thus we have $(z_0,z_\ell,z_{k+1})\in\calU$ for some
$1\leq \ell\leq k$. We set $(z_0,z_\ell,z_{k+1})$ as a root vertex of $T_P$.
The second condition of
Definition~\ref{definition:union-guard} defines the parent-child condition
of $T_P$ with at most two children per vertex: If $v=(z_a,z_b,z_c)\in V(T_P)$, then
	\begin{itemize}
	\item if $b> a+1$,  there exists some $\{z_a,z_{j},z_b\}\in\calU$ with $a<j<b$. Add one such
	vertex to the set of children of $v$.
	\item if $c> b+1$, there exists some $\{z_b,z_{j},z_c\}\in\calU$ with $b<j<c$.
	Add one such vertex to the set of children of $v$.
	\end{itemize}
Note that given $\calU$, the choice of $T_P$ might not be unique, and
the vertices in $T_P$ correspond to a subset $\calU'$ of $\calU$.

\begin{claim}
$T_P$ is a join tree of $\calH$.
\end{claim}
\begin{proof}[Proof of the Claim]
For every $v\in V(T_P)$, denote by $T_P(v)$ the maximal subtree of $T_P$ rooted in $v$.
Note that for $v=(z_a,z_b,z_c)$, we have $\var(T_P(v))\subseteq \{z_a,\ldots,z_c\}$
by construction.

We first prove that for every $v_p,v_c\in V(T_P)$, if $v_c$ is a descendent of $v_p$
and $z_i\in v_c\cap v_p$, then $z_i$ is contained in every vertex on the path between $v_p$ and
$v_c$.
Let $v_c\in T_P(v_p)$, and by way of contradiction let $v_q$ be 
the first vertex the path not containing $z_i$. Then by construction of $T_P$, the parent of $v_q$ must
either be of the form $(z_i, z_a, z_b)$, or $(z_a, z_b,z_i)$.
In both cases, $v_q$ is of the form $(z_a, z_j, z_b)$, and $\var(T_P(v_q))\subseteq\{z_a,\ldots, z_b\}$.
In the first case $a>i$, and in the second case $b<i$. In either case $z_i\not\in T_P(v_q)$, which is a contradiction to the fact that $z_i\in v_c$.

Let $v_1,v_2$ be two distinct vertices in $T_P$ and $z_i\in v_1\cap v_2$ for some $0\leq i\leq k+1$.
Let $v_3$ be a vertex on the unique path from $v_2$ to $v_1$.
We make a case distinction by how $v_1$ and $v_2$ are connected.
In case either $v_1\in T_P(v_2)$ or $v_2\in T_P(v_1)$, we already showed that $v_3$ contains $z_i$.
So assume that $v_1\not\in T_P(v_2)$ and $v_2\not\in T_P(v_1)$. This means that there is some 
$v_p\in V(T_P)\setminus\{v_1,v_2\}$ with $v_1,v_2\in T_P(v)$, and distinct children $u_1,u_2$ of $v_p$ such that
$v_1\in T_P(u_1)$ and $v_2\in T_P(u_2)$. For $v_p=(z_a, z_b, z_c)$ we have that
$\var(T_P(u_1))\cap\var(T_P(u_2))=\{z_b\}$ by construction, and since $z_i\in v_1\cap v_2$
it follows that $z_b=z_i$. Thus, either
$v_3$ is on the unique path from $v_p$ to $v_1$ or from $v_p$ to $v_2$. Since $v_1$ and $v_2$ are descendents of $v_p$, we already showed that $z_b\in v_3$.
\end{proof}

\begin{claim}
For every $1\leq i\leq k+1$, the set $\{z_{i-1},z_i\}$ is contained in a vertex of $T_P$.
\end{claim}
\begin{proof}[Proof of the Claim]
We describe a path from the root to a vertex containing $\{z_{i-1},z_i\}$.
We start with the root, and at each step we consider a vertex $v=(z_a,z_b,z_c)$ such that $a\leq i-1,i \leq c$.
We have that either $i\leq b$ or $i-1\geq b$. Assume that $i\leq b$.
If $b=a+1$, we have that $a=i-1$ and $b=i$, so we found the vertex we need.
Otherwise, $v$ has a child $(z_a,z_t,z_b)$ with $a\leq i-1,i \leq b$.
We consider this child next. The case that $i-1\geq b$ is symmetrical.
Since the tree is finite, the process will end and we will find such vertex.
\end{proof}

This concludes the proof of the Lemma.
\end{proof}

With this structural property at hand, we can show a union-extension for CQs in which every free-path is union guarded and isolated.

\begin{theorem}\label{thm:isolated}
Let $Q=Q_1\cup\ldots\cup Q_n$ be a union of body-isomorphic acyclic CQs.
If every free-path in $Q$ is union guarded and isolated, then
$\pEnum{Q}\in\DelayClin$.
\end{theorem}
\begin{proof}
Let $Q_1\in Q$. We use the fact that every free-path in $Q_1$ is union guarded and isolated to show that $Q_1$ is union-free-connex
with respect to $Q$. Since this is true for every CQ in the union, $Q$ is free-connex, and according to Theorem~\ref{thm:positive}, $\pEnum{Q}\in\DelayClin$. 

We show how to eliminate a free-path in $Q_1$ by adding virtual atoms. This process can be applied repeatedly to treat all free-paths in $Q_1$.
Let $P=(z_0,\ldots,z_{k+1})$ be a free-path in $Q_1$, and consider a join-tree $T_P$ given by Lemma~\ref{lemma:guardedandisolated}.
We extend every CQ $Q_j$ in $Q$ to a union extension $Q_j^+$ as follows: For every $v\in T_P$, add the atom $R_v(\vec{v})$ to $\atoms(Q)$, where $R_v$ is a fresh relational symbol. 
In Claim~\ref{claim:unionextension} we show that these variables sets are supplied, so this is indeed a valid union extension. 
In Claim~\ref{claim:famousclaim4} we show that extension eliminates $P$ without introducing new free-paths.
As the proof of Claim~\ref{claim:unionextension} uses a bottom-up induction on $T_P$, we will need to use Claim~\ref{claim:acyclicprovided} regarding the subtrees of $T_P$. We use the same notation as in the proof of Lemma~\ref{lemma:guardedandisolated}: For every vertex $v\in V(T_P)$, we denote by $T_P(v)$ the subtree of $T_P$ rooted in $v$. 
Note that $V(T_P(v))$ is a subset of $\calU'$.

\begin{claim}\label{claim:acyclicprovided}
Let $v\in V(T_P)$.
There exists an ext-$\var(T_P(v))$-connex
tree $T'$ for the hypergraph $\calH'=(\var(Q),E(\calH(Q))\cup V(T_P(v)))$. 
\end{claim} 
\begin{proof}[Proof of the Claim]

Let $v=(z_a,z_b,z_c)$. By construction of $T_P$, we
have that $\var(T_P(v))=\{z_a,z_{a+1},\ldots,z_c\}$. Denote by $R$ the
subpath $(z_a,z_{a+1},\ldots,z_c)$ of $P$.
It is possible to show that since $P$ is a chordless path, for every subpath of $P$, we have that $Q$ is $\var(R)$-connex.
Let $T$ be an ext-$\var(R)$-connex tree for $\calH(Q)$, and
let $T_R\subseteq T$ be a connected subtree
of $T$ with $\var(T_R)=\var(R)$. We construct a new tree
$T'$ by first removing the edges among vertices of $T_R$, adding the tree $T_P(v)$ and
then reconnecting the vertices of $T_R$ to $T_P(v)$ as fellows:
Let $u\in V(T_R)$. Since $R$ is a chordless path, we have that $u=\{z_s\}$ for some $a\leq s\leq c$
or $u=\{z_{s},z_{s+1}\}$ for some $a\leq s\leq c-1$. Thus there exists some vertex
$w\in T_P(v)$ with $u\subseteq w$. Chose some arbitrary $w\in T_P(v)$ with this property and
add an edge $(u,w)$. 

Since $T$ is a tree, removing the edges of $T_R$ results in a forest.
For every $u\in V(T_R)$, the connected component of this forest
that contains $u$ does not contain any other vertices in $V(T_R)$. Thus in every step of adding
an edge, we attach a new tree to $T_P(v)$, and no such tree is attached to $T_P(v)$ more then
once. Thus $T'$ is again a tree and acyclic.
\end{proof}

\begin{claim}\label{claim:unionextension}
$Q_1^+$ is a union-extension.
\end{claim}
\begin{proof}
We prove via a bottom-up induction on $T_P$ that every vertex $v\in V(T_P)$ is supplied by some union-extension of a CQ in $Q$.

For the base case, note that the leaves
of $T_P$ are triples of the form $(z_i,z_{i+1},z_{i+2})$ for $0\leq i\leq k-1$.
Since $P$ is a chordless path, for every subpath of $P$ we have that $Q$ is $\var(R)$-connex.
Therefore, $Q$ is $\{z_i,z_{i+1},z_{i+2}\}$-connex. Since also $(z_i,z_{i+1},z_{i+2})$ is contained in some $\free(Q_j)$, all leaves are supplied. 

Consider some vertex $v=(z_a,z_b,z_c)$ of $T_P$ that is not a leaf, and assume that for every child $v'$ of $v$, we have that every vertex in $T(v')$ is supplied. 
We need to show that $v$ is supplied.
Consider the ext-$\var(T_P(v))$-connex tree $T'$ of the hypergraph $(\var(Q),E(\calH(Q))\cup V(T_P(v)))$, 
that was constructed in the proof for Claim~\ref{claim:acyclicprovided}. In this tree, we replace
the node $v$ by the two nodes $v_1=\{z_a,z_b\}$ and $v_2=\{z_b,z_c\}$, and then add the edge
$(v_1,v_2)$. For every edge $e=(u,v)$ that was lost when deleting $v$, we do the following: If $u$ contains the variable $z_c$, add an edge $(u,v_2)$, otherwise add the edge $(u,v_1)$. Since no vertex in $T'$ besides $v$ contains both $z_a$ and $z_c$, this is again a valid join tree. Moreover,
both $\{z_a,z_b\}$ and $\{z_b,z_c\}$ are contained in vertices of $T'\setminus \{v_1,v_2\}$:
If $v$ has two children, then the children contain $\{z_a,z_b\}$ and $\{z_b,z_c\}$. Otherwise,
if $v$ has only one child node, we have that either $b=a+1$ or $c=a+1$. In both such cases,
$\{z_a,z_b\}$ or $\{z_b,z_c\}$ is an edge of the path $P$ and thus contained in a vertex in $T'$.
Thus we have that $T'$ is
ext-$\{z_a,z_b,z_c\}$-connex acyclic. Let $Q_j\in Q$ such that $\{z_a,z_b,z_c\}\subseteq \free(Q_j)$.
Since every vertex in $T'$ is supplied, there exists a union-extension $Q_j^+$ with
$Q_j^+(\free(Q))\leftarrow R_{v_1}(\vec{v_1}),\ldots,R_{v_N}(\vec{v_N})$
and $\{v_1,\ldots,v_N\}=V(T')$. Therefore, $Q_j^+$
supplies $v$.
\end{proof}

\begin{claim}\label{claim:famousclaim4}
The set of free-paths in $Q_1^+$ equals the set of free-paths in $Q_1$ minus $P$.
\end{claim}
\begin{proof}[Proof of the claim]
Since $\{z_0,z_{k+1}\}$ is contained in a vertex of $T$ by  Lemma~\ref{lemma:guardedandisolated},
it is also contained in the variables of some added atom $R_v(\vec{v})$, thus $P$ is not
a free-path of $Q_1^+$. For the sake of a contradiction, assume that there is some new free-path
$P'=(z_0',\ldots,z_{m+1}')$ in $Q_1^+$. The only new edges in $\calH(Q_i^+)$ are between variables in 
$\var(P)$, thus $|var(P)\cap\var(P')|\geq 2$. Let $z_s$ and $z_t$ be the first and last elements in $P'$ that are also in $P$. Note that the $z_s$ and $z_t$ are not neighbors in $P$.
Replacing the path between $z_s$ and $z_t$ in $P'$ with the path 
between $z_s$ and $z_t$ in $P$, we get the path
$P''=(z_0',\ldots, z_s, z_{s+1},\ldots,z_{t-1},z_t,\ldots,z_{m+1}')$,
which is a path in $\calH(Q)$. 
This path contains a chordless sub-path $P'''$ between $z_0'$ and $z_{m+1}'$ containing at least one element in $\{z_s,z_{s+1},\ldots,z_{t-1},z_t\}$. Thus $P'''$ is a free-path in $Q_1$ with a non-empty intersection with $P$, which is a contradiction to the assumption that every free-path is isolated.
\end{proof}

If $Q_1^+$ is free connex, then we are done. Otherwise, by Claim~\ref{claim:famousclaim4},
every free-path $P'$ in $Q_1^+$ is a free-path in $Q_1$ and thus union guarded and isolated in $Q_1$. Since the added atoms only contain variables of $\var(P)$ and $\var(P)\cap\var(P')=\emptyset$, we have that $P'$ is also union guarded and isolated in $Q_1^+$.
The tree-structure from Claim~\ref{claim:acyclicprovided} over the root is a join-tree for the union extension.
Since the extension consists of body-isomorphic acyclic queries, we can iteratively apply this process until $Q_1$ becomes free-connex.
\end{proof}

Theorem~\ref{thm:no-union-guard} and Theorem~\ref{thm:isolated} form a dichotomy for unions of body-isomorphic acyclic CQs where all free-paths are isolated: if all free-paths are are guarded, the UCQ is in $\DelayClin$; otherwise, it is not in $\DelayClin$ assuming $\matmul$. 
It is still left to handle cases like Example~\ref{example:star}, of unions containing body-isomorphic acyclic CQs where some free-path is union guarded but not isolated. After solving this case, and regarding unions of difficult CQs, we should also explore unions containing body-isomorphic acyclic CQs but also other difficult CQs.

\subsection{Unions Containing Cyclic CQs}\label{sec:future-cyclic}

We now discuss UCQs containing at least one cyclic query. 
We first mention that many of the observations we have regarding acyclic CQs also apply here.
In the following examples, $Q_1$ is cyclic while $Q_2$ is free-connex. Example~\ref{example:cyclic-easy} shows that unions containing cyclic CQs may be tractable and covered by Theorem~\ref{thm:positive}. Example~\ref{example:cyclic-free-path} demonstrates that it is not enough to resolve the cyclic structures in CQs, but that we should also handle free-paths. Finally, Example~\ref{example:cyclic-guarded-hard} shows that, much like Example~\ref{example:separated} in the acyclic case, even if all difficult structures are unguarded, the original reductions showing the intractability of single CQs may not work.

\begin{example}[A tractable UCQ containing a cyclic CQ]\label{example:cyclic-easy}
Let $Q=Q_1\cup Q_2$ with
\begin{align*}
	Q_1(x,y,z,w)\leftarrow& R_1(y,z,w,x),R_2(t,y,w),R_3(t,z,w),R_4(t,y,z),\\
	Q_2(x,y,z,w)\leftarrow &R_1(x,z,w,v),R_2(y,x,w).
\end{align*}
The CQ $Q_2$ is free-connex, and $\{y,x,z,w\}\subseteq\free(Q_2)$.
Thus, $Q_2$ supplies $V_1$ to $Q_1$.
The body-homomorphism from $Q_2$ to $Q_1$ maps $V_2=\{y,x,z,w\}$ to $V_1=\{t,y,z,w\}$.
Adding the virtual atom $R'(t,y,z,w)$ to $Q_1$ results in a free-connex union-extension, so $\pEnum{Q_1\cup Q_2}\in\DelayClin$ according to Theorem~\ref{thm:positive}.
\qed
\end{example}

\begin{example}[An intractable UCQ with guarded cycles]\label{example:cyclic-free-path}
Let $Q=Q_1\cup Q_2$ with
\begin{align*}
	Q_1(x,y,v)&\leftarrow R_1(v,z,x),R_2(y,v),R_3(z,y)\text{ and}\\
	Q_2(x,y,v)&\leftarrow R_1(y,v,z),R_2(x,y).
\end{align*}
The union $Q_1\cup Q_2$ is intractable despite the fact that $Q_2$ guards the cycle variables $\{v,y,z\}$. This is due to the unguarded free-path $(x,z,y)$ in $Q_1$. Similarly to Example~\ref{example:body-iso-matmul}, we can encode matrix multiplication to $x,z,y$ and conclude that $\pEnum{Q}\not\in\DelayClin$ assuming $\matmul$.
\qed
\end{example}

\begin{example}[An unresolved UCQ with unguarded difficult structures]\label{example:cyclic-guarded-hard}
Let $Q=Q_1\cup Q_2$ with
\begin{align*}
	Q_1(x,z,y,v)&\leftarrow R_1(x,z,v),R_2(z,y,v),R_3(y,x,v)\text{ and}\\
	Q_2(x,z,y,v)&\leftarrow R_1(x,z,v),R_2(y,t_1,v),R_3(t_2,x,v).
\end{align*}
We do not know the complexity of this example.
As $Q_2$ does not have a free variable that maps to $y$ via a homomorphism, 
$Q_1$ is not union-free-connex, so we cannot conclude using Theorem~\ref{thm:positive} that $Q_1\cup Q_2$ is tractable.
The only difficult structure in $Q_1$ is the cycle $x,y,z$, but encoding the triangle finding problem to this cycle in $Q_1$, like we did in Example~\ref{example:intractables}, could result in $n^3$ answers to $Q_2$. This means that if the input graph has triangles, we are not guaranteed to find one in $O(n^2)$ time by evaluating the union efficiently.
\qed
\end{example}

In addition to these issues, in the cyclic case, even if the original difficult structures are resolved, the extension may introduce new ones. Resolving the following example in general is left for future work.

\begin{example}[An unresolved UCQ with guarded difficult structures]\label{example:newtetra}
We start with a specific UCQ that we later generalize into a family of UCQs.
Let $Q=Q_1\cup Q_2$ with
\begin{align*}
	Q_1(x_2,x_3,x_4)&\leftarrow R_1(x_2,x_3,x_4),R_2(x_1,x_3,x_4),R_3(x_1,x_2,x_4),\\
	Q_2(x_2,x_3,x_4)&\leftarrow R_1(x_2,x_3,x_1),R_2(x_4,x_3,v).
\end{align*}
There is a body-homomorphism from $Q_2$ to $Q_1$, but $Q_1$ is not contained in $Q_2$. 
$Q_1$ is cyclic, as it has the cycle $(x_1,x_2,x_3)$. This is the only difficult structure in $Q_1$, i.e. $\calH(Q_1)$ does not contain a hyperclique or a free-path. The query $Q_2$ is both free-connex and $\{x_2,x_3,x_4\}$-connex, and so it supplies $\{x_2,x_3,x_4\}$. These variables are mapped to $\{x_1,x_2,x_3\}$ via the body-homomorphism to $Q_1$. Nevertheless, extending $Q_1$ with a virtual atom $R(x_1,x_2,x_3)$ does not result in a free-connex CQ.
Even though the extension ``removes'' all difficult structures from $Q_1$, it is intractable as it introduces a new difficult structure, namely the hyperclique $\{x_1,x_2,x_3,x_4\}$.

In this case, we have $\pEnum{Q_1\cup Q_2}\not\in\DelayClin$ assuming \fourclique, with
a reduction similar to that of Example~\ref{example:acyclic-4clique}:
Given an input graph, compute all triangles and encode them to the three relations. For every triangle $\{a,b,c\}$, add 
the tuple $((a,x_2),(b,x_3),(c,x_4))$ to $R_1$, $((a,x_1),(b,x_3),(c,x_4))$ to $R_2$ and $((a,x_1),(b,x_2),(c,x_4))$ to $R_3$. By concatenating variable names, we can identify which solutions correspond to which CQ as described in Lemma~\ref{lemma:diff-domain}, and thus are able to ignore the answers to $Q_2$.
The answers to $Q_1$ represent $3$ vertices that appear in a $4$-clique:
For every answer $((b,x_2),(c,x_3),(d,x_4))$ to $Q_1$, we know that $((b,x_2),(c,x_3),(d,x_4))\in R_1$,
and there exists some $a$ with $((a,x_1),(c,x_3),(d,x_4))\in R_2$ and $((a,x_1),(b,x_2),(d,x_4))\in R_3$.
This means that $\{a,b,c,d\}$ is a $4$-clique.
In the opposite direction, for every $4$-clique $\{a,b,c,d\}$ we have that $\{b,c,d\}$, $\{a,c,d\}$ and
$\{a,b,d\}$ are triangles. By construction, the tuple $((b,x_2),(c,x_3),(d,x_4))$ is an answer to $Q_1$.
As there are $O(n^3)$ triangles and there can be at most $O(n^3)$ solutions to $Q_2$, we solve $4$-clique in $O(n^3)$ time.
This proves that the UCQ is intractable assuming \fourclique.

This example can be generalized to higher orders. There, we do not have a similar lower bound. Consider the union of the following:
\begin{align*}
	Q_1(x_2,\ldots,x_k)\leftarrow& \{R_i (x_1,...,x_{i-1},x_{i+1},...,x_k)\mid 1\leq i\leq k-1\}\\
	Q_2(x_2,\ldots,x_k)\leftarrow& R_1(x_2,\ldots,x_{k-1},x_1),R_2(x_k,x_3,\ldots,x_{k-1},v).
\end{align*}
Again, the query $Q_1$ is cyclic and $Q_2$ is free-connex. Even though $Q_2$ supplies variables that map to $\{x_1,\ldots,x_{k-1}\}$ via a body-homomorphism, adding a virtual atom with these variables does not result in a free-connex extension, as this extension is again cyclic.
As in Example~\ref{example:star}, we can encode $k$-clique to this example in general, but this does not imply a lower bound for large $k$ values.
\qed
\end{example}

\section{Modified Settings}\label{sec:extensions}

We devote this section to discussing the implications of this work when the settings are slightly different than the ones assumed so far:
in Section~\ref{sec:CDs} we discuss the common case that there are cardinality dependencies in the database schema; in Section~\ref{sec:dis} we discuss unions of conjunctive queries with disequalities; and in Section~\ref{sec:space} we discuss the case where we want to restrict the space requirements of our algorithm.
Reaching complete results regarding the tractability of UCQs in these settings is beyond the scope of this work, but we do discuss some insights we can conclude regarding these settings following our work.

\subsection{Cardinality Dependencies}\label{sec:CDs}

In this work, we inspect the complexity of answering UCQs when we cannot assume anything on the database instance given as input. Nevertheless, usually we do know what the database we have at hand represents, and this implies restrictions on the combinations of values that the relations may hold.
One way of formalizing such restrictions is through Cardinality Dependencies (CDs), also known as degree bounds~\cite{abo2017shannon}, given as part of the schema. Cardinality dependencies are of the form $(R:A\rightarrow B, c)$ where $A,B\subseteq\{1,\ldots,\arity(R)\}$ and $c$ is a positive natural number. This dependency means that for every database instance of this schema, the number of tuples in $R$ with the same values in the $A$ indices but distinct values in the $B$ indices is at most $c$. Functional Dependencies (FDs) are a special case of cardinality dependencies where $c=1$, and keys are a special case of FDs where $A\cup B = \{1,\ldots,\arity(R)\}$.

As we know, the tractable CQs over general instances are the ones with a free-connex structure. In the presence of cardinality dependencies, additional CQs are tractable. The tractable CQs can be identified by treating the dependencies as between variables and ``chasing'' the dependencies: the variables of the right-hand side of a dependency are added to any atom that contains the variables of its left-hand side; this extension is applied to the head of the CQ in addition to the atoms. Applying any such number of extension steps generates a \emph{CD-extended CQ}, and if this CQ is free-connex, the original CQ is tractable~\cite{DBLP:conf/icdt/CarmeliK18}.
This resembles the technique used in this article to show the tractability of a UCQ where we extend the CQs of the union with additional atoms to reach a tractable (free-connex) structure.
Following is the formal definition of CD-Extended CQs.

\begin{definition}[CD-Extended CQ~\cite{DBLP:conf/icdt/CarmeliK18}]
	Let $Q(\vec{p}) \leftarrow R_1(\vec{v_1}), \dots, R_m(\vec{v_m})$ be a CQ
	over a schema with the dependencies $\Delta$. 
	There are two possible types of extension steps:
	\begin{itemize}
		\item The extension of an atom $R_i(\vec{v_i})$ according to an CD $R_j\colon A\rightarrow b$.\\
		Prerequisites: $\vec{v_j}[A]\subseteq \vec{v_i}$ and $\vec{v_j}[b]\notin \vec{v_i}$.\\
		Effect: The arity of $R_i$ increases by one, $R_i(\vec{v_i})$ is replaced by $R_i(\vec{v_i},\vec{v_j}[b])$.
		In addition, every $R_k(\vec{v_k})$ such that $R_k$=$R_i$ and $k\neq i$ is replaced with $R_k(\vec{v_k},t_k)$, where $t_k$ is a fresh variable in every such step.
		\item The extension of the head $Q(\vec{p})$ according to an CD $R_j\colon A\rightarrow b$.\\
		Prerequisites: $\vec{v_j}[A]\subseteq \vec{p}$ and $\vec{v_j}[b]\notin \vec{p}$.\\
		Effect: The head is replaced by $Q(\vec{p},\vec{v_j}[b])$.
	\end{itemize}
	The {\em CD-extension} of $Q$ is the CQ obtained by performing all possible extension steps on $Q$ according to CDs of $\Delta$ until a fixpoint is reached.
\end{definition}

We devote this section to discussing whether there is a need to combine these two techniques and how this can be done.
First, we discuss the need to combine the two techniques.

\begin{example}\label{example:FD}
Consider the UCQ $Q=Q_1\cup Q_2$ with:
\begin{align*}
	Q_1(x,w,u)\leftarrow& R_1(x,y),R_2(y,z),R_3(z,w),R_4(u)\\
	Q_2(x,w,u)\leftarrow& R_1(x,w),R_2(t,u)
\end{align*}
over a schema with the FD $R_3:2\rightarrow 1$.
\end{example}

The CQ $Q_1$ in itself is not free-connex, as it contains the free-path $(x,y,z,w)$.
Even if we take the FD into account, $Q_1$ is not tractable.  The FD-extension of $Q_1$ is obtained by "chasing" the FD $w\rightarrow z$, and adding $z$ to the query head. 
$$
Q_1^+(x,w,u,z)\leftarrow R_1(x,y),R_2(y,z),R_3(z,w),R_4(u)
$$
$Q_1^+$ is not free-connex as it contains the free-path $(x,y,z)$.
This means that by using the FD techniques alone, $Q$ does not become tractable.
If we take the union into account but ignore the FD, $Q_2$ can supply $\{x,w,u\}$, which map to $\{x,y,z\}$ in $Q_1$ via a body-homomorphism, but adding atoms with any subset of these variables cannot make $Q_1$ tractable, because it cannot prevent the appearance of a free-path between $x$ and $w$. It can only shorten it from $(x,y,z,w)$ to $(x,z,w)$. This means that by using the union techniques alone, $Q$ does not become tractable.
Nonetheless, if we combine the extensions, we both add $z$ to the head and add an atom with $\{x,y,z\}$, we get the following tractable extension:
\begin{align*}
	Q_1^{++}(x,w,u,z)\leftarrow& R_1(x,y),R_2(y,z),R_3(z,w),R_4(u),R'(x,y,z)
\end{align*}
Practically, this means that $Q$ can be solved efficiently by first solving $Q_2$, then using its results as another atom $R'$, then solving $Q_1^{++}$ with the help of $R'$, and translating every answer to $Q_1^{++}$ to an answer to $Q_1$ by projecting out $z$. The FD guarantees that the translation phase does not result in duplicates. If we had a cardinality dependency instead of an FD, we would have a constant number of duplicates per answer to $Q_1$, and the duplicates could be ignored efficiently using the Cheater's Lemma (Lemma~\ref{lemma:cheaters}).

Example~\ref{example:FD} shows that if we have UCQs over a schema with CDs, we can identify additional tractable cases by combining the two techniques.
The two types of extensions can easily be combined in any order. To show this, we just need to modify the definitions of a free-connex union to account for CDs, and use the same proof as before.
\begin{definition}
	Let $Q= Q_1\cup\ldots\cup Q_n$ be a UCQ over a schema with cardinality dependencies.
	\begin{itemize}
	\item A \emph{CD-extension} for $Q$ is of the form $Q= Q_1'\cup\ldots\cup Q_n'$ such that for all $i=1,\dots,n$ either $Q_i'=Q_i$ or $Q_i'$ is the CD-extension of $Q_i$.
		\item 
		A \emph{CD-union extension sequence} for $Q$ is a sequence $Q^1,\ldots,Q^N$ where $Q=Q^1$ and for all $1< j \le N$, it holds that $Q^j$ is either a union extension or a CD-extension of $Q^{j-1}$.
		  \end{itemize}
\end{definition}

Note that a CD-extension of a CQ may result in an increase in the arity of the head. Therefore, a CD-extension of a UCQ may not be a valid UCQ in terms of the standard definitions, as these require all CQ heads to have the same arity. However, as the algorithms we use in this article for the evaluation of UCQs do not rely on the size of the heads, we can evaluate such a union as usual.

\begin{theorem}
	Let $Q$ be a UCQ over a schema with cardinality dependencies. If $Q$ has a CD-union extension sequence ending in a UCQ containing only free-connex CQ, then $\pEnum{Q}\in\DelayClin$.
\end{theorem}
\begin{proof}
Consider an extension sequence $Q^1,\ldots,Q^N$. First, we inductively build an instance for $Q^N$. The instance for $Q^1$ is given as input.
If $Q^{i+1}$ is a union-extension of $Q^{i}$, use the construction given in this article for CQs without CDs (i.e., follow the provision phase described in Theorem~\ref{thm:positive}) to obtain the instance for $Q^{i+1}$.
Otherwise, $Q^{i+1}$ is a CD-extension of $Q^{i}$,
and so use the construction for CD-extensions~\cite[Theorem 2 (Claim 1)]{DBLP:conf/icdt/CarmeliK18} instead.
Then, evaluate every CQ in $Q^N$ using the CDY algorithm.
For every answer produced during the construction or in the final evaluation, translate it back to an answer to $Q^1$ by projecting out any extra variables (these were possibly introduced due to CD-extensions).
The correctness follows from the correctness of the individual extensions.
The desired time complexity is achieved using the Cheater's Lemma: since $N$ (the length of the extension sequence) is constant with respect to the input database, the number of times every answer is produced is bound by a constant.
\end{proof}

Note that in Example~\ref{example:FD} the order of applying the extension does not matter. That is, first adding $R'$ and then adding $z$ wherever $w$ appears results in the same query as first adding $z$ and then adding $R'$. This is not always the case.
Consider the same UCQ as before but with the FD $R_3:1\rightarrow 2$ instead of the reverse FD. When we treat the FD as between variables in $Q_1$, we get $z\rightarrow w$, and we can add $w$ wherever $z$ appears in $Q_1$ as an FD-extension. The individual extensions are still not enough:
adding $w$ to $R_2$ results in the free-path $(x,y,w)$, and adding an atom with $\{x,y,z\}$ or a subset of these variables does not help as before.
The combination of the two extensions is enough if we first use the union and only then use the FD. We get the following free-connex CQ.
\begin{align*}
	Q_1'(x,w,u)\leftarrow& R_1(x,y),R_2(y,z,w),R_3(z,w),R_4(u),R'(x,y,z,w)
\end{align*}
If we apply the extensions in the opposite order, we do not have $w$ in $R'$, and we have the free-path $(x,z,w)$, so the extension is not tractable.

Overall we saw that the two techniques can and should be combined to find more tractable cases. However, one should be aware that it is not enough to first take the CD-extensions of all CQs in the union and then perform union-extensions. The order in which the different extensions are combined matters.

\subsection{Disequalities}\label{sec:dis}
In this section we discuss the enumeration complexity of unions of CQs with disequalities. In the case of individual CQs with disequalities, the disequalities have no effect on the enumeration complexity~\cite{bdg:dichotomy}. That is, one can simply ignore the disequalities, and the remaining CQ is free-connex if and only if the query is tractable (under the same assumptions as in Theorem~\ref{theorem:originalDichotomy}).
A natural question is whether this happens also with UCQs. We next show that it does not, and some tractable UCQs become intractable with the addition of disequalitites.
\begin{example}
Let $Q=Q_1\cup Q_2$ with 
\begin{align*}
	Q_1(x,y,z,w)\leftarrow& R_1(x,y),R_2(y,z),R_3(z,z),R_4(x,z),R_5(w)\\
	Q_2(x,y,z,w)\leftarrow& R_1(x,y),R_3(w,z),z\neq w
\end{align*}
We can show this union is hard by encoding triangle detection to the cycle in $Q_1$ as usual. Due to the disequality, $Q_2$ has no answers over this construction, so finding an answer to the UCQ in linear time detects a triangle in linear time. If there was no disequality, $Q_2$ would supply variables that map to the cycle in $Q_1$ via a body-homomorphism, and the UCQ would become tractable.
\end{example}

This example shows that (unlike with CQs) the positive results presented in this article for UCQs do not hold when we ignore disequalities. Nevertheless, a simple adjustment of the proofs of Section~\ref{sec:positive} reveals cases where a UCQ with disequalities containing hard CQs is easy.

\begin{definition}\label{def:disequalities}
Let $Q_1,Q_2$ be CQs with disequalities.
\begin{itemize}
\item A \emph{body-homomorphism} from $Q_2$ to $Q_1$ is a mapping $h:\ivar(Q_2)\rightarrow\ivar(Q_1)$ such that:
\begin{itemize}
\item for every atom $R(\vec{v})$ of $Q_2$, $R(h(\vec{v}))\in Q_1$.
\item for every disequality $v \neq u$ of $Q_2$, $h(v)\neq h(u)$ is in $Q_1$.
\end{itemize}
\item We say that $Q_2$ \emph{supplies} a set $V$ of variables if $Q_2$ is free-connex and 
$V\subseteq\free(Q_2)$.
	\end{itemize}
\end{definition}

Note that the definition of a body-homomorphism now takes the disequalities into account.
Note also that the definition of supplying variables here is simpler than before as we require the supplying CQ to be free-connex.\footnote{This is a restriction we impose for simplicity, and it is possible that a more refined condition may lead to additional tractable cases. However, this seems to require going into the details of the algorithm for CQs with disequalities, and it is left for future work.}
We set the definition of a union extension as in Section~\ref{sec:positive} (but using the new definition of supplied variables and body-homomorphisms). Using these definitions, we can get a result similar to that of Theorem~\ref{thm:positive}.
We first prove the equivalent of Lemma~\ref{lemma:provide} in our settings.

\begin{lemma}\label{lemma:provide-dis}
Let $Q_2$ be a CQ with disequalities that supplies the variables $\vec{v}_2$.
Given an instance $I$,
one can compute with linear time preprocessing and constant delay $M=Q_2(I)$, and 
$M$ can be translated in time $O(|M|)$ to a relation $R^{M}$ such that:
for every CQ $Q_1$ with a body-homomorphism $h$ from $Q_2$ to $Q_1$
and for every answer $\mu_1\in\full(Q_1)(I)$,
there is the tuple $\mu_1(h(\vec{v}_2))\in R^{M}$.
\end{lemma}
\begin{proof}
Let $Q_2$ be a CQ with disequalities that supplies the variables $\vec{v}_2$.
Since $Q_2$ is free-connex, it can be answered with linear preprocessing and constant delay~\cite{bdg:dichotomy}.
We define $R^{M}=\{\mu_2(\vec{v}_2)\mid \mu_2\in Q_2(I)\}$.
Let $Q_1$ be a CQ such that there is a body-homomorphism $h$ from $Q_2$ to $Q_1$, and
let $\mu_1\in \full(Q_1)(I)$.
Since $h$ is a body-homomorphism, for every atom $R(\vec{v})$ in $Q_2$, $R(h(\vec{v}))$ is an atom in $Q_1$, and for every disequality $v \neq u$ of $Q_2$, $h(v)\neq h(u)$ is in $Q_1$.
Since $\mu_1$ is an answer to $\full(Q_1)$, for such atoms and disequalities $\mu_1(h(\vec{v}))\in R^I$ and $\mu_1(h(v))\neq \mu_1(h(u))$.
This means that $\mu_1\circ h|_{\free(Q_2)}$ is an answer to $Q_2$, so there exists $\mu_2\in Q_2(I)$ such that $\mu_1\circ h|_{\free(Q_2)}=\mu_2$.
By construction, $\mu_1(h(\vec{v}_2))=\mu_2(\vec{v}_2)\in R^{M}$.
\end{proof}

With Lemma~\ref{lemma:provide-dis} at hand, we can apply the proof of Theorem~\ref{thm:positive} as is. We only need to use Lemma~\ref{lemma:provide-dis} instead of Lemma~\ref{lemma:provide}. This proves the following result.

\begin{theorem}\label{thm:dis-positive}
	Let $Q$ be a union of CQ with disequalities. If $Q$ is free-connex, then $\pEnum{Q}$ is in $\DelayClin$.
\end{theorem}

\begin{example}\label{example:dis-positive}
Let $Q=Q_1\cup Q_2$ with 
\begin{align*}
Q_1(x,y,w)&\leftarrow R_1(x,z),R_2(z,y),R_3(y,w),x\neq z\text{ and }\\ 	
Q_2(x,y,w)&\leftarrow R_1(x,y),R_2(y,w),x\neq y
\end{align*}
This is the same as Example~\ref{example:first} except for the added disequalities. Without the disequalities this union is easy as $Q_2$ supplies $\{x,y,w\}$, and adding a virtual atom with a union extension to $Q_1$ results in a free-connex form. Since $Q_2$ is free-connex and the disequality of $Q_2$ maps to that of $Q_1$ using the body-homomorphism, this union is easy even with the disequalities according to Theorem~\ref{thm:dis-positive}.
\end{example}

Theorem~\ref{thm:dis-positive} can be used also to show the tractability of UCQs that are not naturally of that form.
For example, in the case of the last example, the union is easy even without the disequality in $Q_1$.
\begin{example}\label{ex:dis-also-pos}
Let $Q=Q_1\cup Q_2$ with 
\begin{align*}
Q_1(x,y,w)&\leftarrow R_1(x,z),R_2(z,y),R_3(y,w)\text{ and }\\ 	
Q_2(x,y,w)&\leftarrow R_1(x,y),R_2(y,w),x\neq y
\end{align*}
We can partition the answers to $Q_1$ to those that assign the same values to $x$ and $z$ and those that do not. $Q$ is equivalent to $Q'=Q_1^\neq\cup Q_1^=\cup Q_2$ with
\begin{align*}
Q_1^\neq(x,y,w)&\leftarrow R_1(x,z),R_2(z,y),R_3(y,w),x\neq z\text{ and }\\ 	
Q_1^=(x,y,w)&\leftarrow R_1(x,x),R_2(x,y),R_3(y,w)\text{ and }\\ 	
Q_2(x,y,w)&\leftarrow R_1(x,y),R_2(y,w),x\neq y
\end{align*}
Here, $Q_2$ and $Q_1^=$ are free-connex, and $Q_1^\neq$ has a free-connex union-extension as explained in Example~\ref{example:dis-positive}.
\end{example}

Unlike in the case of individual CQs, the complexity of answering unions of CQs changes with the addition of disequalities. We proved the tractability of some unions with disequalities that can be easily concluded from the results of this article, but identifying all of the tractable cases is left for future work.

\subsection{Restricted Space}\label{sec:space}

In this article, we only considered time bounds. The class $\CDlin$ describes the problems that can be solved with the same time bounds, but with the additional restriction that the available space for writing during the enumeration phase is constant. Evaluating free-connex CQs is in $\CDlin$, and Kazana offers a comparison between $\DelayClin$ and $\CDlin$~\cite[Section 8.1.2]{kazana:thesis}.
All lower bounds presented in this article naturally hold for $\CDlin$ as we know that, by definition of the classes, $\CDlin\subseteq\DelayClin$. The tractability of unions containing only free-connex CQs, as explained in Theorem~\ref{thm:easyunions}, also holds for this class.
However, in our techniques for the tractable UCQs that do not contain only tractable CQs, 
the memory to which we write during the enumeration phase may increase in size by a constant with every new answer.
An interesting question is whether we can achieve the same time bounds when restricting the memory according to $\CDlin$.

The first reason we use polynomial space is to regularize the delay and avoid duplicates. This is achieved in the Cheater's Lemma (Lemma~\ref{lemma:cheaters}) by storing all results we produce. 
We should mention in that context that regularizing the delay is arguably not of practical importance. In practice, once we have an answer, we will probably want to use it immediately.
For this reason, one might be satisfied with simply using a relaxed complexity measure which captures the fact that an algorithm performs "just as good" as linear preprocessing and constant delay when given enough space: linear partial time (see Definition~\ref{def:linpartial}).
The Cheater's Lemma shows that, whenever we have an algorithm $\calA$ that runs in linear partial time, there is a linear preprocessing and constant delay algorithm $\calA'$ that uses additional space and computes the same results, where every result in $\calA$ is returned no later than the same result in $\calA'$.
In fact, if we denote by $\linpartial$ the class of enumeration problems that can be solved by a linear partial time algorithm, $\linpartial = \DelayClin$.
Please note that this does not imply $\linpartial = \CDlin$, as we currently do not know whether $\DelayClin=\CDlin$.

The other reason that we use large amounts of space is that we need to store the generated relations that correspond to the virtual atoms in order to evaluate the union extension as specified in Theorem~\ref{thm:positive}.
The discussion above implies that it makes sense to focus on eliminating this fundamental reason.
As the next example demonstrates, this seems to require a different approach than the one presented in this article.

\begin{example}\label{example:no-space}
Let $Q=Q_1\cup Q_2$ with 
\begin{align*}
	Q_1(x,y,z,w)\leftarrow& R_1(x,y),R_2(y,z),R_3(z,x),R_4(x,w)\text{ and}\\
	Q_2(x,y,z,w)\leftarrow& R_1(x,t_1),R_2(y,t_2),R_3(z,t_3),R_4(w,t_4)
\end{align*}
$Q_1$ is cyclic, so it is not possible to find an answer to it in linear time. However, since $Q_2$ is free-connex and supplies variables that map to the cycle $\{x,y,z\}$ in $Q_1$ via a body-homomorphism, and since $Q_1$ becomes free connex when adding an atom with these variables, the union is solvable in linear preprocessing time and constant delay by saving the results of $Q_2$ as a relation as suggested in Section~\ref{sec:positive}. In this case, there is also a way of solving it with no additional space: use CDY to solve $Q_2$ efficiently; for every answer $(a,b,c,d)$ to $Q_2$, check if $a=d$, $(a,b)\in R_1$, $(b,c)\in R_2$ and $(c,a)\in R_3$; if all these conditions are met, go over all tuples $(a,e)\in R_4$ and output $(a,b,c,e)$.
These checks can be done in constant time per answer in the RAM model by constructing lookup tables for each relation during preprocessing.
This finds all answers to $Q_1\cup Q_2$ with constant time per answer.
However, if an answer belongs to both CQs, we print it twice.
The duplicates can easily be avoided by replacing $R_4$ in $Q_1$ with a modified version, lacking the tuples that cause duplicates.
During preprocessing, compute
$R_4'=R_4\setminus\{(a,e)\in R_4\mid\exists{t}\text{ s.t. }(e,t)\in R_4\}$.
During enumeration, for every qualifying answer $(a,b,c,d)$ to $Q_2$, go over all tuples $(a,e)\in R_4'$ and output $(a,b,c,e)$.
\end{example}

Example~\ref{example:no-space} shows that reducing the space requirements is sometimes possible. We do not know however whether this is the case for all free-connex UCQs. Consider Example~\ref{example:first}. The same approach would mean that for every answer $(a,b,c)$ to $Q_2$ we go over all tuples $(c,d)\in R_3$ and print $(a,c,d)$. The problem here is that we will print many duplicates: every answer $(a,c,d)$ will be produced as many times as there are different $b$ values with $(a,b,c)\in Q_2(I)$.

In conclusion,
UCQs that contain intractable CQs can sometimes be made easy by using virtual atoms. In this article, we proposed instantiating a relation that corresponds to each virtual atom during enumeration. This instantiation can sometimes be avoided, but it requires a different approach, and it is left for future work to find when this can be done.
The discussion above also gives rise to a possible relaxation that can be considered a first step: can we evaluate such UCQs efficiently with only constant writing memory available during enumeration, when we allow for linear partial time and a constant number of duplicates per answer?
Note that in Example~\ref{example:no-space} we do not need this relaxation, so another interesting question is whether this relaxation is useful for our task or can all UCQs that can be answered with this relaxed measure also be answered with linear preprocessing and constant delay without extra space.

\section{Conclusions}\label{sec:conclusions}

In this article we study the enumeration complexity of UCQs with respect to $\DelayClin$. 
We formalized how CQs within a union can make each other easier by supplying variables, and we introduced union extensions. Then, we defined free-connex UCQs, and showed that these are tractable.
In particular, we demonstrated that UCQs containing only intractable CQs may be tractable.

We showed that in case of a union of two difficult CQs or two acyclic body-isomorphic CQs, free-connexity fully captures the tractable cases. Nevertheless, achieving a full classification of UCQs remains an open problem.
Section~\ref{sec:dichotomy} discusses the next steps that are required to fully characterize which UCQs are in $\DelayClin$ and provides examples with unknown complexity. Resolving
Examples~\ref{example:separated}, \ref{example:star}, \ref{example:cyclic-guarded-hard} and \ref{example:newtetra} is a necessary step on the way to a future dichotomy.

In addition to the general settings, we also showed that in three other variations: cardinality dependencies in the schema, UCQs with disequalities and restricted space, we can identify tractable UCQs that contain intractable CQs.
Section~\ref{sec:extensions} offers a preliminary idea on how the results of the article can be applied in these variations, and leaves substantial room for future exploration of this topic.
For example,
Theorem~\ref{thm:dis-positive} in Section \ref{sec:dis} shows the tractability of free-connex UCQs with disequalities, but as Example~\ref{ex:dis-also-pos} demonstrates, additional such queries are tractable. Can we find a characterization that captures all such tractable UCQs with disequalities?

Most of this article ignores space consumption in order to concentrate on time.
An interesting question is how much space is required to achieve the time bounds described here. As Example~\ref{example:no-space} shows, sometimes we can solve UCQs with only constant memory available for writing during the enumeration phase. It is left open whether all free-connex UCQs can be solved under this restriction, and in general whether $\DelayClin=\CDlin$ (see the discussion in Section~\ref{sec:space}).

\begin{acks}
The authors are greatly
thankful to Benny Kimelfeld for his advice.
The work of Nofar Carmeli was supported by the Irwin and Joan Jacobs Fellowship and by the Google PhD Fellowship.
The work of Markus Kr\"oll was supported by the Austrian Science Fund (FWF):
P30930-N35, W1255-N23.
\end{acks}

\bibliographystyle{ACM-Reference-Format}
\bibliography{main}


\begin{thebibliography}{00}


\ifx \showCODEN    \undefined \def \showCODEN     #1{\unskip}     \fi
\ifx \showDOI      \undefined \def \showDOI       #1{#1}\fi
\ifx \showISBNx    \undefined \def \showISBNx     #1{\unskip}     \fi
\ifx \showISBNxiii \undefined \def \showISBNxiii  #1{\unskip}     \fi
\ifx \showISSN     \undefined \def \showISSN      #1{\unskip}     \fi
\ifx \showLCCN     \undefined \def \showLCCN      #1{\unskip}     \fi
\ifx \shownote     \undefined \def \shownote      #1{#1}          \fi
\ifx \showarticletitle \undefined \def \showarticletitle #1{#1}   \fi
\ifx \showURL      \undefined \def \showURL       {\relax}        \fi
\providecommand\bibfield[2]{#2}
\providecommand\bibinfo[2]{#2}
\providecommand\natexlab[1]{#1}
\providecommand\showeprint[2][]{arXiv:#2}

\bibitem[\protect\citeauthoryear{Abo~Khamis, Ngo, and Suciu}{Abo~Khamis
  et~al\mbox{.}}{2017}]%
        {abo2017shannon}
\bibfield{author}{\bibinfo{person}{Mahmoud Abo~Khamis}, \bibinfo{person}{Hung~Q
  Ngo}, {and} \bibinfo{person}{Dan Suciu}.} \bibinfo{year}{2017}\natexlab{}.
\newblock \showarticletitle{What do Shannon-type Inequalities, Submodular
  Width, and Disjunctive Datalog have to do with one another?}. In
  \bibinfo{booktitle}{{\em Proceedings of the 36th ACM SIGMOD-SIGACT-SIGAI
  Symposium on Principles of Database Systems}}. \bibinfo{pages}{429--444}.
\newblock


\bibitem[\protect\citeauthoryear{{Alman} and {Williams}}{{Alman} and
  {Williams}}{2018}]%
        {MatMulLimits}
\bibfield{author}{\bibinfo{person}{J. {Alman}} {and} \bibinfo{person}{V.~V.
  {Williams}}.} \bibinfo{year}{2018}\natexlab{}.
\newblock \showarticletitle{Limits on All Known (and Some Unknown) Approaches
  to Matrix Multiplication}. In \bibinfo{booktitle}{{\em 2018 IEEE 59th Annual
  Symposium on Foundations of Computer Science (FOCS)}}.
  \bibinfo{pages}{580--591}.
\newblock


\bibitem[\protect\citeauthoryear{Alon, Yuster, and Zwick}{Alon
  et~al\mbox{.}}{1997}]%
        {cycles}
\bibfield{author}{\bibinfo{person}{N. Alon}, \bibinfo{person}{R. Yuster}, {and}
  \bibinfo{person}{U. Zwick}.} \bibinfo{year}{1997}\natexlab{}.
\newblock \showarticletitle{Finding and counting given length cycles}.
\newblock \bibinfo{journal}{{\em Algorithmica\/}} \bibinfo{volume}{17},
  \bibinfo{number}{3} (\bibinfo{year}{1997}), \bibinfo{pages}{209--223}.
\newblock
\showISBNx{1432-0541}
\showDOI{%
\url{https://doi.org/10.1007/BF02523189}}


\bibitem[\protect\citeauthoryear{Bagan, Durand, and Grandjean}{Bagan
  et~al\mbox{.}}{2007}]%
        {bdg:dichotomy}
\bibfield{author}{\bibinfo{person}{Guillaume Bagan}, \bibinfo{person}{Arnaud
  Durand}, {and} \bibinfo{person}{Etienne Grandjean}.}
  \bibinfo{year}{2007}\natexlab{}.
\newblock \showarticletitle{On acyclic conjunctive queries and constant delay
  enumeration}. In \bibinfo{booktitle}{{\em International Workshop on Computer
  Science Logic}}. Springer, \bibinfo{pages}{208--222}.
\newblock


\bibitem[\protect\citeauthoryear{Beeri, Fagin, Maier, and Yannakakis}{Beeri
  et~al\mbox{.}}{1983}]%
        {beeri1983desirability}
\bibfield{author}{\bibinfo{person}{Catriel Beeri}, \bibinfo{person}{Ronald
  Fagin}, \bibinfo{person}{David Maier}, {and} \bibinfo{person}{Mihalis
  Yannakakis}.} \bibinfo{year}{1983}\natexlab{}.
\newblock \showarticletitle{On the desirability of acyclic database schemes}.
\newblock \bibinfo{journal}{{\em Journal of the ACM (JACM)\/}}
  \bibinfo{volume}{30}, \bibinfo{number}{3} (\bibinfo{year}{1983}),
  \bibinfo{pages}{479--513}.
\newblock


\bibitem[\protect\citeauthoryear{Berkholz, Gerhardt, and Schweikardt}{Berkholz
  et~al\mbox{.}}{2020}]%
        {berkholz2020tutorial}
\bibfield{author}{\bibinfo{person}{Christoph Berkholz}, \bibinfo{person}{Fabian
  Gerhardt}, {and} \bibinfo{person}{Nicole Schweikardt}.}
  \bibinfo{year}{2020}\natexlab{}.
\newblock \showarticletitle{Constant delay enumeration for conjunctive queries:
  a tutorial}.
\newblock \bibinfo{journal}{{\em ACM SIGLOG News\/}} \bibinfo{volume}{7},
  \bibinfo{number}{1} (\bibinfo{year}{2020}), \bibinfo{pages}{4--33}.
\newblock


\bibitem[\protect\citeauthoryear{Berkholz, Keppeler, and Schweikardt}{Berkholz
  et~al\mbox{.}}{2017}]%
        {DBLP:conf/pods/BerkholzKS17}
\bibfield{author}{\bibinfo{person}{Christoph Berkholz}, \bibinfo{person}{Jens
  Keppeler}, {and} \bibinfo{person}{Nicole Schweikardt}.}
  \bibinfo{year}{2017}\natexlab{}.
\newblock \showarticletitle{Answering Conjunctive Queries under Updates}. In
  \bibinfo{booktitle}{{\em Proceedings of the 36th {ACM} {SIGMOD-SIGACT-SIGAI}
  Symposium on Principles of Database Systems, {PODS} 2017, Chicago, IL, USA,
  May 14-19, 2017}}. \bibinfo{pages}{303--318}.
\newblock


\bibitem[\protect\citeauthoryear{Berkholz, Keppeler, and Schweikardt}{Berkholz
  et~al\mbox{.}}{2018}]%
        {DBLP:conf/icdt/BerkholzKS18}
\bibfield{author}{\bibinfo{person}{Christoph Berkholz}, \bibinfo{person}{Jens
  Keppeler}, {and} \bibinfo{person}{Nicole Schweikardt}.}
  \bibinfo{year}{2018}\natexlab{}.
\newblock \showarticletitle{Answering UCQs under Updates and in the Presence of
  Integrity Constraints}. In \bibinfo{booktitle}{{\em 21st International
  Conference on Database Theory, {ICDT} 2018, March 26-29, 2018, Vienna,
  Austria}}. \bibinfo{pages}{8:1--8:19}.
\newblock


\bibitem[\protect\citeauthoryear{Brault-Baron}{Brault-Baron}{2013}]%
        {bb:thesis}
\bibfield{author}{\bibinfo{person}{Johann Brault-Baron}.}
  \bibinfo{year}{2013}\natexlab{}.
\newblock {\em \bibinfo{title}{De la pertinence de l’{\'e}num{\'e}ration:
  complexit{\'e} en logiques propositionnelle et du premier ordre}}.
\newblock \bibinfo{thesistype}{Ph.D. Dissertation}.
  \bibinfo{school}{Universit{\'e} de Caen}.
\newblock


\bibitem[\protect\citeauthoryear{Capelli and Strozecki}{Capelli and
  Strozecki}{2018}]%
        {CAPELLI2018}
\bibfield{author}{\bibinfo{person}{Florent Capelli} {and} \bibinfo{person}{Yann
  Strozecki}.} \bibinfo{year}{2018}\natexlab{}.
\newblock \showarticletitle{Incremental delay enumeration: Space and time}.
\newblock \bibinfo{journal}{{\em Discrete Applied Mathematics\/}}
  (\bibinfo{year}{2018}).
\newblock
\showDOI{%
\url{https://doi.org/10.1016/j.dam.2018.06.038}}


\bibitem[\protect\citeauthoryear{Carmeli and Kr{\"o}ll}{Carmeli and
  Kr{\"o}ll}{2019}]%
        {DBLP:conf/icdt/CarmeliK18}
\bibfield{author}{\bibinfo{person}{Nofar Carmeli} {and} \bibinfo{person}{Markus
  Kr{\"o}ll}.} \bibinfo{year}{2019}\natexlab{}.
\newblock \showarticletitle{Enumeration Complexity of Conjunctive Queries with
  Functional Dependencies}.
\newblock \bibinfo{journal}{{\em Theory of Computing Systems\/}}
  (\bibinfo{year}{2019}).
\newblock
\showISBNx{1433-0490}
\showDOI{%
\url{https://doi.org/10.1007/s00224-019-09937-9}}


\bibitem[\protect\citeauthoryear{Carmeli and Kr{\"{o}}ll}{Carmeli and
  Kr{\"{o}}ll}{2019}]%
        {DBLP:conf/pods/CarmeliK19}
\bibfield{author}{\bibinfo{person}{Nofar Carmeli} {and} \bibinfo{person}{Markus
  Kr{\"{o}}ll}.} \bibinfo{year}{2019}\natexlab{}.
\newblock \showarticletitle{On the Enumeration Complexity of Unions of
  Conjunctive Queries}. In \bibinfo{booktitle}{{\em Proceedings of the 38th
  {ACM} {SIGMOD-SIGACT-SIGAI} Symposium on Principles of Database Systems,
  {PODS} 2019, Amsterdam, The Netherlands, June 30 - July 5, 2019}},
  \bibfield{editor}{\bibinfo{person}{Dan Suciu}, \bibinfo{person}{Sebastian
  Skritek}, {and} \bibinfo{person}{Christoph Koch}} (Eds.).
  \bibinfo{publisher}{{ACM}}, \bibinfo{pages}{134--148}.
\newblock
\showDOI{%
\url{https://doi.org/10.1145/3294052.3319700}}


\bibitem[\protect\citeauthoryear{Chandra and Merlin}{Chandra and
  Merlin}{1977}]%
        {homomorphism/redundancy}
\bibfield{author}{\bibinfo{person}{Ashok~K. Chandra} {and}
  \bibinfo{person}{Philip~M. Merlin}.} \bibinfo{year}{1977}\natexlab{}.
\newblock \showarticletitle{Optimal Implementation of Conjunctive Queries in
  Relational Data Bases}. In \bibinfo{booktitle}{{\em Proceedings of the Ninth
  Annual ACM Symposium on Theory of Computing}} {\em (\bibinfo{series}{STOC
  ’77})}. \bibinfo{publisher}{Association for Computing Machinery},
  \bibinfo{address}{New York, NY, USA}, \bibinfo{pages}{77–90}.
\newblock
\showISBNx{9781450374095}
\showDOI{%
\url{https://doi.org/10.1145/800105.803397}}


\bibitem[\protect\citeauthoryear{Durand and Grandjean}{Durand and
  Grandjean}{2007}]%
        {Durand:constantdelay}
\bibfield{author}{\bibinfo{person}{Arnaud Durand} {and}
  \bibinfo{person}{Etienne Grandjean}.} \bibinfo{year}{2007}\natexlab{}.
\newblock \showarticletitle{First-order Queries on Structures of Bounded Degree
  Are Computable with Constant Delay}.
\newblock \bibinfo{journal}{{\em ACM Trans. Comput. Logic\/}}
  \bibinfo{volume}{8}, \bibinfo{number}{4}, Article \bibinfo{articleno}{21}
  (\bibinfo{date}{Aug.} \bibinfo{year}{2007}).
\newblock
\showISSN{1529-3785}
\showDOI{%
\url{https://doi.org/10.1145/1276920.1276923}}


\bibitem[\protect\citeauthoryear{Eisenbrand and Grandoni}{Eisenbrand and
  Grandoni}{2004}]%
        {eisenbrand2004complexity}
\bibfield{author}{\bibinfo{person}{Friedrich Eisenbrand} {and}
  \bibinfo{person}{Fabrizio Grandoni}.} \bibinfo{year}{2004}\natexlab{}.
\newblock \showarticletitle{On the complexity of fixed parameter clique and
  dominating set}.
\newblock \bibinfo{journal}{{\em Theoretical Computer Science\/}}
  \bibinfo{volume}{326}, \bibinfo{number}{1-3} (\bibinfo{year}{2004}),
  \bibinfo{pages}{57--67}.
\newblock


\bibitem[\protect\citeauthoryear{Florenzano, Riveros, Ugarte, Vansummeren, and
  Vrgoc}{Florenzano et~al\mbox{.}}{2018}]%
        {DBLP:conf/pods/FlorenzanoRUVV18}
\bibfield{author}{\bibinfo{person}{Fernando Florenzano},
  \bibinfo{person}{Cristian Riveros}, \bibinfo{person}{Mart{\'{\i}}n Ugarte},
  \bibinfo{person}{Stijn Vansummeren}, {and} \bibinfo{person}{Domagoj Vrgoc}.}
  \bibinfo{year}{2018}\natexlab{}.
\newblock \showarticletitle{Constant Delay Algorithms for Regular Document
  Spanners}. In \bibinfo{booktitle}{{\em {PODS}}}. \bibinfo{publisher}{{ACM}},
  \bibinfo{pages}{165--177}.
\newblock


\bibitem[\protect\citeauthoryear{Flum, Frick, and Grohe}{Flum
  et~al\mbox{.}}{2002}]%
        {DBLP:journals/jacm/FlumFG02}
\bibfield{author}{\bibinfo{person}{J{\"{o}}rg Flum}, \bibinfo{person}{Markus
  Frick}, {and} \bibinfo{person}{Martin Grohe}.}
  \bibinfo{year}{2002}\natexlab{}.
\newblock \showarticletitle{Query evaluation via tree-decompositions}.
\newblock \bibinfo{journal}{{\em J. {ACM}\/}} \bibinfo{volume}{49},
  \bibinfo{number}{6} (\bibinfo{year}{2002}), \bibinfo{pages}{716--752}.
\newblock


\bibitem[\protect\citeauthoryear{Grandjean}{Grandjean}{1996}]%
        {DBLP:journals/amai/Grandjean96}
\bibfield{author}{\bibinfo{person}{Etienne Grandjean}.}
  \bibinfo{year}{1996}\natexlab{}.
\newblock \showarticletitle{Sorting, Linear Time and the Satisfiability
  Problem}.
\newblock \bibinfo{journal}{{\em Ann. Math. Artif. Intell.\/}}
  \bibinfo{volume}{16} (\bibinfo{year}{1996}), \bibinfo{pages}{183--236}.
\newblock
\showDOI{%
\url{https://doi.org/10.1007/BF02127798}}


\bibitem[\protect\citeauthoryear{Idris, Ugarte, and Vansummeren}{Idris
  et~al\mbox{.}}{2017}]%
        {CDY}
\bibfield{author}{\bibinfo{person}{Muhammad Idris}, \bibinfo{person}{Martin
  Ugarte}, {and} \bibinfo{person}{Stijn Vansummeren}.}
  \bibinfo{year}{2017}\natexlab{}.
\newblock \showarticletitle{The Dynamic Yannakakis Algorithm: Compact and
  Efficient Query Processing Under Updates}. In \bibinfo{booktitle}{{\em
  Proceedings of the 2017 ACM International Conference on Management of Data}}
  {\em (\bibinfo{series}{SIGMOD '17})}. \bibinfo{publisher}{ACM},
  \bibinfo{address}{New York, NY, USA}, \bibinfo{pages}{1259--1274}.
\newblock
\showISBNx{978-1-4503-4197-4}
\showDOI{%
\url{https://doi.org/10.1145/3035918.3064027}}


\bibitem[\protect\citeauthoryear{Kazana}{Kazana}{2013}]%
        {kazana:thesis}
\bibfield{author}{\bibinfo{person}{Wojciech Kazana}.}
  \bibinfo{year}{2013}\natexlab{}.
\newblock {\em \bibinfo{title}{{Query evaluation with constant delay}}}.
\newblock Theses. \bibinfo{school}{{{\'E}cole normale sup{\'e}rieure de Cachan
  - ENS Cachan}}.
\newblock
\showURL{%
\url{https://tel.archives-ouvertes.fr/tel-00919786}}


\bibitem[\protect\citeauthoryear{Le~Gall}{Le~Gall}{2014}]%
        {MatrixMultiplication}
\bibfield{author}{\bibinfo{person}{Fran\c{c}ois Le~Gall}.}
  \bibinfo{year}{2014}\natexlab{}.
\newblock \showarticletitle{Powers of Tensors and Fast Matrix Multiplication}.
  In \bibinfo{booktitle}{{\em Proceedings of the 39th International Symposium
  on Symbolic and Algebraic Computation}} {\em (\bibinfo{series}{ISSAC '14})}.
  \bibinfo{publisher}{ACM}, \bibinfo{address}{New York, NY, USA},
  \bibinfo{pages}{296--303}.
\newblock
\showISBNx{978-1-4503-2501-1}
\showDOI{%
\url{https://doi.org/10.1145/2608628.2608664}}


\bibitem[\protect\citeauthoryear{Lincoln, Williams, and Williams}{Lincoln
  et~al\mbox{.}}{2018}]%
        {DBLP:conf/soda/LincolnWW18}
\bibfield{author}{\bibinfo{person}{Andrea Lincoln},
  \bibinfo{person}{Virginia~Vassilevska Williams}, {and}
  \bibinfo{person}{R.~Ryan Williams}.} \bibinfo{year}{2018}\natexlab{}.
\newblock \showarticletitle{Tight Hardness for Shortest Cycles and Paths in
  Sparse Graphs}. In \bibinfo{booktitle}{{\em Proceedings of the Twenty-Ninth
  Annual {ACM-SIAM} Symposium on Discrete Algorithms, {SODA} 2018, New Orleans,
  LA, USA, January 7-10, 2018}}. \bibinfo{pages}{1236--1252}.
\newblock


\bibitem[\protect\citeauthoryear{Moret and Shapiro}{Moret and Shapiro}{1991}]%
        {MoretShapiro}
\bibfield{author}{\bibinfo{person}{Bernard M.~E. Moret} {and}
  \bibinfo{person}{Henry~D. Shapiro}.} \bibinfo{year}{1991}\natexlab{}.
\newblock \bibinfo{booktitle}{{\em Algorithms from P to NP: Volume~1: Design \&
  Efficiency}}.
\newblock \bibinfo{publisher}{Benjamin-Cummings}.
\newblock
\showISBNx{0-8053-8008-6}


\bibitem[\protect\citeauthoryear{Niewerth and Segoufin}{Niewerth and
  Segoufin}{2018}]%
        {DBLP:conf/pods/NiewerthS18}
\bibfield{author}{\bibinfo{person}{Matthias Niewerth} {and}
  \bibinfo{person}{Luc Segoufin}.} \bibinfo{year}{2018}\natexlab{}.
\newblock \showarticletitle{Enumeration of {MSO} Queries on Strings with
  Constant Delay and Logarithmic Updates}. In \bibinfo{booktitle}{{\em
  Proceedings of the 37th {ACM} {SIGMOD-SIGACT-SIGAI} Symposium on Principles
  of Database Systems, Houston, TX, USA, June 10-15, 2018}},
  \bibfield{editor}{\bibinfo{person}{Jan~Van den Bussche} {and}
  \bibinfo{person}{Marcelo Arenas}} (Eds.). \bibinfo{publisher}{{ACM}},
  \bibinfo{pages}{179--191}.
\newblock
\showDOI{%
\url{https://doi.org/10.1145/3196959.3196961}}


\bibitem[\protect\citeauthoryear{Schweikardt, Segoufin, and Vigny}{Schweikardt
  et~al\mbox{.}}{2018}]%
        {DBLP:conf/pods/SchweikardtSV18}
\bibfield{author}{\bibinfo{person}{Nicole Schweikardt}, \bibinfo{person}{Luc
  Segoufin}, {and} \bibinfo{person}{Alexandre Vigny}.}
  \bibinfo{year}{2018}\natexlab{}.
\newblock \showarticletitle{Enumeration for {FO} Queries over Nowhere Dense
  Graphs}. In \bibinfo{booktitle}{{\em Proceedings of the 37th {ACM}
  {SIGMOD-SIGACT-SIGAI} Symposium on Principles of Database Systems, Houston,
  TX, USA, June 10-15, 2018}}, \bibfield{editor}{\bibinfo{person}{Jan~Van den
  Bussche} {and} \bibinfo{person}{Marcelo Arenas}} (Eds.).
  \bibinfo{publisher}{{ACM}}, \bibinfo{pages}{151--163}.
\newblock
\showDOI{%
\url{https://doi.org/10.1145/3196959.3196971}}


\bibitem[\protect\citeauthoryear{Segoufin}{Segoufin}{2015}]%
        {Segoufin:survey}
\bibfield{author}{\bibinfo{person}{Luc Segoufin}.}
  \bibinfo{year}{2015}\natexlab{}.
\newblock \showarticletitle{Constant Delay Enumeration for Conjunctive
  Queries}.
\newblock \bibinfo{journal}{{\em SIGMOD Rec.\/}} \bibinfo{volume}{44},
  \bibinfo{number}{1} (\bibinfo{date}{May} \bibinfo{year}{2015}),
  \bibinfo{pages}{10--17}.
\newblock
\showISSN{0163-5808}
\showDOI{%
\url{https://doi.org/10.1145/2783888.2783894}}


\bibitem[\protect\citeauthoryear{Segoufin and Vigny}{Segoufin and
  Vigny}{2017}]%
        {segoufin_et_al:LIPIcs:2017:7060}
\bibfield{author}{\bibinfo{person}{Luc Segoufin} {and}
  \bibinfo{person}{Alexandre Vigny}.} \bibinfo{year}{2017}\natexlab{}.
\newblock \showarticletitle{{Constant Delay Enumeration for FO Queries over
  Databases with Local Bounded Expansion}}. In \bibinfo{booktitle}{{\em ICDT
  2017}} {\em (\bibinfo{series}{Leibniz International Proceedings in
  Informatics (LIPIcs)})}, \bibfield{editor}{\bibinfo{person}{Michael Benedikt}
  {and} \bibinfo{person}{Giorgio Orsi}} (Eds.), Vol.~\bibinfo{volume}{68}.
  \bibinfo{publisher}{Schloss Dagstuhl--Leibniz-Zentrum fuer Informatik},
  \bibinfo{address}{Dagstuhl, Germany}, \bibinfo{pages}{20:1--20:16}.
\newblock
\showISBNx{978-3-95977-024-8}
\showISSN{1868-8969}


\bibitem[\protect\citeauthoryear{Strozecki}{Strozecki}{2010}]%
        {strozecki:thesis}
\bibfield{author}{\bibinfo{person}{Yann Strozecki}.}
  \bibinfo{year}{2010}\natexlab{}.
\newblock {\em \bibinfo{title}{Enumeration complexity and matroid
  decomposition}}.
\newblock \bibinfo{thesistype}{Ph.D. Dissertation}.
  \bibinfo{school}{Universit\'e Paris Diderot - Paris 7}.
\newblock


\bibitem[\protect\citeauthoryear{Yannakakis}{Yannakakis}{1981}]%
        {Yannakakis}
\bibfield{author}{\bibinfo{person}{Mihalis Yannakakis}.}
  \bibinfo{year}{1981}\natexlab{}.
\newblock \showarticletitle{Algorithms for Acyclic Database Schemes}. In
  \bibinfo{booktitle}{{\em Proceedings of the Seventh International Conference
  on Very Large Data Bases - Volume 7}} {\em (\bibinfo{series}{VLDB '81})}.
  \bibinfo{publisher}{VLDB Endowment}, \bibinfo{pages}{82--94}.
\newblock
\showURL{%
\url{http://dl.acm.org/citation.cfm?id=1286831.1286840}}


\end{thebibliography}

\appendix

\end{document}